\documentclass[runningheads]{llncs}
\usepackage{eurosym}
\usepackage{amssymb}

\usepackage{verbatim}
\usepackage{amsmath}
\usepackage{tikz}
\usepackage{mathrsfs}
\usepackage[margin=10pt,font=small,labelfont=bf]{caption}

\usepackage{graphicx}
\usepackage{dsfont}
\usepackage{enumerate}

\usepackage[title,toc,titletoc,header]{appendix}

\renewcommand{\phi}{\varphi}





\usepackage{tikz}
\usetikzlibrary{calc,matrix,positioning,fit, shapes, automata, arrows, patterns}


\tikzset{
  solid node/.style={circle,draw,inner sep=1.2,fill=black},
  empty node/.style={font=\tiny, circle, outer sep=0pt, inner sep=0pt},
  matrix node/.style={rectangle, inner sep=0pt, draw, minimum height=0.5cm, minimum width=0.8cm, outer sep=0pt},  
  noline/.style={edge from parent/.style={draw=none}},
line/.style={edge from parent/.style={draw}},
}

\usepackage{environ}
\makeatletter
\newdimen\royalignsep@
\def\royalign@preamble{%
   &\hfil
    \strut@
    \setboxz@h{\@lign$\m@th\displaystyle{##}$}%
    \ifmeasuring@\savefieldlength@\fi
    \set@field
    \tabskip\z@skip
   &\setboxz@h{\@lign$\m@th\displaystyle{{}##}$}%
    \ifmeasuring@\savefieldlength@\fi
    \set@field
    \hfil
    \tabskip\royalignsep@
}
\NewEnviron{royalign}[1]{%
  \royalignsep@=#1\let\align@preamble=\royalign@preamble
  \begin{align}\BODY\end{align}}
\NewEnviron{royalign*}[1]{%
  \royalignsep@=#1\let\align@preamble=\royalign@preamble
  \begin{align*}\BODY\end{align*}}
\makeatother

\begin{document}

\title{Stit Semantics for Epistemic Notions Based on Information Disclosure in Interactive Settings}

\author{Aldo Iv\'an Ram\'irez Abarca\inst{1}\and Jan Broersen\inst{2}}
\institute{Utrecht University, Utrecht 3512 JK, The Netherlands \email{a.i.ramirezabarca@uu.nl}\and Utrecht University, Utrecht 3512 JK, The Netherlands \email{J.M.broersen@uu.nl}}

\maketitle

\begin{abstract}

We characterize four types of agentive knowledge using a stit semantics over branching discrete-time structures. These are \emph{ex ante} knowledge, \emph{ex interim} knowledge, \emph{ex post} knowledge, and know-how. The first three are notions that arose from game-theoretical analyses on the stages of information disclosure across the decision making process, and the fourth has gained prominence both in logics of action and in deontic logic as a means to formalize ability. In recent years, logicians in AI have argued that any comprehensive study of responsibility attribution and blameworthiness should include proper treatment of these kinds of knowledge. This paper intends to clarify previous attempts to formalize them in stit logic and to propose alternative interpretations that in our opinion are more akin to the study of responsibility in the stit tradition. The logic we present uses an extension with knowledge operators of the Xstit language, and formulas are evaluated with respect to branching discrete-time models. We also present an axiomatic system for this logic, and address its soundness and completeness.

\end{abstract}

\section{Introduction}

For logicians of action and obligation, there is little debate as to the intuition that agentive responsibility has an important epistemic component. In interactive settings, what agents know before and after a choice of action needs to be taken into account when ascribing responsibility and/or culpability for an outcome. 
Aiming to model the different degrees of culpability according to judicial practice with a stit formalism, Broersen introduced in \cite{broersen2008complete} a logic of `knowingly doing'. Next, a comprehensive study of three kinds of game knowledge (\emph{ex ante, ex interim,} and \emph{ex post}) was provided by Lorini et al. in \cite{lorini2014logical} with the goal of formalizing both responsibility attribution and some attribution-emotions related to it --like guilt or blame. They also used stit for it, just as Horty \& Pacuit recently did in \cite{HortyPacuit2017}, where new epistemic operators are added to basic stit in order to model \emph{ex interim} knowledge and epistemic ability (or know-how). Even more recently, Horty used in \cite{hortyepistemic} the same novel epistemic operators to provide a logic of epistemic obligations, to which \cite{JANANDI} constitutes a reply. Additionally, other attempts to model individual and collective know-how in a stit-like fashion can be found in \cite{naumov2017together}, where the authors take inspiration from Coalition Logic and use transition systems to account for a given group of agents' ability to bring about an outcome by voting.  We mention this in the introduction rather than including it in a section for related work because the present paper has two main objectives, and they both stem from the context of the existing literature in epistemic stit: (1) we want to point out the areas for which the mentioned works overlap, clarifying their differences and shortcomings, and (2) we want to offer new stit formalizations for \emph{know-how}, \emph{ex ante, ex interim}, and \emph{ex post} knowledge, targeting components of them which we believe are essential in an analysis of responsibility. 

Within game theory and epistemic logic, there is a degree of agreement regarding characteristics of the four kinds of knowledge we have mentioned. In broad terms, these characteristics are the following. \textbf{\emph{Ex ante}} knowledge concerns the information that is available to the agents regardless of their choices of action at a given moment. It is commonly thought to be the knowledge that the agents have \emph{before} they choose any of their available actions and execute them. \textbf{\emph{Ex interim} knowledge} can be seen as the knowledge that is private to an agent after choosing an action but before knowing the concurrent choices of other agents. 
\textbf{\emph{Ex post} knowledge} concerns the information that is disclosed to the agents after everybody reveals their choices of action and executes them. \textbf{Know-how} concerns the epistemic ability of bringing about a particular outcome. 

The paper is structured as follows. Since the logic we favor is an extension of Xstit (\cite{xstit}, \cite{Xu2015}), in section 2 we deal with the examples that motivated our choice of syntax and semantics and introduce these two aspects of the logic. In section 3 we present the definitions for \emph{ex ante, ex interim,} \emph{ex post} knowledge, and know-how that we want to formalize and give their characterizations as formulas built with specific combinations of the operators in our language. 
We compare these new interpretations with the previous ones by dissecting the examples from the preceding section. In Section 4 we introduce an axiomatic system for the developed logic and mention its soundness and completeness results, after which we conclude.  

\section{An Example Involving Four Kinds of Game Knowledge}
\label{shit}

We intend to give a formal overview for settings of interactive decisions. Different agents will have different epistemic status and they can act \emph{concurrently}. This means that our models include the layouts of formal games with \emph{incomplete} and \emph{imperfect} information,\footnote{In \cite{aagotnes2015knowledge} \AA gotnes et al. write: ``In game theory, two different terms are traditionally used to indicate
lack of information: `incomplete' and `imperfect' information. Usually, the
former refers to uncertainties about the game structure and rules, while the
latter refers to uncertainties about the history, current state, etc. of the
specific \emph{play of the game.}''} as well as of concurrent game structures (CGS's) (see \cite{aagotnes2015knowledge}, \cite{boudou2018concurrent}) and epistemic transition systems (see \cite{naumov2017together}, \cite{naumov2017coalition}). 
Consider the following example, built as a variation of Horty's puzzles for epistemic ability (\cite{HortyPacuit2017}) and inspired by the film \emph{Mission: Impossible 6}. 

\begin{example}\label{ej1}
A bomb squad consisting of three members ($ethan$, $luther$, and $benji$) faces a complex bomb situation. Terrorists threaten to blow up a facility with two bombs ($L$ and $B$) remotely connected to each other. If the squad defuses one bomb before the other, then the latter is programmed to set off. The terrorists who planted the bombs start a countdown. If the countdown ends both bombs go off. Each bomb has a remote activation system with two main wires, a red one and a green one. Morevoer, each bomb has its own detonator, and these detonators include a fail-safe mechanism that makes it possible for the bombs to be disarmed. The squad figures out that there are \emph{three} ways to successfully defuse the two bombs: (1) If they activate the fail-safe mechanisms of \emph{both} detonators, the squad needs to cut both red wires simultaneously. 
(2) If they manage to activate the fail-safe mechanism for the detonator of bomb $L$ but not of bomb $B$, they need to cut the red wire of $L$ and the green one of $B$ simultaneously.
(3) The reverse situation of the above item in case they only manage to activate the fail-safe mechanism for the detonator of bomb $B$. In other words, cutting the red wire of a bomb without the previous activation of the fail-safe mechanism in its detonator makes it go off. Cutting the green wire \emph{with} previous activation of the fail-safe mechanism also makes it go off. If neither fail-safe mechanism is activated, both bombs go off. If any of these bombs goes off, the explosion is so powerful that it is impossible to ascertain which bomb went off or if both did. 

After the countdown starts, agent $ethan$ is commissioned with the task of retrieving the detonators. Agents $luther$ and $benji$ have to afterwards synchronize the cutting of wires in order to disarm bombs $L$ and $B$, respectively.  A malfunction in the squad's telecom gear causes for them to lose all communication with each other, so $luther$ and $benji$ do not know whether $ethan$ has retrieved one detonator, both of them, or none. Regardless, they know that they need to synchronize the cutting of both bombs' main wires \emph{just before} the countdown ends. 
We explore the following alternatives: \emph{a)} Unbeknownst to $luther$ and $benji$, $ethan$ succeeds in retrieving only the detonator for bomb $B$. They synchronize the cutting of the wires, and since it is statistically better for both $luther$ and $benji$ to cut the two red wires, they do so. Bomb $L$ goes off. \emph{b)} Unbeknownst to $benji$, $luther$ finds out what $ethan$ did. However, $luther$ is actually an undercover associate of the terrorists, so he decides to go on with the cutting of the red wire so that bomb $L$ goes off.
\end{example}


It is convenient to have visual representations of our examples, and stit logic allows us to draw diagrams of them when seen as models. In order to be precise about such models, we introduce the syntax and semantics of the stit logic that we will use before addressing the diagrams. As mentioned before, it is an extension of Xstit logic.\footnote{Xstit was developed by Broersen in \cite{broersen2008complete} on the conceptual assumption that the effects of performing an action are not instantaneous. Rather, an action that is chosen at a given moment will bring about effects only in the next.} 

\begin{definition}[Syntax KXstit]
\label{syntax ep stit}
Given a finite set $Ags$ of agent names and a countable set of propositions $P$ such that $p \in P$ and $\alpha \in Ags$, the grammar for the formal language $\mathcal L_{\textsf{KX}}$ is given by:
\footnotesize
\[ \begin{array}{lcl}
\phi :=  p \mid \neg \phi \mid \phi \wedge \psi \mid \Box \phi \mid X\phi \mid Y\phi \mid [\alpha] \phi \mid [Ags]\phi \mid K_\alpha \phi  
\end{array} \]

\end{definition}
\normalsize
$\Box\varphi$ is meant to express the `historical necessity' of $\varphi$
($\Diamond \varphi$ abbreviates $\neg \Box \neg \varphi$). $X$ is the `next moment' operator and $Y$ is the `last moment' operator. $[\alpha] \varphi$ stands for `$\alpha$ sees to it that $\varphi$'. $[Ags]\varphi$ stands for `the grand coalition $Ags$ sees to it that $\varphi$'. 
$K_\alpha$ is the epistemic operator for $\alpha$. Observe that $\mathcal L_{\textsf{KX}}$ is built with the \emph{instantaneous} action operators $[\alpha]$ and $[Ags]$. We have done so because this expressive language simplifies the axiomatization process for our logic. However, we restrict our treatment of the four kinds of agentive knowledge to `actions that take effect in the next moment', which are characterized by formulas of the form  $[\alpha]X\phi$ and $[Ags]X\phi$. Therefore, we work with a fragment of $\mathcal L_{\textsf{KX}}$. In what follows, we abbreviate the combination $[\alpha]X$ by $[\alpha]^X$ and the combination $[Ags]X$ by $[Ags]^X$ (see \cite{Xu2015} for a similar approach of the matter).   

As for the semantics, the structures with respect to which we evaluate the formulas of $\mathcal L_{\textsf{KX}}$  are based on \emph{epistemic branching discrete-time frames}. 

\begin{definition}[Epistemic branching discrete-time (BDT) frames]
\label{frames}
A tuple of the form $\langle T,\sqsubset, \mathbf{\mathbf{Choice}}, \{\sim_\alpha\}_{\alpha\in Ags}\rangle$ is called an \emph{epistemic branching discrete-time frame} iff
\begin{itemize}
    \item $T$ is a non-empty set of \textnormal{moments} and $\sqsubset$ is a strict partial ordering on $T$ satisfying `no backward branching'. Each maximal $\sqsubset$-chain is called a $\textnormal{history}$, which represents a way in which time  evolves. $H$ denotes the set of all histories, and for each $m\in T$, $H_m:=\{h \in H ;m\in h\}$. Tuples $\langle m,h \rangle$ are called \emph{situations} iff $m \in T$, $h \in H$, and $m\in h$. 
    We call the frames `discrete-time' because $(T,\sqsubset)$ must meet these requirements: \emph{a)} For every $m\in T$ and $h\in H_m$, there exists a unique moment $m^{+h}$ such that $m\sqsubset m^{+h}$ and $m^{+h}\sqsubseteq m'$ for every $m'\in h$ such that $m\sqsubset m'$.
\emph{b)} For every $m\in T$ and $h\in H_m$, there exists a unique moment $m^{-h}$ such that $m^{-h}\sqsubset m$ and $m'\sqsubseteq m^{-h}$ for every $m'\in h$ such that $m'\sqsubset m$.\footnote{Observe that these definitions, coupled with the fact that histories are linearly ordered, imply that for every $m\in T$ and $h\in H_m$, $\left(m^{-h} \right)^{+h}=m$ and $\left(m^{+h} \right)^{-h}=m$.}
  
    \item $\mathbf{\mathbf{Choice}}$ is a function that maps each $\alpha$ and $m$ to a partition $\mathbf{\mathbf{Choice}}^m_\alpha$ of $H_m$. The cells of these partitions represent the actions that are available to each agent at moment $m$. 
    For $m\in T$ and $h\in H_m$ we denote by $\mathbf{\mathbf{Choice}}_{Ags}^m(h)$ the set $\bigcap_{\alpha \in Ags}  \mathbf{\mathbf{Choice}}_\alpha^m(h)$, and we take $\mathbf{\mathbf{Choice}}_{Ags}^m:=\{\mathbf{\mathbf{Choice}}_{Ags}^m(h);h\in H_m\}$, which is the partition of actions available to the grand coalition. $\mathbf{\mathbf{Choice}}$ must satisfy the following constraints. $(\mathtt{NC})$ or `no choice between undivided histories': for all $h, h'\in H_m$, if $m'\in h\cap h'$ for some $m' \sqsupset m$, then $h\in L$ iff $h'\in L$ for every $L\in \mathbf{\mathbf{Choice}}^m_\alpha$. 
$(\mathtt{IA})$ or `independence of agency': a function $s$ on $Ags$ is called a \emph{selection function} at $m$ if it assigns to each $\alpha$ a member of $\mathbf{\mathbf{Choice}}^m_\alpha$. If we denote by $\mathbf{Select}^m$ the set of all selection functions at $m$, then we have that for every $m\in T$ and $s\in\mathbf{Select}^m$, $\bigcap_{\alpha \in Ags} s(\alpha)\neq \emptyset$ (see \cite{belnap01facing} for a discussion of the property).

 \item For each $\alpha\in Ags$, $\sim_\alpha$ is an equivalence relation on the set of \emph{situations}, meant to express the epistemic indistinguishability relation for $\alpha$. At this point, the only condition we impose on these relations is $(\mathtt{NoF})$ or `no forget condition': For every $m\in T$, $h\in H_m$, and $\alpha\in Ags$, if $\langle m^{+h}, h\rangle \sim_\alpha\langle m_*,h_*\rangle$, then $\langle m, h\rangle \sim_\alpha\langle m_*^-,h_*\rangle$. 
\end{itemize}
\end{definition}

\begin{definition}
\label{models KCSTIT}
An epistemic BDT model $\mathcal {M}$ consists of the tuple that results from adding a valuation function $\mathcal{V}$ to an epistemic BDT frame, where $ \mathcal{V}: P\to 
    2^{T \times H}$ assigns to each atomic proposition a set of moment-history pairs.
    Relative to a model $\mathcal{M}$, the semantics for the formulas of $\mathcal {L}_{\textsf{KX}}$ is defined recursively by the following truth conditions, evaluated at a situation $\langle m,h \rangle$:
    \scriptsize
\[ \begin{array}{lll}
\mathcal{M},\langle m,h \rangle \models p & \mbox{iff} & \langle m,h \rangle \in \mathcal{V}(p) \\

\mathcal{M},\langle m,h \rangle \models \neg \phi & \mbox{iff} & \mathcal{M},\langle m,h \rangle \not\models \phi \\

\mathcal{M},\langle m,h \rangle \models \phi \wedge \psi & \mbox{iff} & \mathcal{M},\langle m,h \rangle \models \phi \mbox{ and } \mathcal{M},\langle m,h \rangle \models \psi \\

\mathcal{M},\langle m,h \rangle \models \Box \phi &
\mbox{iff} & \forall h'\in H_m, \mathcal{M},\langle m,h' \rangle \models \phi \\

\mathcal{M},\langle m,h \rangle \models  X \varphi &
\mbox{iff}& \mathcal{M},\langle m ^{+h},h \rangle \models \varphi \\

\mathcal{M},\langle m,h \rangle \models Y \varphi &
\mbox{iff}& \mathcal{M},\langle m ^{-h},h \rangle \models \varphi \\

\mathcal{M}, \langle m,h \rangle \models [\alpha]
\varphi &\mbox{iff}& \forall h'\in \mathbf{\mathbf{Choice}}^m_{\alpha}(h),  \mathcal{M},\langle m, h'\rangle \models \varphi\\

\mathcal{M}, \langle m,h \rangle \models [Ags]
\varphi &\mbox{iff}& \forall h'\in \mathbf{\mathbf{Choice}}^m_{Ags}(h),  \mathcal{M},\langle m, h'\rangle \models \varphi\\

\mathcal{M},\langle m,h \rangle \models K_{\alpha} \phi &
    \mbox{iff} & \forall \langle m',h'\rangle \mbox{ s.t. }  \langle m,h \rangle \sim_{\alpha}\langle m',h' \rangle,\\&&  \mathcal{M},\langle m',h' \rangle \models \phi.\\



\end{array} \]
\normalsize
Satisfiability, validity on a frame, and general validity are defined as usual. We write $|\phi|^m$ to refer to the set $\{h\in H_m ;\mathcal{M},\langle m, h\rangle \models \phi\}$.
\end{definition}
 
With the definitions provided in Definition \ref{models KCSTIT} we can analyze the cases in Example \ref{ej1} as epistemic BDT models. A diagram for Example \ref{ej1} a) is included in Figure \ref{fig1}.

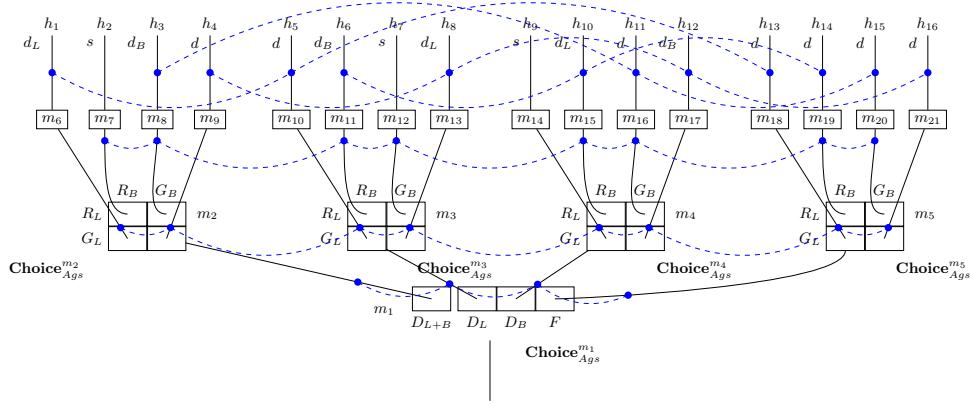
\begin{figure}[htb!]
\begin{minipage}[c]{.9\linewidth}
\centering
\resizebox{370 pt}{!}{%
\begin{tikzpicture}[level distance=2cm,
level 1/.style={sibling distance=1cm},
level 2/.style={sibling distance=5cm},
level 3/.style={sibling distance=1.1cm}
]

\node {} [grow=up]
	child{node [matrix, matrix of nodes, ampersand replacement=\&, label=left:$m_1$, label=below right:$\mathbf{Choice}_{Ags}^{m_1}$] (matrixi) {
	\node[matrix node, label=below:$D_{L+B}$] (m11) {}; \& \node[matrix node, label=below:$D_L$] (m12) {};  \& \node[matrix node, label=below:$D_{B}$] (m13) {};  \& \node[matrix node, label=below:$F$] (m14) {}; \\ }  
	        child[noline]{node [matrix, matrix of nodes, ampersand replacement=\&, label=right:$m_5$, label=below right:$\mathbf{Choice}_{Ags}^{m_5}$]{\node[matrix node, label=above:$R_B$, label=left:$R_L$] (m5_1) {};  \& \node[matrix node, label=above:$G_B$] (m5_2) {} ;\\ \node[matrix node, label=left:$G_L$] (m5_3) {}; \& \node[matrix node] (m5_4) {};\\}
		    child{node [draw,rectangle] (h_12) {$m_{21}$} child {node (m_12) {$h_{16}$}}}
			child{node [draw,rectangle] (h_11) {$m_{20}$}
				child {node (m_11) {$h_{15}$}}}
			child{node [draw,rectangle] (h_10) {$m_{19}$}
				child {node (m_10) {$h_{14}$}}}
            child{node [draw,rectangle] (h_9) {$m_{18}$} child {node (m_9) {$h_{13}$}}}
			}
	      child[noline]{node [matrix, matrix of nodes, ampersand replacement=\&, label=right:$m_4$, label=below right:$\mathbf{Choice}_{Ags}^{m_4}$]{\node[matrix node, label=above:$R_B$, label=left:$R_L$] (m41) {};  \& \node[matrix node, label=above:$G_B$] (m42) {} ;\\ \node[matrix node, label=left:$G_L$] (m43) {}; \& \node[matrix node] (m44) {};\\}
		    child{node [draw,rectangle] (h12) {$m_{17}$} child {node (m-12) {$h_{12}$}}}
			child{node [draw,rectangle] (h11) {$m_{16}$}
				child {node (m-11) {$h_{11}$}}}
			child{node [draw,rectangle] (h10) {$m_{15}$}
				child {node (m-10) {$h_{10}$}}}
            child{node [draw,rectangle] (h9) {$m_{14}$} child {node (m-9) {$h_{9}$}}}
			}
	  	child[noline]{node [matrix, matrix of nodes, ampersand replacement=\&, label=right:$m_3$, label=below right:$\mathbf{Choice}_{Ags}^{m_3}$]{\node[matrix node, label=above:$R_B$, label=left:$R_L$] (m31) {};  \& \node[matrix node, label=above:$G_B$] (m32) {} ;\\ \node[matrix node, label=left:$G_L$] (m33) {}; \& \node[matrix node] (m34) {};\\}
		    child{node [draw,rectangle] (h8) {$m_{13}$}child {node (m-8) {$h_{8}$}}}
			child{node [draw,rectangle] (h7) {$m_{12}$}
				child {node (m-7) {$h_{7}$}}}
			child{node [draw,rectangle] (h6) {$m_{11}$}
				child {node (m-6) {$h_{6}$}}}
            child{node [draw,rectangle] (h5)  {$m_{10}$}child {node (m-5) {$h_{5}$}}}
			}
		child[noline]{node [matrix, matrix of nodes, ampersand replacement=\&, label=right:$m_2$, label=below left:$\mathbf{Choice}_{Ags}^{m_2}$]{\node[matrix node, label=above:$R_B$, label=left:$R_L$] (m21) {};  \& \node[matrix node, label=above:$G_B$] (m22) {} ;\\ \node[matrix node, label=left:$G_L$] (m23) {}; \& \node[matrix node] (m24) {};\\}
		    child{node [draw,rectangle] (h4) {$m_9$}child {node (m-4) {$h_{4}$}}}
			child{node [draw,rectangle] (h3) {$m_8$}
				child {node (m-3) {$h_{3}$}}}
			child{node [draw,rectangle] (h2) {$m_7$}
				child {node (m-2) {$h_{2}$}}}
            child{node [draw,rectangle] (h1) {$m_6$} child {node (m-1) {$h_{1}$}}}
			}	
		};

\draw (m11.center) -- (m24) node[pos=.3,circle,fill=blue, scale=.5] (li1){};
\draw (m12.center) -- (m33) node[pos=.3,circle,fill=blue, scale=.5] (li2){};
\draw (m13.center) -- (m43) node[pos=.3,circle,fill=blue, scale=.5] (li3){};

\draw (m14.center) .. controls +(right:.5cm) and +(down:1cm) .. (m5_3) node[pos=.3,circle,fill=blue, scale=.5] (li4){};

\draw (m21.center) .. controls +(left:.5cm) and +(down:1cm) .. (h2) node[pos=.9,circle,fill=blue, scale=.5] (lh2) {}  node [pos=.5,draw=none] (lh2c) {};
\draw (m23.center) -- (h1) node [pos=.9,draw=none] (lh1) {} node [pos=.1,circle,fill=blue, scale=.5] (lh1b) {} node [pos=.5,draw=none] (lh1c) {};

\draw (m22.center) .. controls +(left:.5cm) and +(down:1cm) .. (h3) node [pos=.9,circle,fill=blue, scale=.5](lh3) {} node [pos=.1,draw=none] (lh3b) {} node [pos=.5,draw=none] (lh3c) {};
\draw (m24.center) -- (h4) node [pos=.9,draw=none] (lh4) {} node [pos=.1,circle,fill=blue, scale=.5] (lh4b) {} node [pos=.5,draw=none] (lh4c) {};
\draw (m31.center) .. controls +(left:.5cm) and +(down:1cm) .. (h6) node [pos=.9,circle,fill=blue, scale=.5] (lh6) {} node [pos=.1,draw=none] (lh6b) {} node [pos=.5,draw=none] (lh6c) {};

\draw (m33.center) -- (h5) node [pos=.9,draw=none] (lh5) {} node [pos=.1,circle,fill=blue, scale=.5] (lh5b) {} node [pos=.5,draw=none] (lh5c) {};

\draw (m32.center) .. controls +(left:.5cm) and +(down:1cm) .. (h7) node [pos=.9,circle,fill=blue, scale=.5](lh7) {} node [pos=.1,draw=none] (lh7b) {} node [pos=.5,draw=none] (lh7c) {};
\draw (m34.center) -- (h8) node [pos=.9,draw=none] (lh8) {} node [pos=.1,circle,fill=blue, scale=.5] (lh8b) {} node [pos=.5,draw=none] (lh8c) {};

\draw (m41.center) .. controls +(left:.5cm) and +(down:1cm) .. (h10)node [pos=.9,circle,fill=blue, scale=.5](lh10) {} node [pos=.1,draw=none] (lh10b) {} node [pos=.5,draw=none] (lh10c) {};
\draw (m43.center) -- (h9) node [pos=.9,draw=none] (lh9) {} node [pos=.1,circle,fill=blue, scale=.5] (lh9b) {} node [pos=.5,draw=none] (lh9c) {};

\draw (m42.center) .. controls +(left:.5cm) and +(down:1cm) .. (h11) node [pos=.9,circle,fill=blue, scale=.5 ](lh11) {} node[pos=.1,draw=none] (lh11b) {} node [pos=.5,draw=none] (lh11c) {};
\draw (m44.center) -- (h12) node [pos=.9,draw=none] (lh12) {} node [pos=.1,circle,fill=blue, scale=.5] (lh12b) {} node [pos=.5,draw=none] (lh12c) {};

\draw (m5_1.center) .. controls +(left:.5cm) and +(down:1cm) .. (h_10)node [pos=.9,circle,fill=blue, scale=.5](lh_10) {} node [pos=.1,draw=none] (lh_10b) {} node [pos=.5,draw=none] (lh_10c) {};
\draw (m5_3.center) -- (h_9) node [pos=.9,draw=none] (lh_9) {} node [pos=.1,circle,fill=blue, scale=.5] (lh_9b) {} node [pos=.5,draw=none] (lh_9c) {};

\draw (m5_2.center) .. controls +(left:.5cm) and +(down:1cm) .. (h_11) node [pos=.9,circle,fill=blue, scale=.5 ](lh_11) {} node[pos=.1,draw=none] (lh_11b) {} node [pos=.5,draw=none] (lh_11c) {};
\draw (m5_4.center) -- (h_12) node [pos=.9,draw=none] (lh_12) {} node [pos=.1,circle,fill=blue, scale=.5] (lh_12b) {} node [pos=.5,draw=none] (lh_12c) {};

\draw[dashed,blue, bend right] (lh2.center) to (lh3.center) {};
\draw[dashed,blue, bend right] (lh3.center) to (lh6.center) {};
\draw[dashed,blue, bend right] (lh6.center) to (lh7.center) {};
\draw[dashed,blue, bend right] (lh7.center) to (lh10.center) {};
\draw[dashed,blue, bend right] (lh10.center) to (lh11.center) {};
\draw[dashed,blue, bend right] (lh11.center) to (lh_10.center) {};
\draw[dashed,blue, bend right] (lh_10.center) to (lh_11.center) {};

\draw[dashed,blue, bend right] (lh1b.center) to (lh4b.center) {};
\draw[dashed,blue, bend right] (lh4b.center) to (lh5b.center) {};
\draw[dashed,blue, bend right] (lh5b.center) to (lh8b.center) {};
\draw[dashed,blue, bend right] (lh8b.center) to (lh9b.center) {};
\draw[dashed,blue, bend right] (lh9b.center) to (lh12b.center) {};
\draw[dashed,blue, bend right] (lh12b.center) to (lh_9b.center) {};
\draw[dashed,blue, bend right] (lh_9b.center) to (lh_12b.center) {};

\draw (h1) -- (m-1) node [pos=.9,draw=none, label=left:\footnotesize$ d_L$] {} node [pos=.5,circle,fill=blue, scale=.5] (m3h1) {};

\draw (h2) -- (m-2) node [pos=.9,draw=none, label=left:\footnotesize$s$] {} ;

\draw (h3) -- (m-3)node [pos=.9,draw=none, label=left:\footnotesize$d_B$] {} node [pos=.5,circle,fill=blue, scale=.5] (m3h3){};

\draw (h4) -- (m-4)node [pos=.9,draw=none, label=left:\footnotesize$d$] {} node [pos=.5,circle,fill=blue, scale=.5] (m3h4){};

\draw (h5) -- (m-5)node [pos=.9,draw=none, label=left:\footnotesize$d$] {} node [pos=.5,circle,fill=blue, scale=.5] (m3h5){};

\draw (h6) -- (m-6)node [pos=.9,draw=none, label=left:\footnotesize$d_B$] {} node [pos=.5,circle,fill=blue, scale=.5] (m3h6){};

\draw (h7) -- (m-7)node [pos=.9,draw=none, label=left:\footnotesize$s$] {};
\draw (h8) -- (m-8)node [pos=.9,draw=none, label=left:\footnotesize$d_L$] {} node [pos=.5,circle,fill=blue, scale=.5] (m3h8){};
\draw (h9) -- (m-9)node [pos=.9,draw=none, label=left:\footnotesize$s$] {};
\draw (h10) -- (m-10)node [pos=.9,draw=none, label=left:\footnotesize$d_L$] {} node [pos=.5,circle,fill=blue, scale=.5] (m3h10){};
\draw (h11) -- (m-11)node [pos=.9,draw=none, label=left:\footnotesize$d$] {} node [pos=.5,circle,fill=blue, scale=.5] (m3h11) {};
\draw (h12) -- (m-12)node [pos=.9,draw=none, label=left:\footnotesize$d_B$] {} node [pos=.5,circle,fill=blue, scale=.5] (m3h12) {};

\draw (h_9) -- (m_9)node [pos=.9,draw=none, label=left:\footnotesize$d$] {}node [pos=.5,circle,fill=blue, scale=.5] (m3h_9){};
\draw (h_10) -- (m_10)node [pos=.9,draw=none, label=left:\footnotesize$d$] {} node [pos=.5,circle,fill=blue, scale=.5] (m3h_10){};
\draw (h_11) -- (m_11)node [pos=.9,draw=none, label=left:\footnotesize$d$] {} node [pos=.5,circle,fill=blue, scale=.5] (m3h_11) {};
\draw (h_12) -- (m_12)node [pos=.9,draw=none, label=left:\footnotesize$d$] {} node [pos=.5,circle,fill=blue, scale=.5] (m3h_12) {};

\draw[dashed,blue, bend right] (li1.center) to (li2.center) {};
\draw[dashed,blue, bend right] (li2.center) to (li3.center) {};
\draw[dashed,blue, bend right] (li3.center) to (li4.center) {};

\draw[dashed,blue, bend right] (m3h1.center) to (m3h5.center) {};

\draw[dashed,blue, bend right] (m3h6.center) to (m3h10.center) {};
\draw[dashed,blue, bend left] (m3h10.center) to (m3h_10.center) {};

\draw[dashed,blue, bend left] (m3h3.center) to (m3h11.center) {};

\draw[dashed,blue, bend right] (m3h11.center) to (m3h_11.center) {};

\draw[dashed,blue, bend right] (m3h4.center) to (m3h8.center) {};
\draw[dashed,blue, bend left] (m3h8.center) to (m3h12.center) {};
\draw[dashed,blue, bend right] (m3h12.center) to (m3h_12.center) {};

\draw[dashed,blue, bend left] (m3h5.center) to (m3h_9.center) {};

\end{tikzpicture}
}
\captionof{figure}{Example \ref{ej1} a) with epistemic status of $luther$}
\label{fig1}
\end{minipage}
\end{figure}

 The diagram represents the different possibilities in which time may evolve from the point the bomb squad sets out to defuse the bombs onward. Each history stands for one of these relevant possibilities, according to the actions taken by the agents. We take $d_L$ and $d_B$ to denote the propositions  `bomb L detonates' and `bomb B detonates', respectively, and we abbreviate $d_L\land d_B$ with $d$. We take $s$ to denote the proposition `the bombs are defused'. These are true or false depending on the moment/history pair of evaluation. For instance, at situation $\langle m_2, h_4\rangle$ the bombs have not detonated ($\mathcal{M},\langle m_2,h_4 \rangle \models \lnot d_L\land \lnot d_B$), but in the next moment they have ($\mathcal{M},\langle m_9,h_4 \rangle \models d$). 
For clarity, we have labeled the actions available to the agents within the choice partitions of each moment. Label $D_{L+B}$ stands for  `activating the fail-safe mechanism of both detonators' (similarly for $D_{L}$ and $D_{B}$), $F$ stands for `failing to secure a detonator', $R_L$ ($R_B$) stands for `cutting the red wire of bomb $L$ $(B)$', and $G_L$ ($G_B$) stands for `cutting the green wire of bomb $L$  $(B)$'. In epistemic BDT models all agents in $Ags$ get to choose from their `available' actions at \emph{every} moment/history pair. In our example, $Ags$ is made up of $ethan$,  $luther$ and $benji$. However, $luther$ and $benji$ cannot choose anything at moment $m_1$, so we take it that their available actions both lie within the trivial partition, meaning that $\mathbf{\mathbf{Choice}}_{luther}^{m_1}= \mathbf{\mathbf{Choice}}_{benji}^{m_1}=\{H_{m_1}\}$. Similarly, we have that $\mathbf{Choice}_{ethan}^{m_i}=\{H_{m_i}\}$ for $m_2$, $m_3$, $m_4$, and $m_5$.

In both cases a) and b) of Example \ref{ej1}, at moment $m_1$ $ethan$ `chooses' an action and executes it with effects in the next moment. The set of `next moments' for $m_1$ comprises  $m_2$, $m_3$, $m_4$, and $m_5$. The histories running through each of these are partitioned according to the choices of action available to $luther$ and $benji$. The frame condition \emph{independence of agency} makes the partitions-layout look like a game in normal form, modelling concurrent choice of action. Observe that although this example presupposes that there is a single next state per action profile, this need not be the case for the general models.\footnote{Models that satisfy the condition that for every $m\in T$ and $h\in H_m$, if $h'\in \mathbf{Choice}_{Ags}^m(h)$ then $m^{+h}=m^{+h'}$ are called \emph{deterministic} (see \cite{xstit}, \cite{van2017doing}).} 

In both cases a) and b) of Example \ref{ej1}, at moments  $m_2$, $m_3$, $m_4$, and $m_5$ the available actions for $luther$ are either $R_L$ or $G_L$, and the available actions for $benji$ are either $R_B$ or $G_B$. A given outcome will ensue according to which action profile is chosen by $Ags$ --at these moments, only $luther$ and $benji$'s choices are relevant. For example, let us suppose that at moment $m_4$, $benji$ cuts the red wire (he chooses $R_B$), and $luther$ chooses to cut the green one (he chooses $G_L$). This means that we constrain $H_{m_4}$ to $\{h_9\}$, where the bombs are defused. In this case, the semantics in definition \ref{models KCSTIT} implies that $\mathcal{M},\langle m_4,h_9 \rangle \models [Ags]^X s$. As examples of formulas involving traditional Xstit operators, we have that $\mathcal{M},\langle m_1,h_2 \rangle \models X(\lnot d_L\land\lnot d_B)$, that $\mathcal{M},\langle m_1,h_{10} \rangle \models X[luther]^X d_L$, that $\mathcal{M},\langle m_4,h_{11} \rangle \not\models Y[Ags]^X d$, and that $\mathcal{M},\langle m_3,h_{7} \rangle \models Y\Diamond X[luther]^X d_L$. 

In this paper we are concerned with the epistemic structure of the agents and with what it says about \emph{(a)} their knowledge through the different stages of choice of action, and \emph{(b)} what they are able to knowingly do. The epistemic structure is given by the indistinguishability relations. In Figure 1 we represent $luther$'s indistinguishability relation in Example 1 a) with blue dashed lines. We focus our study on the epistemic structures of $luther$ and $benji$, since it is only because of them that cases a) and b) of Example \ref{ej1} are different. Although $ethan$ is also endowed with an indistinguishability relation over the set of situations in our BDT structure, we omit its analysis.                           
The explanation is the following.  At moment $m_1$ $luther$ (and $benji$, for that matter) does not know what $ethan$ does. At moments $m_2, m_3$, $m_4$, and $m_5$, we observe that $luther$ is able to distinguish between either cutting the red or the green wire of his bomb. If we take the proposition $r_L$ to mean `the red cable has been cut in bomb $L$', then we have that $\mathcal{M},\langle m_i,h \rangle \models  K_{luther}[luther]^X r_L \lor  K_{luther}[luther]^X \lnot r_L$ for $i\in \{2,3,4,5\}$ and $h\in H_{m_i}$.\footnote{This is related to a property that epistemic game theorists call ``knowledge of one's own action'', a feature which we will address in section 3, when we compare the different interpretations of knowledge and their implications. Although it holds in this example, we do not enforce it in all BDT frames.} So \emph{in this example} we assume that $luther$ (and $benji$) knowingly performs his available actions in the sense of knowingly performing the actions labeled by $R_L$ and $G_L$. However --and this is essential in our treatment of knowledge-- this is not mandatory for all our frames. Observe that our example still accounts for the possibility that agents bring about certain outcomes without knowingly doing so. In Example \ref{ej1} a), we have that even if $\mathcal{M},\langle m_4,h_{10} \rangle \models [luther]^X d_L$, it is the case that $\mathcal{M},\langle m_4,h_{10} \rangle \models \square \lnot K_{[luther]}[luther]^X d_L$ (there is no way in which $luther$ knowingly sets a bomb off). At moments $m_6$--$m_{21}$, although the fate of the bombs has been decided, agents may still be uncertain about the exact cause of the explosion in the cases where $luther$ and $benji$ have not both cut the green wire of their respective bomb. We use the epistemic structures of the agents at these moments in order to illustrate instances of \emph{ex post} knowledge. Observe, for example, that $\mathcal{M},\langle m_{11},h_6 \rangle \models \lnot K_{luther}Y [benji]^X d_B$.

\section{\emph{Ex Ante}, \emph{Ex interim}, \emph{Ex Post}, and Know-how}

Recent trends in modelling responsibility by way of `knowingly doing' and epistemic ability (`know-how') base these two notions on the differential knowledge across the stages of decision making. In this section we describe diverse interpretations that authors have given in the past to the four kinds of knowledge, and we compare them with our own versions. As mentioned in the introduction, we have two goals in mind. One is to clarify overlapping intuitions for the work done in epistemic stit and address the shortcomings of previous analyses. The other is to present new formal characterizations that we believe are more akin to modelling responsibility with stit. 

As we see it, we are modelling agents' \emph{uncertainty} in strategic interaction. The branching-time theory and language of action that lies at the heart of stit allows us to do this neatly, being flexible enough to address different angles of agentive uncertainty. For this work we discern the following levels of uncertainty, in clear correlation with the four kinds of knowledge: \emph{(a)} \emph{Uncertainty about previous actions}.
   \emph{(b)} \emph{Uncertainty about the nature and effects of one's own actions}.
    \emph{(c)} \emph{Uncertainty about other agents' actions}.
    \emph{(d)} \emph{Uncertainty about the effects of joint actions.}
The essence of our criticism to previous proposals also underlies what we strive for the most in \emph{our} semantics:  \emph{flexibility}. We want for the epistemic stit models to be flexible enough to address both coarse- and fine-grained kinds of uncertainty in interactive settings.  

\textbf{\emph{Ex ante} knowledge}. It is commonly thought of as the kind of knowledge that an agent has \emph{regardless} of its choice of action at a given moment (see \cite{lorini2014logical}, \cite{aumann2008rational}). Previous renditions of the concept in epistemic stit all try to model this quality, but from somewhat different viewpoints. To illustrate the knowledge that \emph{we} intend to formalize, consider our Example \ref{ej1}. In case a), at situation $\langle m_4, h_{10}\rangle$ --which we have taken as an actual situation-- neither $luther$ nor $benji$ has \emph{ex ante} knowledge that $ethan$ activated the fail-safe mechanism in the detonator for bomb $B$. In case b), on the contrary, $luther$ does know it \emph{ex ante}. Therefore, we favor the view that if an agent has certainty about previous actions, it is easier for that agent to discern things $\emph{ex ante}$. 

 Lorini et al. present in \cite{lorini2014logical} an epistemic stit logic with three operators for \emph{ex ante, ex interim,} and \emph{ex post} knowledge: $K_\alpha^{\bullet \circ \circ}, K_\alpha^{\circ \bullet\circ},$ and $K_\alpha^{\circ \circ \bullet}$, respectively. They base the semantics of all three of them on primitive epistemic indistinguishability relations for \emph{ex ante} knowledge. The respective intersections of these relations with those of individual and collective action --themselves resulting from a structure of action labels-- yield then Lorini et al.'s versions of \emph{ex interim} and \emph{ex post} knowledge. Their treatment follows game theory's natural assumption that \emph{ex interim} knowledge refines \emph{ex ante}, and that \emph{ex post} knowledge in turn refines \emph{ex interim}. We can safely suppose that this is the reason why they take the semantics for \emph{ex ante} knowledge as the starting point of all three. Although the main problems with Lorini et al.'s system will be discussed when we deal with \emph{ex interim} knowledge, we mention here that the fact that the authors do not enforce any connection between \emph{ex ante} knowledge and historical necessity poses a problem for identifying what an agent knows \emph{ex ante} with the knowledge that is present regardless of that agent's choice of action. In other words, Lorini et al.'s system admits situations in which at a given moment and along a given history an agent knows $\phi$ \emph{ex ante}, but if the agent `changes' its choice of action then it would stop knowing $\phi$ (in Lorini et al.'s logic, $K_\alpha^{\bullet \circ \circ}\phi \to \square K_\alpha^{\bullet \circ \circ}\phi$ is not valid).\footnote{In \cite{lorini2014logical} (p.1320), Remark 2.6 actually states that the authors do not intend for agents to consider all their available choices epistemically \emph{possible} in the \emph{ex ante} sense, but this leads precisely into having choice-dependent \emph{ex ante} knowledge.}

In \cite{HortyPacuit2017} and \cite{herzig2006knowing}, Horty \& Pacuit and Herzig \& Troquard, respectively, tackle a notion of \emph{ex ante} knowledge with epistemic stit in similar ways. 
It is only under the light of their full systems that we find fault in both approaches, which we address when dealing with \emph{ex interim} knowledge and know-how.  

\textbf{Our version:} We take $\alpha$'s \emph{ex ante} knowledge to be all the truths \emph{about the next moment} that, regardless of its current choice of action, $\alpha$ knows to be independent of that choice of action. Let $\mathcal{M}=\langle T,\sqsubset, \mathbf{\mathbf{Choice}}, \{\sim_\alpha\}_{\alpha\in Ags}\rangle$ be an \emph{epistemic BDT frame} and $\phi$ of $\mathcal L_{\textsf{KX}}$. We say that at situation $\langle m,h \rangle$, $\alpha$ has \emph{ex ante} knowledge of $\phi$ iff $\mathcal{M},\langle m, h \rangle\models \square K_\alpha \square X \phi$. For instance, we can see that in our Example \ref{ej1}, if we again take $f_B$ to denote the proposition `the fail-safe mechanism of bomb $B$ has been activated', we get that $\mathcal{M},\langle m_4, h_{10}\rangle \not\models \square K_{luther} \square X Y f_B$ for case a), and that $\mathcal{M},\langle m_4, h_{10}\rangle \models \square K_{luther} \square X Y f_B$ for case b).

Two points must be made. First, we observe that we have modelled an \emph{individual} version of \emph{ex ante} knowledge, with no explicit mention of whether there is a collective counterpart or whether \emph{ex ante} knowledge is already collective in nature from the start. There is a sense in which \emph{ex ante} knowledge has been thought as strictly non-private information. Aumann \& Dreze, for instance, state that at the \emph{ex ante} stage of differential information environments, no agent should have any private knowledge (\cite{aumann2008rational}, p. 80). This entails that \emph{ex ante} knowledge should then be seen as an instance of group knowledge. Above we studied an individual version, and we then model Aumann \& Dreze's \emph{ex ante} knowledge with $\square C \square X \phi$, where $C$ stands for the operator for common knowledge for the grand coalition $Ags$ in the language $L_{\textsf{KX}}$. Second, Duijf points out in \cite{duijf2018let} (Chapter 3) that it is important to account for the temporal dimension of \emph{ex ante} knowledge, where this is seen as the knowledge that agents have \emph{before} they and the others engage in choices of action. Our account does acknowledge such temporal dimension insofar it is given within Xstit logic, and thus the information that agents have \emph{ex ante} is available \emph{before} the effects of choices of action take place. 

\textbf{\emph{Ex interim} knowledge}. It is assumed that at the \emph{ex interim} stage of decision making, an agent's available information expands its \emph{ex ante} knowledge by taking into account \emph{its} choice of action, though not yet the other agents' ones. If an agent knows $\phi$ \emph{ex interim}, then the agent is certain about the fact that $\phi$ will occur after the performance of its choice of action and regardless of what other agents choose. For instance, in Example \ref{ej1} b) $luther$ knows \emph{ex interim} that bomb $L$ will go off after he chooses $R_L$. We side by the intuition that if an agent has some certainty about the effects of its own actions, then it is easier for that agent to discern things \emph{ex interim}.\footnote{We still want to allow for situations in which agents do not always have certainty about the actions that they choose even after choosing them. As will be seen later, this distinguishes our interpretation of \emph{ex interim} knowledge from epistemic game theory's traditional one.} 

For Lorini et al., the information available to an agent \emph{ex interim} is not only independent of the other agents' choices of action, but also \emph{must} make agents discern which action they are taking. This means that in Lorini et al.'s formalization agents will always be able to know \emph{ex interim} the action they perform and can never be uncertain at the \emph{ex interim} stage about the difference between actions that have different labels. The condition can be expressed --in the terminology of \cite{lorini2014logical}-- by use of propositional constants representing the execution of action labels: if for every action label $A$ and $\alpha$ we take $p_A^\alpha$ to denote the proposition `The action $A$ is performed by $\alpha$', then the following formula is valid in Lorini et al.'s system: $p_A^\alpha\to K^{\circ\bullet\circ}p_A^\alpha$. Although the property is to a certain extent in keeping with epistemic game theory, we find it constraining in the context of epistemic stit. For instance, in Example \ref{ej1} we may want to model a situation in which $ethan$ does not know the difference between detonator $L$ and detonator $B$ but still activates the fail-safe mechanism of one of them. We would like to say that in this case $ethan$ is not able to discern \emph{ex interim} whether he chose to activate the mechanism for bomb $L$ or for bomb $B$, but according to Lorini et al.'s formalization this would be impossible. 
The constraint is all the more problematic because of its consequences for Lorini et al.'s treatment of responsibility/culpability attribution: in their formalism, agents will always be morally responsible for performing a given action, without being able to excuse themselves from moral responsibility by claiming that they were uncertain about which action they chose. Observe also that the constraint implies that \emph{ex ante} certainty of the present moment forces agents to know all the consequences of their actions, so options for modelling excusability are restricted.   

In \cite{HortyPacuit2017} Horty \& Pacuit work with similar ideas to Lorini et al.'s when it comes to the relationship between \emph{ex interim} knowledge and action labels. With the goal of disambiguating the sense of epistemic ability from that of causal ability in stit, they base a version of know-how on novel semantics for \emph{ex interim} knowledge. In order to deal with both \emph{ex interim} knowledge and know-how,  Horty \& Pacuit extend basic stit with syntactic and semantic components. Syntactically, they introduce the operator $[\dots kstit]$ to encode agentive \emph{ex interim} knowledge and add it to a language that includes operators $\mathcal{K}_\alpha$ for epistemic indistinguishability, operators $[\alpha \ stit]$ for the traditional Chellas-stit notion of action, and $\square$ for historical necessity. The semantics for formulas involving $[\dots kstit]$ uses \emph{action types} under the premise that actions of the same \emph{type} might lead to different outcomes in different moments. At a given moment and along a given history, $\alpha$ knowingly sees to it that $\phi$ iff at all moments that are epistemically indistinguishable for $\alpha$ to the one of evaluation, $\alpha$'s execution of the same action type enforces $\phi$.\footnote{Formally, $\mathcal{M}, \langle m,h\rangle \models [\alpha \ kstit]\phi$ iff for every $\langle m',h'\rangle$ such that $\langle m, h\rangle \sim_\alpha \langle m',h'\rangle$, $Exn_{m'}(Lbl^\alpha(\langle m,h\rangle))\subseteq |\phi|^{m'}$, where $Lbl_\alpha$ is a function that maps a situation to the action type of the action token being performed at that situation, and $Exn_{m'}$ is a partial function that maps types to their corresponding tokens at moment $m'$. } As mentioned in \cite{JANANDI}, the use of types brings three unfavorable constraints: (1) In order for $[\dots kstit]$ to be an \textbf{S5} operator, the primitive epistemic indistinguishability relations --those supporting the operators $\mathcal{K}_\alpha$-- must ensue not between moment-history pairs but between moments. This limits the class of models to those in which knowledge is moment-dependent and for which agents will not be able to know that they perform a given action.\footnote{Horty \& Pacuit's models satisfy the following constraint: if $\langle m, h\rangle \sim_\alpha \langle m',h'\rangle$, then $\langle m,h_*\rangle \sim_\alpha \langle m',h'_*\rangle$ for every $h_*\in H_m$, $h'_*\in H_{m'}$, which, under reflexivity of $\sim_\alpha$, corresponds syntactically to the axiom schema $\mathcal{K}_\alpha\phi \to \square \phi$.} This leads Horty \& Pacuit into identifying $\mathcal{K}_\alpha \phi$ with $\alpha$'s \emph{ex ante} knowledge of $\phi$, but then a shortcoming is that both instances of knowledge ($\mathcal{K}_\alpha$ and $[\alpha \ kstit]$) satisfy what we call the `own action condition' $(\mathtt{OAC})$. This condition is semantically stated by the following rule: for every situation $\langle m_*,h_*\rangle$, if $\langle m_*, h_*\rangle\sim_\alpha \langle m, h\rangle$ for some $\langle m,h\rangle$, then $\langle m_*,h_*'\rangle\sim_\alpha \langle m,h\rangle$ for every $h_*'\in \mathbf{Choice}^{m_*}_\alpha (h_*)$. It implies the validity of the formulas $\mathcal{K}_\alpha \phi \to [\alpha \ stit]\phi$ and $[\alpha \ kstit]\phi\to [\alpha \ stit]\phi$. In Horty \& Pacuit's formalism, then, there is no sense whatsoever in which agents can know more than what they bring about.\footnote{To see how this constraint thwarts an analysis of the interaction between knowledge and action, consider our Example \ref{ej1}, and assume that at moment $m_4$, for instance, we want to say that $luther$ knows \emph{in some sense} --not in an \emph{ex ante} or \emph{ex interim} sense, though-- what $benji$ will choose. Therefore, $luther$ should in principle be able to somehow distinguish $h_9$ from $h_{12}$, and $h_{10}$ from $h_{11}$. However, in presence of $(\mathtt{OAC})$, this cannot be the case (see \cite{duijf2018let} (Chapter 3) for a more elaborate discussion about the undesirability of this property in epistemic stit).}  (2) Just as with Lorini et al., the semantics for $[\dots kstit]$ entails that agents cannot have uncertainty about the actions they perform at the \emph{ex interim} stage. (3) Indistinguishable moments must offer the same available types.\footnote{In all the treatments of game knowledge that we presently review, virtually nobody disagrees with some version of this constraint. In the case of the approaches from game theory, ATL, and Coalition Logic, the premise is very much related to the concept of `uniform strategies'. We will address it further when analyzing versions of know-how. It is worth mentioning that Lorini et al. remain vague about the subject. The examples they present all presuppose the condition, but they do not demand it explicitly or even refer to it. If they were to enforce it, it would bring the same technical problem as in Horty \& Pacuit's \cite{HortyPacuit2017}.} The problem with this constraint, which we call `uniformity of available action types' ($\mathtt{UAAT}$), is technical in nature: it cannot be characterized syntactically \emph{in their logic} without producing an infinite axiomatization. This is due to the fact that performing a certain action type can only be expressed syntactically with propositional constants.

In \cite{herzig2006knowing} Herzig \& Troquard present a version of \emph{ex interim} knowledge under the term \emph{dynamic knowledge} in order to give semantics for know-how, as Horty \& Pacuit. They extend the language of basic Xstit with a modal operator $[\dots Kstit]$ for this dynamic knowledge. Contrary to Lorini et al. and Horty \& Pacuit, it is for the semantics of formulas involving \emph{this} operator that they use primitive epistemic indisinguishability relations, and they define the truth conditions \emph{both} for action ($[\alpha \ Stit]\phi$) and for static knowledge ($\square [\alpha \ Kstit]\phi$) in terms of it. Still, their formalization ends up working in a similar way to Horty \& Pacuit's. First, their versions of knowledge --both static and dynamic-- satisfy $(\mathtt{OAC})$, so again there is no sense in which agents can know more than what they bring about. Second, they restrict to situations for which agents cannot be uncertain about the actions they choose. Lastly, they favor a condition of uniformity that corresponds to Horty \& Pacuit's $(\mathtt{UAAT})$. 

In \cite{duijf2018let}, Duijf has an interesting proposal that somewhat resembles Herzig \& Troquard's. Duijf thinks that a primitive indistinguishability relation that links moment-history pairs in instantaneous stit characterizes a kind of knowledge that we may already call \emph{ex interim}. 
What distinguishes his interpretation from Herzig \& Troquard's --as well as from Horty \& Pacuit's, for that matter-- is that Duijf's \emph{ex interim} knowledge is flexible enough to deal both with cases of uncertainty about one's action (it is not the case that agents \emph{must} know their actions \emph{ex interim}) --which accounts for possibility of a more coarse-grained knowledge-- and with cases for which agents know more than what they bring about --which accounts for instances of fine-grained knowledge.\footnote{Duijf does not demand for his  models to validate $(\mathtt{OAC})$. However, as we will point out when addressing his formalization of know-how, Duijf does enforce a constraint corresponding to Horty \& Pacuit's $(\mathtt{UAAT})$.} However, it may be a bit too flexible, for it allows situations in which an agent knows \emph{ex interim} that another agent is bringing about something that is not settled, which is clearly dependent on the other agent's choice. Denoting Duijf's operator for \emph{ex interim} knowledge by $\mathds{K}_\alpha$ and the traditional Chellas-stit operators by $[\alpha]$, the formula $\mathds{K}_\alpha [\beta]\phi \land \lnot \square \phi$ is satisfiable in Duijf's logic. It might be surprising that Lorini et al.'s version of \emph{ex interim} knowledge also does not satisfy $(\mathtt{OAC})$, so a comparable criticism can be advanced. In fact, their formalism is such that agents could have \emph{ex ante} knowledge of what other agents are bringing about, something atypical. 

\textbf{Our version:} We identify \emph{ex interim} knowledge with Broersen's notion of `knowingly doing' (\cite{xstit}), so that an agent knows $\phi$ \emph{ex interim} iff it knowingly sees to it that $\phi$ happens in the next moment. Thus, let $\mathcal{M}$ be an \emph{epistemic BDT model} and $\phi$ of $\mathcal L_{\textsf{KX}}$. We say that at situation $\langle m,h \rangle$, $\alpha$ has \emph{ex interim} knowledge of $\phi$ iff $\mathcal{M},\langle m, h \rangle\models K_\alpha [\alpha]^X\phi$. In Example \ref{ej1} b) we get that $luther$ knowingly sees to it that bomb $L$ goes off, so that he knows \emph{ex interim} that bomb $L$ will go off ($\mathcal{M}, \langle m_4, h_{10}\rangle \models K_{luther}[luther]^Xd_L$.) In contrast, in case a) we have that $\mathcal{M}, \langle m_4, h_{10}\rangle \models \lnot K_{luther}[luther]^Xd_L$, so that $luther$ in fact sets bomb $L$ off, but not knowingly and thus without \emph{ex interim} knowledge about it. As for how our notion of \emph{ex interim} knowledge deals with the constraints that we have criticized in the reviewed literature, we advance two important remarks: \begin{enumerate}
\item We do not impose any condition on the primitive epistemic indistinguishability relations that would exclude uncertainty of one's own action in the \emph{ex interim} stage, so we allow for cases of coarse-grained knowledge that have clear implications for responsibility/culpability attribution in stit logic. In particular, full certainty about the moment of evaluation does not imply that agents will know the effects of all their current actions. 
    \item Our system is flexible enough so that agents can \emph{in some sense} know more than what they bring about. Since we do not impose $(\mathtt{OAC})$ for our traditional knowledge semantics, we account for instances of finer-grained knowledge. However, agents \emph{cannot} know \emph{ex interim} more than what they bring about (reflexivity of the primitive indistinguishability relations entails that  $K_\alpha [\alpha]^X\phi\to [\alpha]^X\phi$ is valid with respect to the class of our frames).\footnote{Comparing our version with Duijf's, we observe that we exclude situations where agents can know \emph{ex interim} that other agents will see to it that $\phi$ in the next moment without it being settled that  $\phi$ will hold in the next moment, while we had seen he does not. $K_\alpha [\alpha]^X Y [\beta]^X\phi\to \square X\phi$ is valid with respect to the class of our frames.} 
\end{enumerate}
\textbf{\emph{Ex post} knowledge}. At the last stage of information disclosure, it is revealed to the agents which choices of action they engaged in. Game theorists call the knowledge that arises at this point \emph{ex post} knowledge. This is commonly seen as including facts that hold after constraining the possible histories to the epistemic equivalents of the choice profile of the grand coalition.  Although Aumann \& Dreze assume that at the \emph{ex post} stage ``all information is revealed to all'' (\cite{aumann2008rational}, p.80), we take it to mean \emph{all information ensuing from disclosing the agents' choices}. Game theorists also describe \emph{ex post} knowledge as the kind of knowledge that can be attained \emph{after} all the agents have performed their `strategies', which adds a temporal dimension to the concept and thus increases its ambiguity. In our view, if the agents have some certainty about the effects of joint action, then it is easier for them to know things \emph{ex post}. In Example \ref{ej1} a), we consider as instances of \emph{ex post} knowledge at the actual situation the facts that $Ags$ caused a bomb to go off and did so \emph{unknowingly}. We do not consider that that it should also be \emph{ex post} knowledge the fact that actually it was $luther$ who set off the bomb and that it was bomb $L$. 

Out of all the papers that we have reviewed so far, only \cite{lorini2014logical} includes some treatment for \emph{ex post} knowledge in epistemic stit. There, the relation that provides semantics for formulas involving $K_\alpha^{\circ \circ \bullet}$ is built by intersecting the primitive relation for \emph{ex ante} knowledge with the relation for collective action of the grand coalition --itself the intersection of the agents' relations for individual action (see Definitions \ref{frames} and \ref{models KCSTIT} for the corresponding semantics of $[Ags]^X\phi$). A point to be made is that the instantaneous nature of their action semantics fails to make allowance for an analysis of the temporal dimension of this kind of knowledge. For instance, their semantics does not admit situations for which an agent knows \emph{ex post} that it brought about $\phi$ without the agent also knowing \emph{ex interim} that it brought about $\phi$ ($K_\alpha^{\circ \circ \bullet}[\alpha \ \mathbf{stit}]\phi\to K_\alpha^{\circ \bullet\circ}[\alpha \ \mathbf{stit}] \phi$ turns out to be valid with respect to their frames, where $[\alpha \ \mathbf{stit}]$ are the traditional Chellas-stit operators given in their notation). 

\textbf{Our version(s)}: In our semantics, we propose a version for individual \emph{ex post} knowledge in the following way. Let $\mathcal{M}$ be an \emph{epistemic BDT frame} and $\phi$ of $\mathcal L_{\textsf{KX}}$. We say that at situation $\langle m,h \rangle$, $\phi$ is \emph{ex post} knowledge of $\alpha$ iff $\mathcal{M},\langle m, h \rangle\models X \ K_\alpha \ Y \ [Ags]^X\phi$. In this way, in Example \ref{ej1} a) we get that $\mathcal{M},\langle m_4, h_{10} \rangle\models X \ K_{luther/benji} \ Y \ [Ags]^X d_L\lor d_B$. For case b) we get that $\mathcal{M},\langle m_4, h_{10} \rangle\models  \lnot X \ K_{benji} \ Y \ [Ags]^X (Y [luther]^Xd_L)$ ($benji$ does not know even \emph{ex post} that $luther$ set off bomb $L$).\footnote{Other interesting cases appear in the non-actual situations where the bombs were defused. In Example \ref{ej1} a), for instance, if we take $f_B$ to denote the proposition `the fail-safe mechanism of bomb $B$ has been activated', then $\mathcal{M},\langle m_4, h_9 \rangle\models \lnot K_{luther/benji} (Y [ethan]^X f_B)$ and $\mathcal{M},\langle m_4, h_9 \rangle\models  X \ K_{luther/benji} \ Y \ [Ags]^X (Y \ Y [ethan]^X f_B)$ ($luther$ and $benji$ \emph{realize} \emph{ex post} that $ethan$ secured the detonator for bomb $B$). Observe that, contrary to Lorini et al.'s formalization, ours does account for cases where an agent knows \emph{ex post} that it brought about $\phi$ without knowing \emph{ex interim} that it would bring about $\phi$: $ X \ K_\alpha \ Y \ [Ags]^X Y[\alpha]^X \phi\to K_\alpha [\alpha]^X \phi$ is not valid with respect to our frames.}


Endowed with our three versions of differential knowledge according to the stages of decision making, we can see how they interact with each other. Complying with the customary game theoretical view, we have that \emph{ex post} refines \emph{ex interim}, which in turn refines \emph{ex ante.}

\begin{proposition}
Let $\mathcal{M}$ be an \emph{epistemic branching discrete-time frame}, $\phi$ of $\mathcal L_{\textsf{KX}}$, and $\alpha\in Ags$. We have that
$\mathcal{M}\models \square K_\alpha \square X \phi \to K_\alpha [\alpha]^X \phi$ and $\mathcal{M}\models K_\alpha [\alpha]^X \phi \to X \ K_\alpha \ Y \ [Ags]^X\phi$. 

\end{proposition}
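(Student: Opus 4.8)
The plan is to prove each validity separately by unfolding the semantics from Definition~\ref{models KCSTIT}, using the frame conditions (reflexivity of $\sim_\alpha$, the constraint $\mathtt{NoF}$, and the relationship between $[\alpha]^X$ and $[Ags]^X$). Since both statements are of the form $\mathcal{M}\models \phi\to\psi$, I would fix an arbitrary model $\mathcal{M}$ and situation $\langle m,h\rangle$, assume the antecedent holds there, and derive the consequent.

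\textbf{First implication: $\square K_\alpha \square X\phi \to K_\alpha[\alpha]^X\phi$.} Assume $\mathcal{M},\langle m,h\rangle\models\square K_\alpha\square X\phi$. By the truth condition for $\square$, this means $K_\alpha\square X\phi$ holds at $\langle m,h'\rangle$ for every $h'\in H_m$; in particular (taking $h'=h$, valid since $h\in H_m$) we get $\mathcal{M},\langle m,h\rangle\models K_\alpha\square X\phi$. To establish $K_\alpha[\alpha]^X\phi$ at $\langle m,h\rangle$, fix any $\langle m',h'\rangle$ with $\langle m,h\rangle\sim_\alpha\langle m',h'\rangle$; I must show $\mathcal{M},\langle m',h'\rangle\models[\alpha]^X\phi$, i.e. $\mathcal{M},\langle m',h''\rangle\models X\phi$ for all $h''\in\mathbf{Choice}^{m'}_\alpha(h')$. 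From $K_\alpha\square X\phi$ at $\langle m,h\rangle$ and the indistinguishability to $\langle m',h'\rangle$ I obtain $\mathcal{M},\langle m',h'\rangle\models\square X\phi$, which by the $\square$-clause gives $\mathcal{M},\langle m',h''\rangle\models X\phi$ for \emph{every} $h''\in H_{m'}$. Since each $h''\in\mathbf{Choice}^{m'}_\alpha(h')$ lies in $H_{m'}$, the conclusion follows. The key observation is that $\square X\phi$ settles $X\phi$ across the whole moment, which dominates the weaker quantification over a single choice cell defining $[\alpha]^X$.

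\textbf{Second implication: $K_\alpha[\alpha]^X\phi \to X\,K_\alpha\,Y\,[Ags]^X\phi$.} Assume $\mathcal{M},\langle m,h\rangle\models K_\alpha[\alpha]^X\phi$. The consequent requires $\mathcal{M},\langle m^{+h},h\rangle\models K_\alpha Y[Ags]^X\phi$, so I fix any $\langle m_*,h_*\rangle$ with $\langle m^{+h},h\rangle\sim_\alpha\langle m_*,h_*\rangle$ and must show $\mathcal{M},\langle m_*^{-h_*},h_*\rangle\models[Ags]^X\phi$. Here the crucial tool is $\mathtt{NoF}$: from $\langle m^{+h},h\rangle\sim_\alpha\langle m_*,h_*\rangle$ it yields $\langle m,h\rangle\sim_\alpha\langle m_*^{-h_*},h_*\rangle$. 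Applying the antecedent $K_\alpha[\alpha]^X\phi$ at this indistinguishable situation gives $\mathcal{M},\langle m_*^{-h_*},h_*\rangle\models[\alpha]^X\phi$. It remains to upgrade $[\alpha]^X\phi$ to $[Ags]^X\phi$; this is immediate because $\mathbf{Choice}^{m_*^{-h_*}}_{Ags}(h_*)=\bigcap_{\beta\in Ags}\mathbf{Choice}^{m_*^{-h_*}}_\beta(h_*)\subseteq\mathbf{Choice}^{m_*^{-h_*}}_\alpha(h_*)$, so the grand-coalition cell is contained in $\alpha$'s cell, and universal truth of $X\phi$ over the larger cell implies it over the smaller one.

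\textbf{Main obstacle.} The routine direction is the first implication; the delicate step is the temporal bookkeeping in the second, where one must correctly track the $+h$/$-h$ successor-predecessor operations and invoke $\mathtt{NoF}$ with exactly the right instantiation. I would be careful that the identities $(m^{-h})^{+h}=m$ and $(m^{+h})^{-h}=m$ (noted after Definition~\ref{frames}) guarantee the predecessor $m_*^{-h_*}$ is well-defined and that $h_*\in H_{m_*^{-h_*}}$, so that evaluating $[\alpha]^X\phi$ and $[Ags]^X\phi$ there is legitimate. Everything else reduces to the set-inclusion of choice cells and the $\square$-domination already used above.
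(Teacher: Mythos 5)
Your proof is correct, and the paper itself states this proposition without proof, so your direct semantic verification—unfolding $\square$, $K_\alpha$, and the choice-cell quantifiers, invoking $(\mathtt{NoF})$ with the instantiation $\langle m^{+h},h\rangle\sim_\alpha\langle m_*,h_*\rangle \Rightarrow \langle m,h\rangle\sim_\alpha\langle m_*^{-h_*},h_*\rangle$, and using $\mathbf{Choice}^{m}_{Ags}(h)\subseteq\mathbf{Choice}^{m}_{\alpha}(h)$—is exactly the argument the authors leave implicit. Both steps check out, including the observation that the outer $\square$ in the first implication is discharged by reflexivity and that $h_*\in H_{m_*^{-h_*}}$ makes the evaluation at the predecessor moment legitimate.
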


\textbf{Know-how}. When we talk about know-how we refer to the so-called \emph{practical} or \emph{procedural knowledge} of an agent that takes `actions' rather than propositions as content (see \cite{duijf2018let}, Chapter 3).\footnote{According to \cite{duijf2018let}, in \cite{fantl2008knowing} Fantl draws the outlines of know-how by distinguishing it from two other kinds of knowledge: \emph{knowledge by acquaintance} and \emph{propositional knowledge} (know-that). Setting aside for the moment the concept of knowledge by acquaintance, \cite{duijf2018let} proposes that the essential difference between know-how and know-that lies in the content they take. We side by this interpretation, where procedural knowledge concerns actions, and propositional knowledge concerns propositions. Such a disambiguation identifies the concept of know-how that we study with Wang's \emph{goal-directed} know-how. This is related to the debate introduced by Ryle (\cite{ryle2009concept}) as to whether know-how can be reduced to know-that or not, where \emph{intellectualists} think it can be reduced and \emph{anti-intellectualists} think it cannot.} The intuition is that an agent knows how to do something iff it has the procedural knowledge of bringing about that something. We are not engaging in a circular argument here, for the second statement can be described with precise definitions for `knowledge', `action', and `possibility'. Still, we acknowledge that there is a lively debate in the literature as to what being able to bring about a certain outcome exactly means, and the question of what it means to be \emph{epistemically able} only builds on that first debate (see \cite{duijf2018let}, \cite{naumov2017together}, \cite{naumov2017coalition}, \cite{aagotnes2015knowledge}, \cite{wang2015logic}, \cite{xstit}, \cite{HortyPacuit2017}). Presently, we base the conceptual reach of know-how on the approach of epistemic stit. Much like Horty \& Pacuit in \cite{HortyPacuit2017},  we are concerned with situations in which agentive knowledge yields differences between what agents can bring about, on one hand, and what they can bring about \emph{knowingly}, on the other. Consider Example \ref{ej1} and the differences between case a) and case b). In case a), we would like to say that $luther$ and $benji$  do not know how to save the facility. Whether being \emph{causally} able to perform an action that they know will save the facility is equal to knowing how to save it or not is very much open to debate, but the reader will agree that at least it is necessary for knowing how. Broersen, Herzig \& Troquard, Horty \& Pacuit, Naumov \& Tao, \AA gotnes et al., and Duijf all agree with this, and \cite{herzig2006knowing} and \cite{HortyPacuit2017} in fact characterize their versions of know-how exactly as the possibility of bringing about a certain outcome knowingly. In what follows, we focus our study of previous literature on \emph{individual} know-how, leaving its collective counterpart for future endeavors.  

 Horty \& Pacuit (\cite{HortyPacuit2017} and \cite{hortyepistemic}) work on the assumption that an agent is epistemically able to do something iff it is able to have \emph{ex interim} knowledge of that something. Their goal is to formally disambiguate simplified versions of the different cases of our Example \ref{ej1}. Using a logic endowed with the operator $[\dots kstit]$ that we reviewed before, they define know-how as the historical possibility of \emph{ex interim} knowledge. We have already commented on the problems of their system. 
 
 Another related approach to know-how is given by Herzig \& Troquard in the already mentioned \cite{herzig2006knowing}.  Herzig \& Troquard use their dynamic knowledge operator $[\dots Kstit]$ to state that an agent knows how to see to it that $\phi$ (in the next moment) iff it is historically necessary that the agent knowingly enforces the possibility to knowingly bring about $\phi$: at situation $\langle m, h\rangle$, $\alpha$ knows how to see to it that $\phi$ iff $\langle m, h \rangle\models \square[\alpha\ Kstit]\Diamond [\alpha\ Kstit]X\phi$. In our view, Herzig \& Troquard's treatment of know-how is successful to a certain extent, but their models are constrained by $(\mathtt{OAC})$ and by the condition of uniformity corresponding to Horty \& Pacuit's  $(\mathtt{UAAT})$, as we established before. Somewhat connected to Herzig \& Troquard's version, Duijf presents in \cite{duijf2018let} a simple and elegant stit theory of know-how that ultimately characterizes individual know-how by stating that at situation $\langle m,h \rangle$, $\alpha$ knows how to see to it that $\phi$ iff $\langle m, h \rangle\models \mathds{K}_\alpha \Diamond \mathds{K}_\alpha [\alpha]\phi$ --where $\mathds{K}_\alpha$ are the operators for Duijf's version of \emph{ex interim} knowledge and $[\alpha]$ are the traditional Chellas-stit operators. As we saw, Duijf rejects imposing condition ($\mathtt{OAC}$) on his frames, so his logic admits finer-grained renditions of knowledge. However, just as in Herzig \& Troquard's proposal, the condition of uniformity of available actions --which in Duijf's system is syntactically expressed by the axiom schema $\Diamond \mathds{K}_\alpha[\alpha] \phi\to \mathds{K}_\alpha\Diamond[\alpha]\phi$--  yields that his formula for know-how is equivalent to $\Diamond \mathds{K}_\alpha [\alpha]\phi$.\footnote{Duijf does not comment on such equivalence, whose deduction --in Duijf's system-- comes from the following argument. Turns out to be the case that the validity of $\Diamond \mathds{K}_\alpha[\alpha] \phi\to \mathds{K}_\alpha\Diamond[\alpha]\phi$ entails the validity of $\Diamond \mathds{K}_\alpha\phi\to \mathds{K}_\alpha\Diamond\phi$ in Duijf's logic. This last schema, denoted by $(Unif-H)$, is the syntactic counterpart of a condition that we call `uniformity of historical possibility', and in light of it we have that   \[ \begin{array}{lll}
1. \vdash \mathds{K}_\alpha[\alpha]\phi \to \mathds{K}_\alpha \mathds{K}_\alpha[\alpha]\phi && \mbox{Substitution of axiom }(4) \mbox{ for } \mathds{K}_\alpha\\
2. \vdash \Diamond \mathds{K}_\alpha [\alpha]\phi \to \Diamond \mathds{K}_\alpha  \mathds{K}_\alpha[\alpha]\phi && \mbox{Modal logic on } 1\\
3. \vdash \Diamond \mathds{K}_\alpha  \mathds{K}_\alpha[\alpha]\phi\to \mathds{K}_\alpha \Diamond \mathds{K}_\alpha[\alpha]\phi && \mbox{Substitution of }(Unif-H) \mbox{ for } \alpha\\
4.  \vdash \Diamond \mathds{K}_\alpha [\alpha]\phi\to \mathds{K}_\alpha \Diamond \mathds{K}_\alpha[\alpha]\phi && \mbox{Propositional logic } 2,4.
\end{array} \] The other direction is straightforward, by axiom $(T)$ for $\mathds{K}_\alpha$. A similar deduction can be provided to ensure that Herzig \& Troquard's $\square [\alpha\ Kstit]\Diamond [\alpha\ Kstit]\phi$ is reducible to $\Diamond [\alpha\ Kstit]\phi$.}

\textbf{Our version:} Let $\mathcal{M}$ be an \emph{epistemic branching discrete-time frame} and $\phi$ of $\mathcal L_{\textsf{KX}}$. We say that at situation $\langle m,h \rangle$, $\alpha$ knows how to see to it that $\phi$ iff $\langle m, h \rangle\models \square K_\alpha \Diamond K_\alpha [\alpha]^X\phi$. So we propose that an agent  knows how to see to it that $\phi$ if it has \emph{ex ante} knowledge of the possibility of knowing \emph{ex interim} (or knowingly doing) $\phi.$\footnote{We observe that if we were to incorporate a condition of uniformity of available actions into our logic (as we do in the axiomatization), it would be equivalent to the semantic condition known as `uniformity of historical possibility' $(\mathtt{Unif-H})$, which says that for every situation $\langle m_*,h_*\rangle$, if $\langle m_*, h_*\rangle \sim_\alpha \langle m, h\rangle $ for some $\langle m,h\rangle $, then for every $h_*'\in H_{m_*}$ there exists $h'\in H_m$ such that $\langle m_*, h_*'\rangle \sim_\alpha \langle m, h'\rangle$. Under this condition, which corresponds syntactically to the schema $\Diamond K_\alpha\phi\to K_\alpha\Diamond\phi$, we would have two important consequences: our formula for \emph{ex ante} knowledge would be equivalent to $\square K_\alpha X\phi$, and our formula for know-how would be equivalent to $\Diamond K_\alpha[\alpha]^X \phi$.} In this way, in Example \ref{ej1} a) we have that $\mathcal{M},\langle m_4, h_{10}\rangle \not \models \square K_{luther} \Diamond K_{luther}[luther]^X s$ ($luther$ does not know how to defuse the bombs), whereas in case b) we have that $\mathcal{M},\langle m_4, h_{10}\rangle \models \square K_{luther} \Diamond K_{luther}[luther]^X d_L$ ($luther$ knows how to set off a bomb). Since in both cases, $luther$ ultimately does set off a bomb, we consider case a) as a situation where he should be excused from moral responsibility of the explosion, while case b) is one where he should be held morally responsible for it.\footnote{Although we focus our comparisons on the previous work within epistemic stit, it is worth discussing some of the approaches to the concept in the epistemic extensions of ATL and Coalition Logic (see \cite{aagotnes2015knowledge}, \cite{HoekWooldridgeATEL2003}), for the ideas behind the syntax and semantics for know-how in these logics are similar to those of stit. For instance, Naumov \& Tao and \AA gotnes et al. --in \cite{naumov2017together} and  \cite{aagotnes2015knowledge}, respectively-- share many intuitions with Horty \& Pacuit. The notion of know-how they both formalize is characterized by the statement that an agent knows how to bring about $\phi$ at a given state $s$ iff there exists a `strategy' $a$ such that in all states that are epistemically indistinguishable to $s$ for the agent, `strategy' $a$ will lead to states at which $\phi$ holds. In other words, an agent knows how to do something if there exists a way for the agent to knowingly enforce $\phi$. \cite{naumov2017together} and \cite{aagotnes2015knowledge} use different interpretations for the word `strategy'. While in the former the authors refer to an action label in single-step transitions, \AA gotnes et al. use the term as is done in ATL, where strategies are functions that assign to each agent and state a pertinent transition. Regardless of the difference, their formalization of know-how depends on the same reasoning: an agent would know how to do $\phi$ iff there exists a uniform strategy such that at all epistemically indistinguishable states, the transition assigned by the strategy leads to a state at which $\phi$ holds. In both accounts, we face again the idea of uniformity.}

\section{Axiomatization}\label{stormbreaker}

\begin{definition}[Proof system]
\label{axiomsystemurakami}
Let $\Lambda$ be the proof system defined by the following axioms and rules of inference: \begin{itemize}
    \item \emph{(Axioms)} All classical tautologies from propositional logic. The $\mathbf{S5}$ axiom schemata for $\square$, $[\alpha ]$, $[Ags]$, and $K_\alpha$. The following axiom schemata for the interactions of formulas with the given operators:
\scriptsize
\begin{flalign*}
&YX\phi \leftrightarrow \phi &(In1)\\ 
&XY\phi \leftrightarrow \phi &(In2)\\ 
&X\phi \leftrightarrow \lnot X \lnot\phi &(DET.S.X)\\ 
&Y\phi \leftrightarrow \lnot Y \lnot\phi &(DET.S.Y)\\ 
&\square\phi \to[\alpha]\phi &(SET)\\ 
&[\alpha]X \phi\to [\alpha]X\square \phi&(NA)\\ 
&[Ags]X \phi\to [Ags]X\square \phi& (NAgs)\\ 
&[\alpha] \phi\to [Ags] \phi& (GA)\\ 
&\mbox{For $m\geq 1$ and pairwise distinct $\alpha_1,\dots,\alpha_m$},& \\ &\bigwedge_{1\leq i\leq n}  \Diamond [\alpha_i ] p_i \to \Diamond\left(\bigwedge_{1\leq i\leq n}[\alpha_i ] p_i\right)& (IA)\\
&K_\alpha X\phi \to X K_\alpha \phi &(NoF)\\
&\Diamond K_\alpha p \to K_\alpha \Diamond  p&(Unif-H)\\
\end{flalign*}
\normalsize
\item \emph{(Rules of inference)} \textit{Modus Ponens}, Substitution, and Necessitation for the modal operators.
\end{itemize}
We define $\Lambda_n$ as the axiom system constructed by adding axiom $(AgsPC_n)$ to $\Lambda$, where
\scriptsize
\begin{flalign*}
&\bigwedge_{1\leq k\leq n}\Diamond \left(\left(\bigwedge_{1\leq i \leq k-1}\lnot \phi_i\right) \land [Ags] \phi_k\right)\to \bigvee_{1\leq k\leq n} \phi_k &(AgsPC_n).
\end{flalign*}
\end{definition}
Following \cite{broersen2008complete} and \cite{payette2014decidability}, we will show that the axiom system $\Lambda_n$ is sound and complete with respect to the general multi-modal Kripke models, which Payette calls `irregular' in \cite{payette2014decidability}. However, we conjecture that there is no problem in using unraveling techniques as in \cite{schwarzentruber2012complexity} to transform these general Kripke models into those of Definition \ref{frames} in a truth-preserving way. 

\begin{proposition}\label{soundncomp} The system $\Lambda_n$ is sound and complete with respect to the class of Kripke-exstit $n$-models. 
\end{proposition}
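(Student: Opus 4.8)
The plan is to prove the two halves separately, following the canonical-model template for multi-modal $\mathbf{S5}$ logics with interaction axioms developed in \cite{broersen2008complete} and \cite{payette2014decidability}. Since the defining conditions of the Kripke-exstit $n$-models are meant to mirror the axioms one-for-one, soundness reduces to checking, schema by schema, that each axiom is valid on any frame meeting those conditions and that \textit{Modus Ponens}, Substitution, and Necessitation preserve validity. Most of these checks are routine: the $\mathbf{S5}$ schemata say that $\square$, $[\alpha]$, $[Ags]$, and $K_\alpha$ are read by equivalence relations; $(In1)$, $(In2)$, $(DET.S.X)$, $(DET.S.Y)$ say that $X$ and $Y$ are interpreted by mutually inverse deterministic relations; $(SET)$ and $(GA)$ record the inclusions $R_\square \supseteq R_{[\alpha]}\supseteq R_{[Ags]}$; and $(NA)$, $(NAgs)$, $(NoF)$ encode the commutations linking the temporal relations with the choice and epistemic relations. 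The schemata demanding care are $(IA)$, $(Unif\text{-}H)$, and $(AgsPC_n)$, whose semantic counterparts are, respectively, independence of agency, uniformity of historical possibility, and the bound of $n$ on the number of grand-coalition choice cells at each point.

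For completeness I would form the canonical model over the maximal $\Lambda_n$-consistent sets, equipping it with the usual canonical relation for each modality (so $w\,R_\square\,v$ iff $\{\phi : \square\phi\in w\}\subseteq v$, and analogously for $[\alpha]$, $[Ags]$, $K_\alpha$, $X$, and $Y$). The Truth Lemma is then the standard induction on formula complexity, and the real work is verifying that this canonical structure is a Kripke-exstit $n$-model. The $\mathbf{S5}$ axioms make each relation an equivalence; the determinism and inverse schemata $(In1)$--$(DET.S.Y)$ force $R_X$ and $R_Y$ to be mutually inverse bijections; and $(SET)$, $(GA)$, $(NA)$, $(NAgs)$, $(NoF)$ yield the inclusions and commutation properties by the familiar argument that pushes a defect formula through the relevant canonical relation.

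The hard part is making the canonical relations for the individual operators $[\alpha_1],\dots,[\alpha_m]$ and for $[Ags]$ fit together with the independence structure. As observed already in \cite{broersen2008complete}, a naive canonical model need not satisfy independence of agency: the intersection of individual choice cells can a priori be empty. I would therefore follow Payette's route of proving completeness against the broader class of \emph{irregular} abstract Kripke models rather than directly against the frames of Definition \ref{frames}, where $(IA)$ can be used through its existential form to show that any finite family of separately reachable $[\alpha_i]$-choices is jointly reachable in the canonical model, so that the relevant intersection is nonempty. The companion schema $(AgsPC_n)$ is then exactly what caps the grand-coalition branching: I expect it to be invoked, via a pigeonhole-style argument over the $\square$-alternatives of a point, to show that the canonical $[Ags]$-relation partitions each historical cluster into at most $n$ cells, delivering the $n$-boundedness required of the $n$-models. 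Uniformity of historical possibility is obtained separately from $(Unif\text{-}H)$ by the corresponding canonical push-through.

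With the canonical model shown to lie in the class, I would close the argument in the usual way: any non-theorem $\phi$ belongs to the complement of some maximal consistent set, and the Truth Lemma turns that set into a point of a Kripke-exstit $n$-model refuting $\phi$. The main obstacle throughout is the joint behaviour of $(IA)$ and $(AgsPC_n)$ under the canonical construction; the crux is the combinatorial lemma that these two schemata together force the simultaneous nonemptiness and $n$-boundedness of the grand-coalition cells, which is precisely the point at which the passage to the irregular models of \cite{payette2014decidability} does the necessary work.
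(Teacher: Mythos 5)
Your outline of soundness and of the canonical-model step is essentially the paper's: the $(\mathtt{IA})$ argument showing that any family of separately $\Diamond$-reachable individual choices is jointly consistent, and the pigeonhole use of $(AgsPC_n)$ over the $\square$-cluster to bound the number of grand-coalition cells by $n$, are exactly how the paper establishes that the canonical structure is an $n$-model. But there is a genuine gap in where that argument lands. The canonical model is only \emph{super-additive}: the $(GA)$ schema gives $R_{Ags}\subseteq\bigcap_{\alpha}R_\alpha$, whereas the Kripke-exstit $n$-models of the statement require the grand-coalition cell to \emph{equal} the intersection of the individual cells. Your proposal explicitly retreats to ``the broader class of irregular abstract Kripke models,'' which is precisely the class for which the canonical construction works directly --- but it is not the class named in the proposition, so the proof as proposed establishes a weaker completeness result and stops short of the claim.

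The paper closes this gap with two further steps that your proposal does not contain. First, an unraveling of the super-additive canonical model produces a satisfaction-equivalent super-additive $n$-model in which $R_X$ and $R_Y$ are irreflexive; this is not cosmetic, since irreflexivity is what makes the sets $h^-[w]$, $\{w\}$, $h^+[w]$ pairwise disjoint, which is needed for the next construction to be well defined. Second, a product construction in the style of Schwarzentruber and Lorini replaces each world $w$ by tuples $\left(\overrightarrow{c},\vec{f},w\right)$, where $\overrightarrow{c}$ is a choice profile and $\vec{f}$ assigns index vectors along $h[w]$, and redefines the choice relations so that $R_{Ags}^M=\bigcap_{\alpha}R_\alpha^M$ holds exactly while $(\mathtt{IA})_K$, $(\mathtt{NA})_K$, $(\mathtt{NAgs})_K$, $(\mathtt{Unif\text{-}H})_K$, and $(\mathtt{NoF})_K$ are all preserved; a surjective bounded morphism back to the original model then transfers satisfaction. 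Without some such conversion from super-additive to actual models, the completeness half of the proposition is not proved.
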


The proof of soundness is straightforward. For completeness, we proceed in three steps. In the first step, we prove completeness with respect to Kripke models that are \emph{super-additive}, meaning those for which each action available to $Ags$ is included in an intersection of individual actions but is not necessarily the same as such intersection (see \cite{schwarzentruber2012complexity} for their exact definition). In the second step, we prove completeness with respect to Kripke super-additive models where the temporal relations are irreflexive. In the last step, we use a technique similar to Schwarzentruber's construction in \cite{schwarzentruber2012complexity} and Lorini's in \cite{lorini2013temporal} to prove completeness with respect to the class of actual models, meaning those for which each action available to $Ags$ is the same as an intersection of individual actions. The full proof is long and technical and can be found in the Appendix \ref{slumber} of the present work.

\section{Conclusion}
In this work, we carefully reviewed previous renditions of four kinds of agentive knowledge (\emph{ex ante}, \emph{ex interim}, \emph{ex post,} and know-how) in epistemic stit theory. Motivated by examples that demand a notion of flexibility of the epistemic component in analyses of responsibility attribution, we presented a new logic for them. We find that our versions offer a fine background for building a nuanced theory of responsibility based on the influence of knowledge on action and decision.


\bibliographystyle{splncs04}

\bibliography{AAAI2018references}
\clearpage

\begin{subappendices}

\renewcommand{\thesection}{\Alph{section}}%

\section{Soundness and Completeness of $\Lambda_n$} \label{slumber}

\begin{remark}
As some remarks for the axiom system presented in Section \ref{stormbreaker}, we have that \begin{itemize}
    \item $(SET)$ and $(GA)$ imply that $\vdash \square \phi\to [Ags]\phi$, so that $(NAgs)$ and $(T)$ for $[Ags]$ imply that $\vdash\square X \phi\to X\square \phi$, which is a theorem that we will refer to by $(NX)$.
    \item Necessitation of $Y$ and $\square$, together with $(In2)$, entails that $(NX)$ implies that $\vdash Y\square  \phi\to \square Y \phi$, a theorem which we call $(NY)$. 
    \item Schema $(GA)$ and axiom $(AgsPC_n)$ imply that
    \scriptsize
    \[\vdash\bigwedge_{1\leq k\leq n}\Diamond \left(\left(\bigwedge_{1\leq i \leq k-1}\lnot \phi_i\right) \land [\alpha] \phi_k\right)\to \bigvee_{1\leq k\leq n} \phi_k,\] \normalsize which is a theorem that we call $(APC)_n$. This theorem will encode the fact that at each moment, each agent will have at most $n$ choices of action to decide from. 
    \item Necessitation of $Y$ and $K_\alpha$,
    together with $(In2)$, entails that $(NoF)$ implies that $\vdash YK_\alpha  \phi\to K_\alpha Y \phi$. 
\end{itemize}  
    
\end{remark}

\begin{definition}[Kripke-exstit frames]\label{ledbetter}

A tuple $\mathcal{F}=\langle W, R_\square, R_X, R_Y ,\mathtt{Choice}, \{\mathtt{\approx}_\alpha\}_{\alpha\in Ags}\rangle$ is called a Kripke-exstit frame iff $W$ is a set of possible worlds and

\begin{enumerate}

\item $R_\square$ is an equivalence relation over $W$. For $w\in W$, the class of $w$ under $R_\square$ is denoted by $\overline{w}$.

\item $R_X$ and $R_Y$ are serial and deterministic relations on $W$ that fulfill the following conditions \begin{itemize}
    \item $(\mathtt{Inverse})$ $R_X\circ R_Y=Id$ and $R_Y\circ R_X=Id$.
    \item $(\mathtt{NX})_K$  $R_\square\circ R_X \subseteq R_X \circ R_\square$.\footnote{Observe that with $(\mathtt{Inverse})$ this condition implies $(\mathtt{NY})_K$ $:  R_Y \circ R_\square \subseteq R_\square\circ R_Y$. }
\end{itemize} 

\item $\mathtt{Choice}$ is a function satisfying the following properties: 
\begin{itemize}
    \item It assigns to each $\alpha\in Ags$ 
    a partition $\mathtt{Choice}_\alpha$ of $W$ given by an equivalence relation which we will denote by $R_\alpha$. For $w\in W$, the class of $w$ in the partition $\mathtt{Choice}_\alpha$ is denoted by $\mathtt{Choice}_\alpha(v)$.
    \item It assigns to the grand coalition $Ags$ a partition $\mathtt{Choice}_{Ags}$ of $W$ such that $\mathtt{Choice}_{Ags}(v)=\bigcap\limits_{\alpha \in Ags} \mathtt{Choice}_\alpha(v)$ for each $v\in W$. We denote the equivalence relation defining such partition $R_{Ags}$. 
    
\end{itemize}
Moreover, $\mathtt{Choice}$ must satisfy the following constraints:
\begin{itemize}
\item $(\mathtt{SET})_K$ For every $w \in W$, we have that $\mathtt{Choice}_\alpha(w)\subseteq \overline{w}$ for every $\alpha\in Ags$. This implies that the set $\{ \mathtt{Choice}_\alpha (v);v\in \overline{w}\}$ is a partition of $\overline{w}$ for every $\alpha\in Ags$, which we will denote by $\mathtt{Choice}_\alpha^{\overline{w}}$. Similarly, it implies that $\mathtt{Choice}_{Ags}(w)\subseteq \overline{w}$ and that the set $\{ \mathtt{Choice}_{Ags} (v);v\in \overline{w}\}$ is a partition of $\overline{w}$, which we will denote by $\mathtt{Choice}_{Ags}^{\overline{w}}$. 
\item $\mathtt{(IA)_K}$ For $w\in W$, we have that each function $s:Ags\to \mathcal{P}(\overline{w})$ that maps $\alpha$ to a member of $\mathtt{Choice}_\alpha^{\overline{w}}$ is such that $\bigcap_{\alpha \in Ags} s(\alpha) \neq \emptyset$. 
\item $\mathtt{(NA)_K}$ For $\alpha \in Ags$ and $w\in W^\Lambda$, $R_\square\circ R_X\circ R_{\alpha}\subseteq R_X\circ R_{\alpha}.$
    \item $\mathtt{(NAgs)_K}$ For $w\in W^\Lambda$, $R_\square\circ R_X\circ R_{Ags}\subseteq R_X\circ R_{Ags}.$

\end{itemize}

\item For each $\alpha \in Ags$, $\approx_\alpha$ is an (epistemic) equivalence relation on $W$ that satisfies the following constraints: 
\begin{itemize}
\item $\mathtt{(Unif-H)_K}$ Let $w_1, w_2\in W$ such that there exist $v\in \overline{w_1}$ and $u\in \overline{w_2}$ with $v\approx_\alpha u$. Then for every $v'\in \overline{w_1}$, there exists $u'\in \overline{w_2}$ such that $v'\approx_\alpha u'$.
\item $\mathtt{(NoF)_K}$ For $\alpha\in Ags$, we have that $\approx_\alpha\circ R_X\subseteq R_X\circ\approx_\alpha$. 

\end{itemize}
\end{enumerate}
\end{definition}

\begin{remark}

Frames for which the group-action condition in item 3 of the definition above is relaxed to $\mathtt{Choice}_{Ags}(v)\subseteq\bigcap\limits_{\alpha \in Ags} \mathtt{Choice}_\alpha(v)$ for each $w\in W$ and $v\in \overline{w}$ are called \emph{\textbf{super-additive}}.

Frames where for every $w\in W$ the cardinalities of the partitions $\mathtt{Choice}_{Ags}^{\overline{w}}$ and $\mathtt{Choice}_{\alpha}^{\overline{w}}$ (for every $\alpha\in Ags$) are at most $n$ will be called $n$-frames. 

\end{remark}

\begin{definition}

A Kripke-exstit model $\mathcal {M}$ consists of the tuple that results from adding a valuation function $\mathcal{V}$ to a Kripke-exstit frame, where $ \mathcal{V}: P\to 2^{T \times H}$ assigns to each atomic proposition a set of worlds. The semantics for the formulas of $\mathcal {L}_{\textsf{KX}}$ is defined recursively by the pertinent truth conditions, mirroring Definition \ref{models KCSTIT}. If we add a valuation like this to a tuple defining a super-additive frame, we will also refer to the model as super-additive. If we add a valuation like this to a tuple defining an $n$-frame, we refer to the model as an $n$-model.  

\end{definition}

\begin{proposition}[Soundness]\label{soundness} The proof system $\Lambda_n$ is sound with respect to the class of Kripke-exstit $n$-models.
\end{proposition}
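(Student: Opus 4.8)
The plan is to verify soundness in the standard way: show that every axiom of $\Lambda_n$ is valid on the class of Kripke-exstit $n$-models, and that every rule of inference preserves validity. The rules are routine and I would dispatch them first. \emph{Modus Ponens} and uniform \emph{Substitution} preserve validity for the usual reasons (validity is closed under uniform substitution since the atoms are treated uniformly by the valuation), and \emph{Necessitation} for each of $\square$, $[\alpha]$, $[Ags]$, $X$, $Y$, and $K_\alpha$ preserves validity because each corresponding truth condition is a (possibly functional) universal quantification over accessible worlds: a formula true at every world of every model forces its necessitation to be true everywhere as well. The bulk of the work is therefore the axiom schemata, and for most of them validity holds by direct correspondence with the frame conditions of Definition \ref{ledbetter}.

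For the routine axioms I would read off the correspondents one by one. The $\mathbf{S5}$ schemata for $\square$, $[\alpha]$, $[Ags]$, and $K_\alpha$ hold because $R_\square$, $R_\alpha$, $R_{Ags}$, and $\approx_\alpha$ are equivalence relations. The temporal axioms $(In1)$, $(In2)$, $(DET.S.X)$, and $(DET.S.Y)$ follow from $R_X$ and $R_Y$ being serial and deterministic together with $(\mathtt{Inverse})$: functionality yields the self-dual determinacy equivalences, while $R_X\circ R_Y=Id=R_Y\circ R_X$ gives the cancellations $YX\phi\leftrightarrow\phi$ and $XY\phi\leftrightarrow\phi$. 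Next, $(SET)$ follows from $(\mathtt{SET})_K$, i.e. $\mathtt{Choice}_\alpha(w)\subseteq\overline{w}$ (equivalently $R_\alpha\subseteq R_\square$); $(GA)$ follows from $\mathtt{Choice}_{Ags}(w)=\bigcap_{\alpha}\mathtt{Choice}_\alpha(w)\subseteq\mathtt{Choice}_\alpha(w)$ (equivalently $R_{Ags}\subseteq R_\alpha$); $(NA)$ and $(NAgs)$ are the correspondents of $\mathtt{(NA)_K}$ and $\mathtt{(NAgs)_K}$; $(NoF)$ follows from $\mathtt{(NoF)_K}$, using the bijectivity of $R_X$ supplied by $(\mathtt{Inverse})$ to move the commutation $\approx_\alpha\circ R_X\subseteq R_X\circ\approx_\alpha$ into the form needed to validate $K_\alpha X\phi\to XK_\alpha\phi$; and $(Unif\text{-}H)$ is validated by its intended correspondent $\mathtt{(Unif\text{-}H)_K}$.

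The hard part will be the two combinatorial axioms $(IA)$ and $(AgsPC_n)$, which do not reduce to a single binary-relation inclusion. For $(IA)$, given a world $w$ satisfying $\bigwedge_i\Diamond[\alpha_i]p_i$, I would pick for each $i$ a world $v_i\in\overline{w}$ with $v_i\models[\alpha_i]p_i$, so that $p_i$ holds throughout the cell $\mathtt{Choice}_{\alpha_i}(v_i)$; defining the selection function $s$ that sends $\alpha_i$ to $\mathtt{Choice}_{\alpha_i}(v_i)$ and the remaining agents to arbitrary cells, condition $\mathtt{(IA)_K}$ yields a world $u\in\bigcap_{\alpha}s(\alpha)\subseteq\overline{w}$; since $u$ lies in the same $\alpha_i$-cell as $v_i$ and the $p_i$ are atomic, $u\models[\alpha_i]p_i$ for all $i$, whence $w\models\Diamond\bigwedge_i[\alpha_i]p_i$. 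For $(AgsPC_n)$, I would argue by pigeonhole on the choice cells: assume $w$ satisfies the antecedent but falsifies every $\phi_k$, choose for each $k$ a world $v_k\in\overline{w}$ with $v_k\models\big(\bigwedge_{i<k}\lnot\phi_i\big)\wedge[Ags]\phi_k$, and set $C_k=\mathtt{Choice}_{Ags}(v_k)$, on which $\phi_k$ holds everywhere. If $C_i=C_k$ with $i<k$, then $\phi_i$ would hold at $v_k$, contradicting $v_k\models\lnot\phi_i$, so $C_1,\dots,C_n$ are pairwise distinct; moreover $\mathtt{Choice}_{Ags}(w)$ differs from every $C_k$, since otherwise $\phi_k$ would hold at $w$. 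This exhibits $n+1$ distinct cells in $\mathtt{Choice}_{Ags}^{\overline{w}}$, contradicting the $n$-model bound, so $w\models\bigvee_k\phi_k$. These two arguments — the nonempty-intersection use of $\mathtt{(IA)_K}$ and the counting of choice cells — are the only genuinely non-mechanical steps, and I would present them in detail while treating the remaining schemata as routine correspondence checks.
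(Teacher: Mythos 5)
Your proposal is correct and follows exactly the route the paper takes, namely checking that each axiom is validated by its corresponding relational property in Definition~\ref{ledbetter} and that the rules preserve validity; the paper simply declares this ``routine'' and omits all details. Your worked-out arguments for $(IA)$ (via a selection function and $(\mathtt{IA})_K$) and for $(AgsPC_n)$ (via counting the at most $n$ cells of $\mathtt{Choice}_{Ags}^{\overline{w}}$) are sound and supply precisely the two non-mechanical verifications the paper leaves implicit.
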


It is clear that each axiom corresponds to the appropriate relational property in the definition of Kripke-exstit $n$-models. Therefore, the proof of soundness is routine.


\subsection{Canonical Kripke-exstit models}

We will show that the proof system $\Lambda_n$ is complete with respect to the class of super-additive Kripke-exstit $n$-models.

The strategy is to build a canonical structure from the syntax.  
\begin{definition}[Canonical Structure]
\label{dicaprio} For $n\geq 1$, 
the tuple \[\mathcal{M}=\langle W^\Lambda, R_\square, R_X, R_Y, \mathtt{Choice}, \{\mathtt{\approx}_\alpha\}_{\alpha\in Ags},\mathcal{V} 
\rangle \] is called a canonical structure for $\Lambda_n$ iff

\begin{itemize}
\item $W^\Lambda=\{w ;w  \mbox{ is a } \Lambda\textnormal{-MCS}\}$. 

\item $R_\square$ is a relation on $W^\Lambda$ defined by the following rule:  for $w,v\in W^\Lambda$, $wR_{\square}v$ iff for every $\phi$, $\square\phi\in w\Rightarrow \phi\in v$. For $w\in W^\Lambda$, the set $\{v\in W^\Lambda ;wR_\square v \}$ is denoted by $\overline{w}$. 

\item $R_X$ is a relation on $W^\Lambda$ defined by the following rule:  for $w,v\in W^\Lambda$, $wR_{X}v$ iff for every $\phi$, $X\phi\in w\Rightarrow \phi\in v$. For $w\in W^\Lambda$, the set $\{v\in W^\Lambda ;wR_X v \}$ is denoted by $x[w]$. 

\item $R_Y$ is a relation on $W^\Lambda$ defined by the following rule:  for $w,v\in W^\Lambda$, $wR_{Y}v$ iff for every $\phi$, $Y\phi\in w\Rightarrow \phi\in v$. For $w\in W^\Lambda$, the set $\{v\in W^\Lambda ;wR_Y v \}$ is denoted by $y[w]$. 
\item $\mathtt{Choice}$ is a function that fulfills the following requirements: \begin{itemize}
    \item It assigns to each $\alpha$ a subset of $\mathcal{P}(W^\Lambda)$, which will be denoted by $\mathtt{Choice}_\alpha$ and defined as follows: let $R_\alpha$ be a relation on $W^\Lambda$ such that for $w,v\in W^\Lambda$, $wR_{\alpha}v$ iff for every $\phi$, $[\alpha ]\phi\in w\Rightarrow \phi\in v$; if we take $\mathtt{Choice}_\alpha(v)=\{u\in W^\Lambda ; vR_\alpha u\}$, then we set $\mathtt{Choice}_\alpha=\bigcup_{v\in W^\Lambda}\mathtt{Choice}_\alpha(v)$.
    \item It assigns to the grand coalition $Ags$ a subset of $\mathcal{P}(W^\Lambda)$ denoted by $\mathtt{Choice}_{Ags}$, and defined as follows: let $R_{Ags}$ be a relation on $W^\Lambda$ such that for $w,v\in W^\Lambda$, $wR_{Ags}v$ iff for every $\phi$, $[Ags ]\phi\in w\Rightarrow \phi\in v$;  we set $\mathtt{Choice}_{Ags}$ in an analogous way to the $\mathtt{Choice}_\alpha$.
\end{itemize} 

\item For each $\alpha \in Ags$, $\approx_\alpha$ is an epistemic relation on $W^\Lambda$ given by the following rule:  for $w,v\in W^\Lambda$, $w\approx_{\alpha}v$ iff for every $\phi$, $K_\alpha\phi\in w\Rightarrow \phi\in v$. 


\item $\mathcal{V}$ is the canonical valuation, defined such that $w\in \mathcal{V}(p)$ iff $p\in w$. 

\end{itemize}
\end{definition}

\begin{proposition}\label{can}
For $n\geq 1$, the canonical structure $\mathcal{M}$ for $\Lambda_n$ is a Kripke-exstit super-additive model where the cardinalities of the partitions $\mathtt{Choice}_{Ags}$ and $\mathtt{Choice}_{\alpha}$ (for every $\alpha\in Ags$) are at most $n$ --therefore an $n$-model. 
\end{proposition}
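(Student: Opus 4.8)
The plan is to verify, clause by clause, that the canonically defined relations of Definition \ref{dicaprio} satisfy every requirement of Definition \ref{ledbetter} in its \textbf{super-additive}, $n$-bounded form. The two recurring tools are Lindenbaum's lemma and the Existence Lemma for maximal consistent sets: for a relation $R$ defined from a normal operator $\nabla$ by ``$wRv$ iff $\nabla\phi\in w\Rightarrow\phi\in v$'', if $\nabla\phi\notin w$ then $\{\neg\phi\}\cup\{\psi:\nabla\psi\in w\}$ is consistent and extends to some $v$ with $wRv$ and $\phi\notin v$. I would first dispatch the structural clauses. That $R_\square$, each $R_\alpha$, $R_{Ags}$, and each $\approx_\alpha$ are equivalence relations is the textbook consequence of the corresponding $\mathbf{S5}$ schemata (reflexivity from $(T)$, transitivity from $(4)$, symmetry from $(5)$). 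That $R_X$ and $R_Y$ are serial and deterministic follows from $(DET.S.X)$ and $(DET.S.Y)$, which make the defining operators self-dual and hence functional and total, while $(\mathtt{Inverse})$ is read off directly from $(In1)$ and $(In2)$.

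Next I would handle the inclusion and commutation constraints through two patterns. The inclusions come from the \emph{axiom-to-relation} principle: whenever $\vdash\nabla_1\phi\to\nabla_2\phi$, the relation defined by $\nabla_2$ is contained in the one defined by $\nabla_1$. Thus $(SET)$ yields $(\mathtt{SET})_K$, giving $\mathtt{Choice}_\alpha(w)\subseteq\overline{w}$, and $(GA)$ yields $R_{Ags}\subseteq R_\alpha$ for each $\alpha$, which is exactly the super-additive inclusion $\mathtt{Choice}_{Ags}(v)\subseteq\bigcap_{\alpha}\mathtt{Choice}_\alpha(v)$. The commutation constraints $(\mathtt{NX})_K$, $(\mathtt{NA})_K$, $(\mathtt{NAgs})_K$, and $(\mathtt{NoF})_K$ all have the shape $R_1\circ R_2\subseteq R_2\circ R_1$; for each I would take $w\,R_1\,v\,R_2\,u$, use the relevant bridging theorem ($(NX)$, $(NA)$, $(NAgs)$, and the $(NoF)$-theorem $\vdash K_\alpha X\phi\to XK_\alpha\phi$, respectively) to show that the set forcing a completing world along the other path is consistent, and then invoke the Existence Lemma to produce the witness $t$ with $w\,R_2\,t\,R_1\,u$. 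The condition $\mathtt{(Unif\text{-}H)}_K$ is of the same flavour but relates two $R_\square$-classes; I would obtain it by an Existence-Lemma argument driven by $(Unif\text{-}H)$, i.e.\ $\Diamond K_\alpha p\to K_\alpha\Diamond p$.

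The genuine obstacles are the independence-of-agency condition $\mathtt{(IA)_K}$ and the cardinality bound. For $\mathtt{(IA)_K}$, given representatives $v_1,\dots,v_m\in\overline{w}$ (one per agent, $Ags$ being finite), I must produce a single MCS $u$ with $v_i\,R_{\alpha_i}\,u$ for all $i$, so that $u\in\bigcap_i\mathtt{Choice}_{\alpha_i}(v_i)$. This reduces to showing that $\bigcup_i\{\psi:[\alpha_i]\psi\in v_i\}$ is consistent. Since each such $[\alpha_i]\psi\in v_i\subseteq\overline{w}$, we have $\Diamond[\alpha_i]\psi\in w$; using the Substitution rule to pass from the propositional-constant form of $(IA)$ to arbitrary formulas, $(IA)$ then forces $\Diamond(\bigwedge_i[\alpha_i]\phi_i)\in w$ for any finite selection, so the union cannot be inconsistent. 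The delicate bookkeeping is precisely this passage to arbitrary formulas and the confinement of the witness to the single $\square$-class $\overline{w}$. For the $n$-bound I would argue contrapositively: if some $\mathtt{Choice}_\alpha^{\overline{w}}$ (respectively $\mathtt{Choice}_{Ags}^{\overline{w}}$) had more than $n$ cells, I could pick $n+1$ representatives together with formulas separating them and feed these into $(APC)_n$ (respectively $(AgsPC_n)$) to derive a contradiction, this being the pigeonhole content of those schemata.

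I expect $\mathtt{(IA)_K}$ to be the main difficulty. The Existence-Lemma arguments for the commutation conditions are essentially mechanical once the bridging theorems are in hand, whereas simultaneously meeting all agents' choice constraints within one $\square$-class requires the genuinely multi-agent consistency argument powered by $(IA)$, and getting the substitution and localization bookkeeping right is where the subtlety lies. The cardinality bound is conceptually simple but combinatorially fiddly, so I would treat it immediately after $\mathtt{(IA)_K}$, since both concern the fine structure of the choice partitions.
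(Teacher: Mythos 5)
Your plan matches the paper's proof essentially step for step: S5 schemata for the equivalence relations, $(DET.S.X)/(DET.S.Y)$ and $(In1)/(In2)$ for functionality and $(\mathtt{Inverse})$, axiom-to-relation inclusions for $(\mathtt{SET})_K$ and super-additivity via $(GA)$, witness constructions driven by the bridging theorems for the commutation conditions and $(\mathtt{Unif\text{-}H})_K$, the $(IA)$-powered joint consistency argument (with the $\square$-formulas of $w$ adjoined to localize the witness in $\overline{w}$) for $(\mathtt{IA})_K$, and the pigeonhole use of $(AgsPC_n)$/$(APC)_n$ for the cardinality bound. You also correctly single out $(\mathtt{IA})_K$ as the genuinely multi-agent step, which is exactly where the paper spends most of its effort; no substantive deviation or gap.
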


\begin{proof}
We want to show that $\langle W^\Lambda, R_\square, R_X, R_Y, \mathtt{Choice},\{\mathtt{\approx}_\alpha\}_{\alpha\in Ags}\rangle$ is a Kripke-exstit super-additive  frame, which amounts to showing that the tuple validates the four items in the definition of Kripke-exstit super-additive  frames. 

\begin{enumerate}

\item It is clear that $R_\square$ is an equivalence relation, since $\Lambda$ includes the $\mathbf{S5}$ axioms for $\square$. 

\item Seriality and determinicity of $R_X$ and $R_Y$ come from the axioms $(DET.S.X)$ and $(DET.S.Y)$, respectively, by the following arguments. For seriality, let $v\in W^\Lambda$. We will show that   
$z':=\{\psi ; Y\psi\in v \}$ and $x':=\{\psi ; X\psi\in v \}$ are consistent. We show it for $z'$ and assume and analogous argument for $x'$.
Suppose for a contradiction that $z'$ is not consistent. Then there exists a set $\{\psi_1,\dots,\psi_m\}$ of formulas of $\mathcal{L}_{\textsf{KX}}$ such that $\{\psi_1,\dots,\psi_m\}\subseteq \{\psi ; Y\psi\in v \}$ and $\vdash \psi_1\wedge\dots\wedge \psi_m\to \bot$ (a). Now, the fact that $\{\psi_1,\dots,\psi_m\}\subseteq \{\psi ; Y\psi\in v \}$ means that $Y\psi_i\in v'$ for every $1\leq i \leq m$; by necessitation of $Y$ and its distributivity over conjunction, we get that (a) implies that $\vdash Y\psi_1\wedge\dots\wedge Y\psi_m\to Y\bot$, which by $(DET.S.Y)$ implies that $\vdash Y\psi_1\wedge\dots\wedge Y\psi_m\to \langle Y\rangle\bot$, but this is a contradiction, since $
v$ is a $\Lambda$-MCS which includes $Y\psi_1\wedge\dots\wedge Y\psi_m$. Let $z$ be the $\Lambda$-MCS that includes $z'$, and $x$ be the $\Lambda$-MCS that includes $x'$. It is clear that $vR_Y z$ and that $vR_X x$. For determinicity, suppose that, besides the existent $z$ and $x$ we showed above, there exist $z_*$ and $x_*$ such that $vR_Y z_*$ and $vR_X x_*$. We show that $z_*=z$ and $x_*=x$. We show it for $z_*$ and assume an analogous argument for $x_*$. For any $\phi$ of $\mathcal{L}_{\textsf{KX}}$, $\phi\in z_*$ iff $\langle Y \rangle \phi\in v$ iff (using $(DET.S.Y)$) $Y\phi \in v$ iff $\phi \in z$. Therefore, $z_*=z$.

Axioms $(In1)$ and $(In2)$ ensure that $R_X\circ R_Y=Id$ and that $R_Y\circ R_X=Id$, respectively, by the following arguments. We show that $R_X\circ R_Y=Id$ and assume an analogous argument for $R_Y\circ R_X$. Let $v\in W^{\Lambda}$, and let $z$ be the unique $\Lambda$-MCS such that $vR_Y z$. We show that $zR_X v$: assume that $X\phi \in z$. Axiom $(DET.S.X)$ and the fact that $z$ is a $\Lambda$-MCS implies that $X\lnot\phi\not\in z$. Suppose for a contradiction that $\phi\not\in v$. Since $v$ is a $\Lambda$-MCS, $\lnot\phi\in v$, so that axiom $(In1)$ implies that $YX\lnot\phi\in v$. By definition of $z$, this implies that $X\lnot\phi\in z$, which is a contradiction. Therefore $\phi\in v$ and thus $zR_X v$. 

In turn, $(NX)$ guarantees that $R_\square\circ R_X \subseteq R_X \circ R_\square$ by the following arguments.
Assume that $wR_\square\circ R_X v$ via $y$.  Let $z$ be the unique $\Lambda$-MCS such that $vR_Y z$. We have seen that $zR_X v$. Now we show that $wR_\square z$: assume that $\square\phi\in w$. Axiom $(In2)$, axiom $(K)$ for $\square$, and necessitation for $\square$ imply that $\square XY\phi\in w$. Axiom $(NX)$ then implies that $X\square Y\phi\in w$. By construction, this implies that $\square Y\phi\in y$, which in turn implies that $Y\phi\in v$ and thus --by definition-- that $\phi\in z$. Therefore, we have that $z$ is such that $wR_\square z$ and $zR_X v$, which means that $w R_X\circ R_\square v$. Therefore, we have that $R_\square\circ R_X \subseteq R_X \circ R_\square$. 

\item Since $\Lambda$ includes the $\mathbf{S5}$ axioms for $[\alpha ]$ (for each $\alpha\in Ags$), we have that $R_\alpha$ is an equivalence relation for each $\alpha\in Ags$. 

We have to show that the condition of \emph{\textbf{super-additivity}} is satisfied, which amounts to showing that $R_{Ags}\subseteq \bigcap_{\alpha \in Ags} R_{\alpha}$. It is clear that schema $(GA)$ entails precisely this: suppose that $wR_{Ags} v$, and assume that $[\alpha]\phi\in w$. Because of $(GA)$, this last thing implies that $[Ags]\phi\in w$ as well, so that the supposition that $wR_{Ags} v$  yields that $\phi \in v$. Therefore, we have that $w R_{\alpha}v$, but since $\alpha$ was taken arbitrarily, we have shown that $R_{Ags}\subseteq \bigcap_{\alpha \in Ags} R_{\alpha}$.

We must verify that $\mathcal{M}$ validates the constraints $(\mathtt{SET})_K$ $(\mathtt{IA})_K$, $(\mathtt{NA})_K$, and $(\mathtt{NAgs})_K$: \begin{itemize}
\item $(\mathtt{SET})_K$ Since $\Lambda$ includes $\square \phi\to [\alpha ]\phi$ as an axiom schema, we have that for every $w\in W^\Lambda$, $\mathtt{Choice}_\alpha (w)\subseteq \overline{w}$ for every $\alpha \in Ags$. 
    \item $(\mathtt{IA})_K$ In order to show it, we need two intermediate results: \begin{enumerate}[a)]
\item For a fixed $w_*\in W^\Lambda$, we have that $w\in\overline{w_*}$ iff $\{\square \psi; \square\psi\in w_*\}\subseteq w$. 

$(\Rightarrow)$ Let $w\in\overline{w_*}$ (which means that $w_*R_\square w$). Take $\phi$ of $\mathcal{L}_{\textsf{Kx}}$ such that $\square\phi\in w_*$. Since $w_*$ is closed under \emph{Modus Ponens}, axiom $(4)$ for $\square$ implies that $\square\square\phi\in w_*$ as well. Therefore, by the definition of $R_\square$, we get that $\square\phi \in w$. 

$(\Leftarrow)$ We assume that $\{\square \psi; \square\psi\in w_*\}\subseteq w$. Take $\phi$ of $\mathcal{L}_{\textsf{KX}}$ such that $\square\phi\in w_*$. By our assumption, we get that $\square \phi\in W$. Since $w$ is closed under \emph{Modus Ponens}, axiom $(T)$ for $\square$ implies that $\phi\in w$ as well. In this way, we have that the fact that $\square \phi\in w_*$ implies that $\phi\in w$, which means that $w_*R_\square w$ and $w\in\overline{w_*}$. 

\item For a fixed $w_*\in W^\Lambda$ and $s:Ags\to \mathcal{P}(\overline{w_*})$ a function that maps $\alpha$ to a member of $\mathtt{Choice}^{\overline{w_*}}_\alpha$ such that $v_{s(\alpha)}\in s(\alpha)$, we have that $w\in s(\alpha)$ iff $\Delta_{s(\alpha)}=\{[\alpha ] \psi ;[\alpha ]\psi\in v_{s(\alpha)}\}\subseteq w$.

$(\Rightarrow)$ Let $w\in s(\alpha)$ (which means that $v_{s(\alpha)}R_\alpha w$). Take $\phi$ of $\mathcal{L}_{\textsf{KX}}$ such that $[\alpha ]\phi\in v_{s(\alpha)} $. Since $v_{s(\alpha)}$ is closed under \emph{Modus Ponens}, schema $(4)$ for $[\alpha ]$ implies that $[\alpha ][\alpha ]\phi\in v_{s(\alpha)}$ as well. Therefore, by definition of $R_\alpha$, we get that $[\alpha ]\phi \in w$. 

$(\Leftarrow)$ We assume that $\Delta_{s(\alpha)}=\{[\alpha ] \psi ; [\alpha ]\psi\in v_{s(\alpha)}\}\subseteq w$. Take $\phi$ of $\mathcal{L}_{\textsf{KX}}$ such that $[\alpha ]\phi\in v_{s(\alpha)}$. By our assumption, we get that $[\alpha ]\phi\in w$. Since $w$ is closed under \emph{Modus Ponens}, the $(T)$ axiom for $[\alpha ]$ implies that $\phi\in w$ as well. In this way, we have that the fact that $[\alpha ] \phi\in v_{s(\alpha)}$ implies that $\phi\in w$, which means that $v_{s(\alpha)}R_\alpha^{\overline{w_*}} w$ and $w\in s(\alpha)$.
\end{enumerate}

Next, we will show that for a fixed $w_*\in W^\Lambda$ and $s:Ags\to \mathcal{P}(\overline{w_*})$ a function that maps $\alpha$ to a member of $\mathtt{Choice}^{\overline{w_*}}_\alpha$ such as in item b) above, we have that $\bigcup_{\alpha \in Ags}\Delta_{s(\alpha)}\cup \{\square \psi ; \square\psi\in w_*\}$ is $\Lambda$-consistent.

First, we will show that $\bigcup_{\alpha\in Ags} \Delta_{s(\alpha)}$ is consistent. Suppose that this is not the case. Then there exists a set $\{\phi_1,\dots,\phi_n\}$ of formulas of $\mathcal{L}_{\textsf{KX}}$ such that $[\alpha_{i} ]\phi_i\in v_{s(\alpha_i)}$ for every $1\leq i \leq n$ and \begin{equation}\label{paul}
\vdash([\alpha_{1} ]\phi_1\wedge\dots\wedge [\alpha_{n}  ]\phi_n)\to \bot.
\end{equation}

Without loss of generality, we assume that $\alpha_i\neq\alpha_j$ for all $j\neq i$ such that $j,i\in \{1,\dots,n\}$ --this assumption hinges on the fact that any stit operator distributes over conjunction. Notice that the fact that $[\alpha_{i} ]\phi_i\in v_{s(\alpha_i)}$ for every $1\leq i \leq n$ implies that $\Diamond[\alpha_{i} ]\phi_i \in w_*$ for every $1\leq i \leq n$. 
Since $w_*$ is closed under conjunction, we also have that $\Diamond [\alpha_{1} ]\phi_1\wedge\dots\wedge\Diamond[\alpha_{n} ]\phi_n\in w_*$.

By axiom $(IA)$, we have that \begin{equation}\label{paul2} 
\vdash\Diamond [\alpha_{1} ]\phi_1\wedge\dots\wedge\Diamond[\alpha_{n} ]\phi_n\to \Diamond( [\alpha_{1} ]\phi_1\wedge\dots\wedge[\alpha_{n} ]\phi_n).
\end{equation} Therefore, equations \eqref{paul2} and \eqref{paul},  imply that \begin{equation}\label{paul3} 
\vdash\Diamond [\alpha_{1} ]\phi_1\wedge\dots\wedge\Diamond[\alpha_{n} ]\phi_n\to \Diamond \bot.
\end{equation}
But this is a contradiction, since we had seen that $\Diamond [\alpha_{1} ]\phi_1\wedge\dots\wedge\Diamond[\alpha_{n} ]\phi_n\in w_*$, and $w_*$ is a $\Lambda$-MCS. Therefore, $\bigcup_{\alpha\in Ags} \Delta_{s(\alpha)}$ is consistent.

Next, we show that the union $\bigcup_{\alpha\in Ags} \Delta_{s(\alpha)}\cup  \{\square\psi ; \square\psi \in w_*\}$ is also consistent. Suppose that this is not the case. Since $\bigcup_{\alpha\in Ags} \Delta_{s(\alpha)}$ is consistent, there must exist sets $\{\phi_1,\dots,\phi_n\}$ and $\{\theta_1,\dots,\theta_m\}$ of formulas of $\mathcal{L}_{\textsf{KX}}$ such that $[\alpha_i ]\phi_i\in v_{s(\alpha_i)}$ for every $1\leq i \leq n$, $\square\theta_i\in w_*$ for every $1\leq i\leq m$, and \begin{equation}\label{newman} \vdash([\alpha_1 ]\phi_1\wedge\dots\wedge [\alpha_n]\phi_n)\wedge (\square\theta_1\wedge\dots\wedge\square\theta_m)\to\bot. 
\end{equation} Let $\theta=\theta_1\land\dots \land\theta_m$. Since $\square$ distributes over conjunction, we have that  $\vdash\square\theta\leftrightarrow\square\theta_1\wedge\dots\wedge\square\theta_m$, where it is important to mention that since $w_*$ is a $\Lambda$-MCS, then $\square\theta\in w_*$. In these terms, \eqref{newman} implies that \begin{equation}\label{newman2}\vdash([\alpha_1 ]\phi_1\wedge\dots\wedge [\alpha_n ]\phi_n)\to\lnot \square \theta.\end{equation} Once again, we assume without loss of generality that $\alpha_i\neq\alpha_j$ for all $j\neq i$ such that $j,i\in \{1,\dots,n\}$. Analogous to the procedure we used to show that $\bigcup_{\alpha\in Ags} \Delta_{s(\alpha)}$ is consistent, \eqref{newman2} implies that \begin{equation}\label{newman3} 
\vdash\Diamond [\alpha_{1} ]\phi_1\wedge\dots\wedge\Diamond[\alpha_{n} ]\phi_n\to \Diamond \lnot \square\theta.
\end{equation}

This entails that $\Diamond \lnot \square\theta\in w_*$, but this is a contradiction, since the fact that $\square\theta\in w_*$ implies with axiom $(4)$ for $\square$ that $\square\square\theta\in w_*$. 

Now, let $u_*$ be the $\Lambda$-MCS that includes $\bigcup_{\alpha\in Ags} \Delta_{s(\alpha)}\cup  \{\square\psi ; \square\psi \in w_*\}$. By the intermediate result  a), it is clear that $u_*\in \overline{w_*}$. By the intermediate result b), it is clear that $u_*\in s(\alpha)$ for every $\alpha\in Ags$. Therefore, we have shown that for a fixed $w_*\in W$, we have that each function $s:Ags\to \mathcal{P}(\overline{w_*})$ that maps $\alpha$ to a member of $\mathtt{Choice}^{\overline{w_*}}_\alpha$ is such that $\bigcap_{\alpha \in Ags} s(\alpha) \neq \emptyset$, which means that $\mathcal{M}$ validates the constraint $(\mathtt{IA})_K$.

\item $\mathtt{(NA)_K}$ We want to show that for $\alpha \in Ags$, $R_\square\circ R_X\circ R_{\alpha}\subseteq R_X\circ R_{\alpha}.$ Let $\alpha \in Ags$ and $v, o\in W^\Lambda$ such that $v R_\square\circ R_X\circ R_{\alpha} o$, which means that there exist $v',o'\in W^\Lambda$ such that $v R_{\alpha}v'$, $v'R_Xo'$, and $o'R_\square o$. By a similar argument to those in the proof of item 2, we know that $z':=\{\psi ; Y\psi\in o \}$ is consistent. Let $z$ be the $\Lambda$-MCS that includes $z'$. As shown also in item 2, it is the case that $zR_X o$. Let us show that $v R_{\alpha} z$. Let $[\alpha]\phi \in v$. Axiom $(In2)$, axiom $(K)$ for $[\alpha]$, and necessitation for $[\alpha]$ imply that $[\alpha] XY\phi\in v$. Schema $(NA)$ then entails that $[\alpha] X\square Y\phi\in v$, and the assumption that $v R_{\alpha}v'$ then yields that $X\square Y\phi\in v'$. The assumption that $v'R_Xo'$ then yields that $\square Y\phi\in o'$, so that the assumption that $o'R_\square o$ gives that  $Y\phi\in o$, which by construction of $z$ implies that $\phi\in z$. Therefore $v R_{\alpha} z$, which with the previously shown fact that $zR_X o$ implies that $v R_X\circ R_{\alpha} o$. In this way, $R_\square\circ R_X\circ R_{\alpha}\subseteq R_X\circ R_{\alpha}.$

\item $\mathtt{(NAgs)_K}$ It can be shown in the same way as the above item, substituting $[\alpha]$ for $[Ags]$ and using axiom $(NAgs)$ instead of schema $(NA)$. 
\end{itemize}

Finally, we show that $card\left(\mathtt{Choice}_{Ags}\right) \leq n$. Let $w\in W^\Lambda$. Suppose for a contradiction that $card\left(\mathtt{Choice}_{Ags}\right) > n$. Take pairwise different $c_0,\dots,c_n\in \mathtt{Choice}_{Ags}^{\overline{w}}$, and take $w_i\in c_i$ for each $0\leq i \leq n$. According to item 5 of Lemma \ref{carajo} below, for every $1\leq i\leq n$, there must exist $\phi_i\in \mathcal{L}_{\textsf{KX}}$ such that $[Ags]\phi_i\in c_i$ and $\phi_i\not\in c_j$ for every $1\leq j\leq n$ such that $j\neq i$. This means that $\bigwedge\limits_{1\leq i \leq n}\lnot\phi_i\in w_0$ ($\star$) and that $[Ags]\phi_1\in w_1$, $(\lnot \phi_1\land[Ags]\phi_2)\in w_2,\dots, (\lnot \phi_1\land\dots \lnot \phi_{n-1}\land [Ags]\phi_n)\in w_n$. This means that $\bigwedge\limits_{1\leq k\leq n}\Diamond \left(\left(\bigwedge\limits_{1\leq i \leq k-1}\lnot \phi_i\right) \land [Ags] \phi_k\right)\in w_0$, which by $(AgsPC_n)$ implies that $\bigvee\limits_{1\leq k\leq n} \phi_k\in w_0$, but this is a contradiction to ($\star$).\footnote{Observe that we can use the same argument to show theorem $(APC)n$ implies that $card\left(\mathtt{Choice}_\alpha\right) \leq n$ for every $\alpha\in Ags$.}

\item Since the axiom system $\Lambda$ includes the $\mathbf{S5}$ axioms for $K_\alpha$ for each $\alpha\in Ags$, we have that $\approx_\alpha$ is an equivalence relation for each $\alpha\in Ags$. We must now verify that $\mathcal{M}$ validates the constraints $(\mathtt{Unif-H})_K$ and $(\mathtt{NoF})_K$.
\begin{itemize}

\item For $(\mathtt{Unif-H})_K$, fix $w_1,w_2\in W^\Lambda$. We assume that there exist $v\in \overline{w_1}$ and $u\in\overline{w_2}$ such that $v\approx_\alpha u$. Take $v'\in\overline{w_1}$.

We will show that $u''=\{\psi ; K_\alpha\psi\in v'\}\cup \{\square\psi ; \square\psi\in u \}$ is consistent. In order to do so, we will first show that $\{\psi ; K_\alpha\psi\in v'\}$ is consistent. Suppose for a contradiction that it is not consistent. Then there exists a set $\{\psi_1,\dots,\psi_n\}$ of formulas of $\mathcal{L}_{\textsf{KX}}$ such that $\{\psi_1,\dots,\psi_n\}\subseteq \{\psi ; K_\alpha\psi\in v'\}$ and $\vdash \psi_1\wedge\dots\wedge \psi_n\to \bot$ (a). Now, the fact that $\{\psi_1,\dots,\psi_n\}\subseteq \{\psi ; K_\alpha\psi\in v'\}$ means that $K_\alpha \psi_i\in v'$ for every $1\leq i \leq n$; by necessitation of $K_\alpha$ and its distributivity over conjunction, we get that (a) implies that $\vdash K_\alpha\psi_1\wedge\dots\wedge K_\alpha\psi_n\to K_\alpha\bot$, but this is a contradiction, since $
v'$ is a $\Lambda$-MCS which, being, includes $K_\alpha\psi_1\wedge\dots\wedge K_\alpha\psi_n$. 

Next, we show that $u''=\{\psi ; K_\alpha\psi\in v'\}\cup \{\square\psi ; \square\psi\in u \}$ is also consistent. Suppose for a contradiction that it is not consistent. Since $\{\psi ; K_\alpha\psi\in v'\}$ and $\{\square\psi ; \square\psi\in u \}$ are consistent, there must exist sets $\{\phi_1,\dots,\phi_n\}$ and $\{\theta_1,\dots,\theta_m\}$ of formulas of $\mathcal{L}_{\textsf{KX}}$ such that $\{\psi_1,\dots,\psi_n\}\subseteq \{\psi ; K_\alpha\psi\in v'\}$, $\square\theta_i\in w_2$ for every $1\leq i\leq m$, and $\vdash(\phi_1\wedge\dots\wedge\phi_n)\wedge (\square\theta_1\wedge\dots\wedge\square\theta_m)\to\bot$ (a). Let $\theta=\theta_1\land\dots \land\theta_m$  and $\phi=\phi_1\land\dots \land\phi_n$. Since $\square$ distributes over conjunction, we have that $\vdash\square\theta\leftrightarrow\square\theta_1\wedge\dots\wedge\square\theta_m$, where it is important to mention that since $u$ is a $\Lambda$-MCS, then $\square\theta\in u$ and $\square \square\theta\in u$ as well ($\star$). In this way, (a) implies that $\vdash\phi\to\lnot \square \theta$ and thus that $\vdash\Diamond\phi\to\Diamond\lnot\square\theta$ (b). Notice that the facts that $K_\alpha\psi_i\in v'$ for every $1\leq i\leq n$, that $K_\alpha$ distributes over conjunction, and that $v'$ is a $\Lambda$-MCS imply that $K_\alpha\psi\in v'$. The fact that $v'\in \overline {w_1}=\overline{v}$ implies that $\Diamond K_\alpha \psi \in v$, so that $(Unif-H)$ entails that $K_\alpha\Diamond\psi\in v$. Now, this last inclusion implies, with our assumption that $v\approx_\alpha u$, that $\Diamond\psi\in u$, which by (b) in turn yields that $\Diamond\lnot\square\theta\in u$, contradicting $(\star)$. Therefore, $u''$ is consistent. 

Finally, let $u'$ be the $\Lambda$-MCS that includes $u''$. It is clear from its construction that $u'\in\overline{u}=\overline{w_2}$ and that $v'\approx_\alpha u'$, 

With this, we have shown that $\mathcal{M}$ validates the constraint $(\mathtt{Unif-H})_K$.

\item $(\mathtt{NoF})_K$ Let $\alpha \in Ags$ and $w,v\in W^\Lambda$ such that $w \approx_\alpha\circ R_X v$ via $y$. By similar arguments to the ones used in item 2 of this proof, we know that $z':=\{\psi ; Y\psi\in v \}$ is consistent and that if we take $z$ to be the $\Lambda$-MCS that includes $z'$, then $zR_X v$.What remains to be shown is that $w\approx_\alpha z$: assume that $K_\alpha\phi\in w$. Axiom $(In2)$, schema $(K)$ for $K_\alpha$, and necessitation for $K_\alpha$ imply that $K_\alpha XY\phi\in w$. Axiom $(NoF)$ then implies that $XK_\alpha Y\phi\in w$. By construction, this implies that $K_\alpha Y\phi\in y$, which in turn implies that $Y\phi\in v$ and thus --by definition-- that $\phi\in z$. Therefore, we have that $z$ is such that $w\approx_\alpha z$ and $zR_X v$, which means that $w R_X\circ \approx_\alpha v$. Therefore, we have that $\approx_\alpha\circ R_X \subseteq R_X \circ \approx_\alpha$.

\end{itemize}



\end{enumerate}

\end{proof}

As is usual with canonical structures, our objective is to prove the so-called \emph{truth} lemma, which says that for every formula $\phi$ of $\mathcal{L}_{\textsf{KX}}$ and every $w\in W^\Lambda$, we have that $
\mathcal{M} ,w\Vdash \phi \ \textnormal{iff} \ \phi\in w.$ This is done by induction on $\phi$, and the inductive step for each modal operator requires previous results (such as the important \emph{existence} lemmas). These previous results are standard (Lemma \ref{carajo} below). 

\begin{lemma}[Existence] \label{carajo} 
Let $\mathcal{M}$ be the canonical structure for $\Lambda_n$. Let $w\in W^\Lambda$. For a given formula $\phi$ of $\mathcal L_{\textsf{KX}}$, the following hold: 
\begin{enumerate}
\item $X\phi \in w$ iff $\phi\in v$ for every $v\in W^\Lambda$ such that $wR_X v$.
\item $Y\phi \in w$ iff $\phi\in v$ for every $v\in W^\Lambda$ such that $wR_Y v$. 
\item  $\square \phi \in w$ iff $\phi\in v$ for every $v\in \overline{w}$. 
\item $[\alpha ]\phi\in w$ iff $\phi\in v$ for every $v\in \overline{w}$ such that $wR_\alpha v$.
\item $[Ags ]\phi\in w$ iff $\phi\in v$ for every $v\in \overline{w}$ such that $wR_{Ags} v$.
\item $K_\alpha\phi\in w$ iff $\phi\in v$ for every $v\in W^\Lambda$ such that $w\approx_\alpha v$.
\end{enumerate}
\end{lemma}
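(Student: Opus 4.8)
The plan is to establish all six biconditionals by a single template: the forward direction is immediate from the definitions in Definition \ref{dicaprio}, while the backward ("existence") direction is handled by the standard witnessing-MCS construction, with two small adjustments — one for the functional operators $X,Y$ and one for the coalition operators $[\alpha],[Ags]$, where the restriction of the quantifier to $\overline{w}$ must be justified. For every item the direction ($\Rightarrow$) is read straight off the canonical accessibility relation: if $\square\phi\in w$ and $v\in\overline{w}$ (i.e. $wR_\square v$), then $\phi\in v$ by the defining rule of $R_\square$, and the analogous reading of $R_X,R_Y,R_\alpha,R_{Ags},\approx_\alpha$ settles the forward direction of all of items 1--6 at once.

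For the backward direction of the normal $\mathbf{S5}$ operators (items 3, 4, 5, 6) I would argue by contraposition. Writing $\nabla$ for any of $\square,[\alpha],[Ags],K_\alpha$ and $R$ for its canonical relation, suppose $\nabla\phi\notin w$. I first show that $\Gamma=\{\psi : \nabla\psi\in w\}\cup\{\lnot\phi\}$ is $\Lambda$-consistent: were it not, then $\vdash(\psi_1\wedge\dots\wedge\psi_k)\to\phi$ for finitely many $\psi_i$ with $\nabla\psi_i\in w$, and necessitation together with the distributivity built into the $\mathbf{S5}$ schemata for $\nabla$ would give $\vdash(\nabla\psi_1\wedge\dots\wedge\nabla\psi_k)\to\nabla\phi$, forcing $\nabla\phi\in w$ since $w$ is a $\Lambda$-MCS closed under conjunction and \emph{Modus Ponens}, contradicting the assumption. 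By a Lindenbaum extension I obtain a $\Lambda$-MCS $v\supseteq\Gamma$; by construction $wRv$ and $\phi\notin v$, witnessing the failure of the right-hand side. For items 4 and 5 I must additionally verify that this $v$ (and indeed any $R_\alpha$- or $R_{Ags}$-successor) lies in $\overline{w}$, so that the clause ``for every $v\in\overline{w}$ such that $wR_\alpha v$'' reduces to ``for every $v$ with $wR_\alpha v$''. This follows because $(SET)$ gives $\vdash\square\psi\to[\alpha]\psi$ and $(SET)$ with $(GA)$ gives $\vdash\square\psi\to[Ags]\psi$ (as recorded in the opening Remark), whence $R_\alpha\subseteq R_\square$ and $R_{Ags}\subseteq R_\square$.

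The temporal items 1 and 2 are treated more directly, exploiting that $R_X,R_Y$ are serial and deterministic, which was already proved in Proposition \ref{can}. For item 1 the forward direction is again definitional; for the converse, let $v$ be the unique $R_X$-successor of $w$ and suppose $\phi\in v$ but $X\phi\notin w$. Then $\lnot X\phi\in w$, so the self-duality axiom $(DET.S.X)$ yields $X\lnot\phi\in w$, whence $\lnot\phi\in v$ by the definition of $R_X$, contradicting $\phi\in v$ and the consistency of $v$. Item 2 is identical with $Y$ and $(DET.S.Y)$ replacing $X$ and $(DET.S.X)$. I expect no serious obstacle: the only content-bearing step is the consistency of $\Gamma$, which is entirely routine given normality of each operator and maximal consistency of $w$, and the remaining effort is the bookkeeping needed to certify that the coalition witnesses already sit inside the historical-necessity cell $\overline{w}$.
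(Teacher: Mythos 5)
Your proposal is correct, and for items 3--6 it coincides with the paper's proof: the same contraposition, the same Lindenbaum witnessing of $\{\psi : \nabla\psi\in w\}\cup\{\lnot\phi\}$ via necessitation and distributivity over conjunction, and the same appeal to $(SET)$ and $(GA)$ to place the $R_\alpha$- and $R_{Ags}$-witnesses inside $\overline{w}$ (the paper relegates that last point to a footnote). Where you genuinely diverge is in items 1 and 2: the paper runs $X$ and $Y$ through the very same generic template $\triangle\in\{X,Y,\square,[\alpha],[Ags],K_\alpha\}$, building a fresh witnessing MCS and using $(DET.S.X)$/$(DET.S.Y)$ only to rule out $\triangle\bot$, whereas you instead invoke the seriality and determinicity of the canonical $R_X$ and $R_Y$ (established independently in item 2 of the proof of Proposition \ref{can}, so there is no circularity) and derive the converse directly from self-duality: $\lnot X\phi\in w$ gives $X\lnot\phi\in w$, hence $\lnot\phi$ in the unique successor. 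Your route is more elementary for the functional operators and avoids rebuilding a witness that is already known to exist uniquely; the paper's route buys uniformity, treating all six modalities with one argument and not presupposing any structural facts about the canonical relations. Both are sound, and the difference is one of packaging rather than substance.
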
 
\begin{proof}
Let $w\in W^\Lambda$, and take $\phi$ of $\mathcal L_{\textsf{KX}}$. All items are shown in the same way. Let $\triangle\in\{X, Y, \square, [\alpha], [Ags], K_\alpha\}$, and let $R_\triangle$ stand for the relation upon which the semantics of $\triangle\phi$ is defined. We will show that $\triangle\phi \in w$ iff $\phi\in v$ for every $v\in W^\Lambda$ such that $wR_\triangle v$.

$(\Rightarrow)$ We assume that $\triangle\phi\in w$. Let $v\in W^{\Lambda}$ such that $wR_\triangle v$. The definition of $R_\triangle$ straightforwardly gives that $\phi\in v$.  

$(\Leftarrow)$ We work by contraposition. Assume that $\triangle\phi\notin w$. We will show that there is a world $v$ in $W^{\Lambda}$ such that $wR_\triangle v$ and such that $\phi$ does not lie within it. For this, let $v'=\{\psi ;\triangle\psi\in w\}$, which is consistent by a similar argument than the one introduced in the proof of the above item: suppose for a contradiction that $v'$ is not consistent; then there exists a set $\{\psi_1,\dots,\psi_n\}$ of formulas of $\mathcal{L}_{\textsf{KX}}$ such that $\{\psi_1,\dots,\psi_n\}\subseteq v'$ and $\vdash \psi_1\wedge\dots\wedge \psi_n\to \bot$ (a); now, the fact that $\{\psi_1,\dots,\psi_n\}\subseteq v'$ means that $\triangle \psi_i\in w$ for every $1\leq i \leq n$; necessitation of $\triangle$ and its distributivity over conjunction yield that (a) implies that $\vdash\triangle\psi_1\wedge\dots\wedge\triangle\psi_n\to\triangle\bot$, but this is a contradiction, since $w$ is a $\Lambda$-MCS which, being closed under conjunction, includes $\triangle\psi_1\wedge\dots\wedge\triangle\psi_n$. Now, we define $v''=v'\cup \{\lnot\phi\}$, and we show that it is also consistent as follows: suppose for a contradiction that it is not consistent; since $v'$ is consistent, we have that there exists a set $\{\psi_1,\dots,\psi_n\}$ of formulas of $\mathcal{L}_{\textsf{KX}}$ such that $\{\psi_1,\dots,\psi_n\}\subseteq v'$ and $\vdash \psi_1\wedge\dots\wedge \psi_n \wedge \lnot\phi\to \bot$, which then implies that $\vdash \psi_1\wedge\dots\wedge \psi_n \to \phi$ (b); due to necessitation of $\triangle$ and its distributivity over conjunction, we get that (b) implies that $\vdash\triangle\psi_1\wedge\dots\wedge\triangle\psi_n\to\triangle\phi$ (b); but notice that since $w$ is a $\Lambda$-MCS, then $\triangle\psi_1\wedge\dots\wedge\triangle\psi_n\in w$, so that (b) and the fact that $w$ is closed under \emph{Modus Ponens} entail that $\triangle\phi\in w$, contradicting the initial assumption that $\triangle\phi\notin w$. Finally, let $v$ be the $\Lambda$-MCS that includes $v''$. It is clear from its construction that $\phi \notin v$ and that $wR_\triangle v$, by definition of $R_\triangle$.\footnote{Observe that in the cases of $\triangle\in \{[\alpha], [Ags]\}$, axioms $(SET)$ and $(GA)$ render that the found $v$ actually lies within $\overline{w}$ (if $\square\theta \in w$, then $[\alpha]\theta\in w$ and $[Ags]\theta\in w$.}

\end{proof}

\begin{lemma}[Truth Lemma] \label{scof}
Let $\mathcal{M}$ be the canonical structure for $\Lambda_n$. For every formula $\phi$ of $\mathcal{L}_{\textsf{KX}}$ and every $w\in W^\Lambda$, we have that $\mathcal{M},w\Vdash \phi \ \textnormal{iff} \ \phi\in w.$
\end{lemma}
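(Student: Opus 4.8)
The plan is to argue by induction on the structure of $\phi$, reducing every modal case to the corresponding item of the Existence Lemma (Lemma \ref{carajo}) so that the genuine relational content has already been discharged before the induction begins.

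For the base case $\phi = p$, the equivalence $\mathcal{M}, w \Vdash p$ iff $p \in w$ is immediate from the definition of the canonical valuation $\mathcal{V}$, since $w \in \mathcal{V}(p)$ iff $p \in w$. The Boolean cases $\phi = \lnot \psi$ and $\phi = \psi \wedge \chi$ are routine: they follow from the induction hypothesis together with the standard maximal-consistent-set facts that each $w$ contains exactly one of $\psi$ and $\lnot \psi$, and that $\psi \wedge \chi \in w$ iff $\psi \in w$ and $\chi \in w$.

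The substance lies in the modal operators, and here I would treat all six uniformly. Let $\triangle \in \{X, Y, \square, [\alpha], [Ags], K_\alpha\}$ and let $R_\triangle$ denote the relation (restricted to $\overline{w}$ in the cases of $\square$, $[\alpha]$, and $[Ags]$) upon which the truth condition for $\triangle$ is defined in the semantics mirroring Definition \ref{models KCSTIT}. By that semantics, $\mathcal{M}, w \Vdash \triangle \psi$ holds iff $\mathcal{M}, v \Vdash \psi$ for every $v$ with $w R_\triangle v$; applying the induction hypothesis to the subformula $\psi$, this is equivalent to $\psi \in v$ for every such $v$; and the matching item of Lemma \ref{carajo} states precisely that this holds iff $\triangle \psi \in w$. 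Chaining these three equivalences closes the inductive step for $\triangle$, and since $\triangle$ was arbitrary among the six operators, the induction is complete.

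The one point demanding care --- and the closest thing to an obstacle --- is aligning the quantificational shape of the semantics with that of the Existence Lemma for the deterministic temporal operators $X$ and $Y$. Their truth conditions evaluate $\psi$ at the \emph{unique} successor, whereas Lemma \ref{carajo} is phrased with a universal quantifier over all $R_X$- (resp.\ $R_Y$-) successors; these coincide exactly because seriality and determinicity of $R_X$ and $R_Y$ were established in Proposition \ref{can}, so there is in fact a single successor over which to quantify. For $\square$, $[\alpha]$, and $[Ags]$ one must likewise keep the quantification confined to $\overline{w}$, but this is already built into items 3--5 of the Existence Lemma, whose footnote (invoking $(SET)$ and $(GA)$) guarantees that the witnesses supplied in the hard direction lie within $\overline{w}$. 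With these alignments in place, no further relational reasoning is needed and the induction goes through.
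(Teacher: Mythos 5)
Your proposal is correct and follows essentially the same route as the paper: induction on $\phi$, with the Boolean cases handled by standard maximal-consistent-set facts and every modal case discharged uniformly by the corresponding item of Lemma \ref{carajo}. The extra remark aligning the unique-successor semantics of $X$ and $Y$ with the universally quantified form of the Existence Lemma (via seriality and determinicity from Proposition \ref{can}) is a point the paper leaves implicit, but it does not change the argument.
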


\begin{proof}
We proceed by induction on $\phi$. The cases with the Boolean connectives are standard. For formulas involving the modal operators, both directions follow straightforwardly from Lemma \ref{carajo}. 
\end{proof}

\begin{proposition}[Completeness w.r.t. super-additive $n$-models] \label{complete} The proof system $\Lambda_n$ is complete with respect to the class of Kripke-exstit super-additive $n$-models.
\end{proposition}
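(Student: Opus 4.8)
The plan is to run the standard Henkin--Lindenbaum completeness argument, relying on the fact that essentially all of the substantive work has already been discharged in Proposition \ref{can} and in the Truth Lemma (Lemma \ref{scof}). Completeness is equivalent, by contraposition, to the claim that every $\Lambda_n$-consistent formula of $\mathcal{L}_{\textsf{KX}}$ is satisfiable at some world of some Kripke-exstit super-additive $n$-model. So I would fix a $\Lambda_n$-consistent formula $\phi$ and show that it is satisfiable in the target class.

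First I would invoke Lindenbaum's lemma to extend the consistent set $\{\phi\}$ to a maximal $\Lambda_n$-consistent set $w$; this is precisely a point $w\in W^\Lambda$ of the canonical structure $\mathcal{M}$ built in Definition \ref{dicaprio}. By Proposition \ref{can}, $\mathcal{M}$ is itself a Kripke-exstit super-additive $n$-model --- that proposition is exactly what certifies that the canonically defined relations satisfy $(\mathtt{SET})_K$, $(\mathtt{IA})_K$, $(\mathtt{NA})_K$, $(\mathtt{NAgs})_K$, super-additivity, the cardinality bound $\leq n$, and the epistemic constraints $(\mathtt{Unif-H})_K$ and $(\mathtt{NoF})_K$. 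Hence $\mathcal{M}$ is a legitimate structure of the advertised class. Applying the Truth Lemma (Lemma \ref{scof}) to $w$ and $\phi$, the fact that $\phi\in w$ yields $\mathcal{M},w\Vdash\phi$. Thus $\phi$ is satisfiable in the required class, and completeness of $\Lambda_n$ with respect to the super-additive $n$-models follows.

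The genuinely delicate points all sit upstream of this proposition rather than within it. In the Truth Lemma the modal inductive steps are underwritten by the Existence Lemma (Lemma \ref{carajo}), whose proof in turn leans on seriality and determinism of $R_X$ and $R_Y$, the $\mathbf{S5}$ behaviour of the box-type operators, and the interaction axioms; and the membership of the canonical structure in the super-additive $n$-model class depends on the nontrivial verifications in Proposition \ref{can}, most notably the use of $(IA)$ to secure $(\mathtt{IA})_K$ and of $(AgsPC_n)$ to bound the number of coalition choices by $n$. Taking those results as given, the present proposition is a routine assembly, and the only obstacle worth flagging is the standard one of confirming that the witnessing world $w$ really inhabits a structure of the target class --- which is exactly what Proposition \ref{can} provides.
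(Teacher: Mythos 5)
Your proposal is correct and follows exactly the same route as the paper's own proof: extend $\{\phi\}$ to a $\Lambda_n$-MCS via Lindenbaum, observe that this MCS is a world of the canonical structure, which Proposition \ref{can} certifies to be a Kripke-exstit super-additive $n$-model, and conclude satisfiability by the Truth Lemma (Lemma \ref{scof}). Your added remarks correctly locate where the substantive work lies (Propositions \ref{can} and Lemmas \ref{carajo}, \ref{scof}), but the assembly step itself matches the paper verbatim in spirit.
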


\begin{proof}
Let $\phi$ be a $\Lambda_n$-consistent formula of $\mathcal{L}_{\textsf{KX}}$. Let $\Phi$ be the $\Lambda_n$-MCS including $\phi$. We have seen that the canonical structure $\mathcal{M}$ is such that $\mathcal{M} , \Phi \Vdash \phi$.
\end{proof}


\subsection{Irreflexive Super-additive $n$-models}

In the next step of our proof of completeness, for each $\Lambda_n$-consistent formula $\phi$ of $\mathcal{L}_{\textsf{KX}}$, we will build a super-additive  $n$-model where the `next' and `last' relations are irreflexive that satisfies $\phi$. First, we introduce some auxiliary terminology.  For every $w\in W$, we take $h[w]:=\{v\in W; wR_Y^*v \mbox{ or } v=w \mbox{ or } wR_X^*v \}$, where $R_Y^*$ and $R_X^*$ are the transitive closures of $R_Y$ and $R_X$, respectively. It is clear that for every $w\in W$, $R_X^*$ restricted to $h[w]$ is a total order on $h[w]$. For each $w\in W$ and $i\in \mathds{N}\cup\{0\}$, we denote by $w^{-j}$ the unique element in $h[w]$ such that $w=w^{-0}R_Yw^{-1}R_Yw^{-2}R_Y\dots w^{-(j-1)}R_Y w^{-j}$. Similarly, we denote by $w^{+j}$ the unique element in $h[w]$ such that $w=w^{+0}R_Xw^{+1}R_Xw^{+2}R_X\dots w^{+(j-1)}R_X w^{+j}$.   

\begin{definition}\label{unrav}

Let $\mathcal{M}=\langle W, R_\square, R_X, R_Y, \mathtt{Choice}, \{\mathtt{\approx}_\alpha\}_{\alpha\in Ags},\mathcal{V} \rangle$ be a Kripke-exstit super-additive $n$-model for $\mathcal{L}_{\textsf{KX}}$. Consider the following \emph{unraveling} variation  $\mathcal{M}^u=\langle W^u, R_\square^u, R_X^u, R_Y^u, \mathtt{Choice}^u, \{\mathtt{\approx}_\alpha^u\}_{\alpha\in Ags},\mathcal{V}^u \rangle$, where 

\begin{itemize}
    \item Let $T^W$ be the set of all finite sequences $w_0,\dots,w_m$ such that $w_i\in W$ and $m\in \mathds{N}\cup\{0\}$. We define $W^u \subseteq T^W\times \{0,1\}$ such that \begin{enumerate}[a)]
        \item $\langle w_0,\dots,w_m,1\rangle\in W^u$ iff $m\geq 0$ and for all $0\leq i\leq m-1$, $w_i R_X w_{i+1}$.
        \item $\langle w_0,\dots,w_m,0\rangle\in W^u$ iff $m>0$ and for all $0\leq i\leq m-1$, $w_i R_Y w_{i+1}$.
    \end{enumerate}
    \item  We define $R_\square^u$ such that $\langle w_0,\dots,w_m, a\rangle R_\square^u \langle v_0,\dots,v_l, b\rangle$ iff either 
	\begin{itemize} 
	\item $a=b=1$, $l=m$, $w_i R_{Ags}v_i $ for every $0 \leq i \leq m-1$, and $w_m R_\square v_m$, or 

	\item $a=b=0$, $l=m$, and $w_m R_\square v_m$.

	\end{itemize}
    \item  We define $R_X^u$ such that $\langle w_0,\dots,w_m,a\rangle R_X^u \langle v_0,\dots,v_l,b\rangle$ iff either 
	\begin{itemize} 
	\item $a=b=1$, $m\geq 0$, $l=m+1$, $v_i =w_i $ for every $0\leq i \leq m$, and $v_l =w_m^{+1}$, or

	\item $a=b=0$, $m>1$, $l=m-1$, $v_i =w_i $ for every $0\leq i \leq l$, and $w_m = v_l^{+1}$, or
	\item $a=0$, $b=1$, $m=1$, $l=0$, $v_0=w_0=w_1^{+1}$.

	\end{itemize}

    \item We define $R_Y^u$ such that $\langle w_0,\dots,w_m,a\rangle R_Y^u \langle v_0,\dots,v_l,b\rangle$ iff either 
	\begin{itemize} 
	\item $a=b=1$, $m>0$, $l=m-1$, $v_i =w_i $ for every $0\leq i \leq l$, and $v_l =w_m^{-1}$, or

	\item $a=b=0$, $m>0$, $l=m+1$, $v_i =w_i $ for every $0\leq i \leq m$, and $w_m = v_l^{-1}$, or
	\item $a=1$, $b=0$, $m=0$, $l=1$,  $w_0=v_0=v_1^{+1}$.

	\end{itemize}
 
     \item  We define $\mathtt{Choice}^u$ so that 
	\begin{itemize}
        \item For $\alpha\in Ags$, we define $R_\alpha^u$ on $W^u$ the following way. 
        
        $\langle w_0,\dots,w_m,a\rangle R_\alpha^u \langle v_0,\dots,v_l,b\rangle$ iff either 
		\begin{itemize}

		\item $a=b=1$, $l=m$, $w_i R_{Ags}v_i $ for every $0 \leq i \leq m-1$, and $w_m R_\alpha v_m$, or 

		\item $a=b=0$, $l=m$, and $w_m R_\alpha v_m$.

		\end{itemize}         
        
        We set $\mathtt{Choice}_\alpha^u=\left\{R_\alpha^u[\langle w_0,\dots,w_m,a\rangle]|\langle w_0,\dots,w_m,a\rangle\in W^u\right\}$.
        
        \item  $\langle w_0,\dots,w_m,a\rangle R_{Ags}^u \langle v_0,\dots,v_l,b\rangle$ iff either
		\begin{itemize}

		\item $a=b=1$, $l=m$, $w_i R_{Ags}v_i $ for every $0 \leq i \leq m-1$, and $w_m R_{Ags} v_m$, or 

		\item $a=b=0$, $l=m$, and $w_m R_{Ags} v_m$.

		\end{itemize}  
        
        We set $\mathtt{Choice}_{Ags}^u=\left\{R_{Ags}^u[\langle w_0,\dots,w_m,a\rangle]|\langle w_0,\dots,w_m,a\rangle\in W^u\right\}$.
    	\end{itemize}

    \item  For $\alpha\in Ags$, we define $\approx_\alpha^u$ such that $\langle w_0,\dots,w_m,a\rangle \approx_\alpha^u \langle v_0,\dots,v_l,b\rangle$ iff  $w_m \approx_\alpha v_l$.
    
    \item We define $\mathcal{V}^M$ such that $\langle w_0,\dots,w_m,a\rangle\in \mathcal{V}^u(p) $ iff $w_m\in \mathcal{V}(p)$.
\end{itemize}
\end{definition}

\begin{proposition}\label{putamadreojalacarajo}
If $\mathcal{M}$ is a super-additive $n$-model for $\mathcal{L}_{\textsf{KX}}$, then $\mathcal{M}^u$ as defined in Definition \ref{unrav} is a super-additive $n$-model for $\mathcal{L}_{\textsf{KX}}$ where $R_X^u$ and $R_Y^u$ are irreflexive. 
\end{proposition}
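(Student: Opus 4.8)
The plan is to verify, one structural constraint at a time, that $\mathcal{M}^u$ meets every clause of Definition \ref{ledbetter} in its super-additive, $n$-bounded form, using the projection $\pi\colon\langle w_0,\dots,w_m,a\rangle\mapsto w_m$ onto the last coordinate as the bridge back to $\mathcal{M}$. The guiding principle is that the ``instantaneous'' relations $R_\square^u$, $R_\alpha^u$, $R_{Ags}^u$, $\approx_\alpha^u$ all act through the last coordinate (with an extra $R_{Ags}$-agreement demand on the shared prefix when $a=1$), whereas $R_X^u$ and $R_Y^u$ merely lengthen, shorten, or flip the sequence; so each constraint on $\mathcal{M}^u$ should reduce to the homonymous constraint on $\mathcal{M}$ together with the inclusions $R_{Ags}\subseteq R_\alpha\subseteq R_\square$ furnished by super-additivity and $(\mathtt{SET})_K$.

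First I would dispatch the temporal relations. A case split on the bit $a$ and on whether $m$ equals $0$ or $1$ shows that every world has exactly one $R_X^u$-successor and one $R_Y^u$-successor (seriality and determinism, inherited from those of $R_X,R_Y$), that these two maps are mutually inverse (yielding $(\mathtt{Inverse})$ via the $\mathcal{M}$-level inverse relation), and---since each $R_X^u$- or $R_Y^u$-step strictly changes the length or toggles $a$---that $R_X^u$ and $R_Y^u$ are irreflexive, which is the second half of the claim. Reflexivity, symmetry and transitivity of $R_\square^u$, each $R_\alpha^u$, and each $\approx_\alpha^u$ follow at once from those of $R_\square$, $R_\alpha$, $\approx_\alpha$, $R_{Ags}$; super-additivity $R_{Ags}^u\subseteq\bigcap_\alpha R_\alpha^u$ and $(\mathtt{SET})_K$ (i.e. $R_\alpha^u,R_{Ags}^u\subseteq R_\square^u$) are read off the definitions using the corresponding inclusions in $\mathcal{M}$. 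The commutation constraints $(\mathtt{NX})_K$, $(\mathtt{NA})_K$, $(\mathtt{NAgs})_K$ and $(\mathtt{NoF})_K$ I would obtain by, given a composite of the form $\sigma\,R_X^u\,\tau\,R\,\rho$, projecting to $\mathcal{M}$, applying the matching condition there (or its relational inverse, e.g. $R_Y\circ\approx_\alpha\subseteq\approx_\alpha\circ R_Y$ for $(\mathtt{NoF})_K$), and lifting the produced witness back to a sequence in $W^u$; determinism of $R_X,R_Y$ pins the lifted witness down uniquely, and $R_{Ags}\subseteq R_\square$ is what lets the required prefix agreements survive the lift.

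The crux, and the step I expect to cost the most, is $(\mathtt{IA})_K$, which forces me first to describe a $\square^u$-class concretely. I would isolate the key lemma that $\pi$ maps $\overline{\sigma}$ \emph{onto} $\overline{\pi(\sigma)}$. For $a=1$ the prefix of any member of $\overline\sigma$ is \emph{forced} by its last coordinate (determinism of $R_Y$), so realizability of a target last coordinate $u_m\in\overline{w_m}$ amounts to $w_{m-1}R_{Ags}u_m^{-1}$, and this is supplied precisely by $(\mathtt{NAgs})_K$ applied to $w_{m-1}R_{Ags}w_{m-1}$, $w_{m-1}R_Xw_m$, $w_mR_\square u_m$; iterating backwards along the sequence clears the whole prefix (the $a=0$ case imposes no prefix demand and is easier). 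Granting this, $(\mathtt{IA})_K$ for $\mathcal{M}^u$ collapses to $(\mathtt{IA})_K$ for $\mathcal{M}$: a selection of $\alpha$-cells inside $\overline\sigma$ corresponds to a selection of $R_\alpha$-cells inside $\overline{w_m}$ (the prefix condition being automatic within a single $\square^u$-class), their common element $u_m$ exists by $(\mathtt{IA})_K$ in $\mathcal{M}$, and the onto-lemma realizes $u_m$ as the last coordinate of a genuine world of $\overline\sigma$. The same onto-lemma renders $(\mathtt{Unif\text{-}H})_K$ routine---transport the $\approx_\alpha$-witness across $\pi$ and re-realize it in the other $\square^u$-class---and the $n$-bound is immediate, since the number of $\alpha$- and $Ags$-cells inside $\overline\sigma$ cannot exceed the number inside $\overline{w_m}$, which is at most $n$ by hypothesis on $\mathcal{M}$.
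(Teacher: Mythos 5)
Your proposal is correct and follows essentially the same route as the paper's own proof: verify each clause of Definition \ref{ledbetter} for $\mathcal{M}^u$ by projecting to the last coordinate, invoking the corresponding condition of $\mathcal{M}$, and lifting the witness back into $W^u$ using determinism of $R_X,R_Y$ together with iterated applications of $(\mathtt{NAgs})_K$ to secure the prefix agreements, with irreflexivity of $R_X^u,R_Y^u$ read off from the length/bit change in each step. The only difference is presentational: you factor the lifting step out as an explicit ``onto'' lemma for the projection restricted to $R_\square^u$-classes, whereas the paper carries out the same witness construction inline in its verifications of $(\mathtt{IA})_K$ and $(\mathtt{Unif-H})_K$.
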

\begin{proof}

We want to show that $\langle W^u, R_\square^u, R_X^u, R_Y^u, \mathtt{Choice}^u, \{\mathtt{\approx}^u_\alpha\}_{\alpha\in Ags}\rangle$ is a Kripke-exstit  frame, which amounts to showing that the tuple validates the four items in the definition of Kripke-exstit  frames. 
\begin{enumerate}
    \item It is routine to show that $R_\square^u$ is an equivalence relation. 
    \item From definition we can see that the facts that $R_X$ and $R_Y$ are serial and deterministic imply that $R_X^u$ and $R_Y^u$ are serial and deterministic, respectively. Observe that the variation of the traditional unraveling-argument that we use plays an important role in ensuring that $R_Y^u$ is serial. In order to have predecessors for one-element sequences, we introduced a construction that differentiates ascending from descending sequences. Therefore, for $\langle w, 1\rangle \in W^u$, we have that $\langle w, 1\rangle R_Y^u \langle w, w^{-1},0\rangle$.    
    
    It is routine to show that $(\mathtt{Inverse})$ holds, and the unraveling construction ensures then that $R_X^u$ and $R_Y^u$ are irreflexive.

    \item We have to show that $\mathtt{Choice}^u$ fulfills the requirements of Definition \ref{ledbetter} in the case for super-additive frames:
    \begin{itemize}
        \item It is routine to show that $R_\alpha^u$ and $R_{Ags}^u$ are equivalence relations. 
        \item We now have to show the condition of super-additivity, which amounts to showing that for every $\alpha\in Ags$, $R_{Ags}^u\subseteq R_\alpha^u$. This follows straightforwardly from the definition of these relations and the fact that $R_{Ags}\subseteq R_\alpha$ for every $\alpha\in Ags$.
        \item Also from definition of $R_\alpha^u$ we get that $(\mathtt{SET})_K$ holds. 
        \item For $(\mathtt{IA})_K$,  let $\langle w_0,\dots,w_m,a\rangle \in W^u$ and $s:Ags\to\mathcal{P}\left(\overline{\langle w_0,\dots,w_m,a\rangle }\right)$ be a function that maps each $\alpha\in Ags$ to a member of $\mathtt{Choice}^u_\alpha$ included in $\overline{\langle w_0,\dots,w_m,a\rangle }$. We want to show that $\bigcap_{\alpha \in Ags} s(\alpha) \neq \emptyset$. 
        
        For each $\alpha\in Ags$, take $\langle w_{\alpha 0},\dots,w_{\alpha m},a\rangle\in s(\alpha)$. 
        
        We want to show that there exists $\langle v_0,\dots,v_m,a\rangle\in \overline{\langle w_0,\dots,w_m,a\rangle}$ such that $\langle v_0,\dots,v_m,a\rangle R_\alpha^u \langle w_{\alpha 0},\dots,w_{\alpha m},a\rangle$ for every $\alpha\in Ags$. Observe that by definition we get that $w_{\alpha m}\in \overline{w_m}$ for every $\alpha\in Ags$. Since $\mathcal{M}$ satisfies $(\mathtt{IA})_K$, we have that there exists $v_*\in\overline {w_m}$ such that $v_*R_\alpha w_{\alpha m}$ for every $\alpha\in Ags$. We have two cases, according to the value of $a$: \begin{itemize}
            \item (Case $a=1$) 
            Since $\mathcal{M}$ satisfies $(\mathtt{NAgs})_K$,  $v_*^{-i}R_{Ags}w_{\alpha(m-i)}$ for every $1\leq i \leq m$. Therefore, the finite sequence given by $v_*^{-m},\dots,v_*$ is such that $\langle v_*^{-m},\dots,v_*,1\rangle R_\alpha^u \langle w_{\alpha0},\dots, w_{\alpha m},1 \rangle$ for every $\alpha\in Ags$. 
            \item (Case $a=0$) Similarly, the finite sequence $v_*^{+m},\dots,v_*$ is such that  $\langle v_*^{+m},\dots,v_*,0\rangle R_\alpha^u \langle w_{\alpha0},\dots, w_{\alpha m},0 \rangle$ for every $\alpha\in Ags$. 
        \end{itemize}

        \item For $\mathtt{(NA)_K}$, we want to show that for $\alpha \in Ags$, $R_\square^u\circ R_X^u\circ R_{\alpha}^u\subseteq R_X^u\circ R_{\alpha}^u.$ Therefore, let $\alpha\in Ags$ and $\langle w_0,\dots, w_m,a\rangle, \langle v_0,\dots, v_{l},b\rangle\in W^u$ such that $\langle w_0,\dots, w_m,a\rangle R_\square^u\circ R_X^u\circ R_{\alpha}^u \langle v_0,\dots, v_{l},b\rangle$. We have three cases: \begin{enumerate}[i)]
        \item (Case $a=b=1, m\geq 0$) The assumption implies that there exists $\langle w_0',\dots, w_m',1\rangle\in W^u$ such that $\langle w_0,\dots, w_m,1\rangle R_\alpha^u \langle w_0',\dots, w_m',1\rangle $ and $\langle w_0',\dots, w'_{m+1},1 \rangle R_\square^u \langle v_0,\dots, v_{l},1\rangle$.\footnote{Observe that this last thing implies that actually $\langle v_0,\dots, v_{l},1\rangle=\langle v_0,\dots, v_{m+1},1\rangle$.} The first fact yields that $w_m R_\alpha w_m'$ and the second that $w'_m R_{Ags}v_m$ --which implies by super-additivity of $\mathcal{M}$ that  $w'_m R_{\alpha}v_m$. Therefore, transitivity of $R_\alpha$ yields that $w_m R_\alpha v_m$. Since we have that $w_i R_{Ags} w'_{i}R_{Ags}{v_i}$ for every $0\leq i \leq m-1$, then $\langle w_0,\dots, w_m,1 \rangle R_\alpha^u \langle v_0,\dots, v_m,1 \rangle$, which gives us that $\langle w_0,\dots, w_m, 1\rangle R_X^u\circ R_\alpha^u \langle v_0,\dots, v_{m+1},1 \rangle$.
        \item (Case $a=b=0$, $m>1$) The assumption implies that there exists $\langle w_0',\dots, w_m',0\rangle\in W^u$ such that $\langle w_0,\dots, w_m,0\rangle R_\alpha^u \langle w_0',\dots, w_m',0\rangle$ and $\langle w_0',\dots, w'_{m-1},0 \rangle R_\square^u \langle v_0,\dots, v_{l},0\rangle$. The first fact gives us that $w_m R_\alpha w_m'$, and the second implies by $(\mathtt{NAgs})_K$ that $w'_m R_{Ags}v_m$ --which implies that  $w'_m R_{\alpha}v_m$. Therefore, transitivity of $R_\alpha$ yields that $w_m R_\alpha v_m$ so that $\langle w_0,\dots, w_m, 0\rangle R_X^u\circ R_\alpha^u \langle v_0,\dots, v_{m-1},0  \rangle$.
        \item (Case $a=0$, $b=1$, $m=1$) The assumption implies that there exists $\langle w_0', w_1',0\rangle\in W^u$ with $\langle w_0,w_1,0\rangle R_\alpha^u \langle w_0', w_1',0\rangle$ and $\langle w_0',1 \rangle R_\square^u \langle v_0,1\rangle$. The first fact gives us that $w_1 R_\alpha w_1'$, and the second implies by $(\mathtt{NAgs})_K$ that $w'_1 R_{Ags}v_0^{-1}$. --which implies by super-additivity of $\mathcal{M}$ that  $w'_1 R_{\alpha}v_0^{-1}$. Therefore, transitivity of $R_\alpha$ yields that $w_1 R_\alpha v_0^{-1}$, which gives us that $\langle w_0,w_1, 0\rangle R_\alpha^u \langle v_0,v_0^{-1}, 0\rangle$. In turn, this implies that $\langle w_0,w_1, 0\rangle R_X^u\circ R_\alpha^u \langle v_0,1 \rangle$.
        \end{enumerate}

        \item The fact that $\mathcal{M}^u$ satisfies $\mathtt{(NAgs)_K}$ can be shown in an analogous way to the above item. 
        \item Finally, observe that since $R_\alpha$ and $R_{Ags}$ induce partitions of cardinality at most $n$ on every $\overline{w}$ ($w\in W$), we have that for every $\langle w_0,\dots, w_m,a\rangle$, $R_\alpha^u$ and $R_{Ags}^u$ induce  partitions of cardinality at most $n$ on $\overline{\langle w_0,\dots, w_m,a\rangle}$.
    \end{itemize}

    \item It is clear that the $\approx_\alpha^u$, as defined, are equivalence relations. We must now verify that $\mathcal{M}^u$ validates the constraints $(\mathtt{Unif-H})_K$ and $(\mathtt{NoF})_K$. 

\begin{itemize}
        \item For $(\mathtt{Unif-H})_K$, let $\langle w_0,\dots,w_m,a\rangle$, $\langle w_0',\dots,w_l',a'\rangle\in W^u$. 
        
        Assume that there exist $\langle v_0,\dots,v_m,a\rangle\in \overline{\langle w_0,\dots,w_m,a\rangle}$, $\langle u_0,\dots,u_l,a'\rangle\in \overline{\langle w_0',\dots,w_l',a'\rangle}$ such that $\langle v_0,\dots,v_m,a\rangle\approx_\alpha^u \langle u_0,\dots,u_l,a'\rangle$ --which implies that $v_m\approx_\alpha u_l$ ($\star$). Take $\langle x_0,\dots,x_m,a\rangle\in \overline{\langle w_0,\dots,w_m,a\rangle}$. This implies that $x_m\in\overline{w_m}=\overline{v_m}$, so that $(\star)$ and the fact that $\mathcal{M}$ validates $(\mathtt{Unif-H})_K$ entail that there exists $y\in \overline{u_l}$ such that $x_m\approx_\alpha y$. The finite sequence given by $y^{-l},\dots, y^{-1}, y$ is such that it lies within $T^W$, so that $\langle y^{-l},\dots, y^{-1}, y, a'\rangle$ lies  within $\overline{\langle w_0',\dots,w_l',a'\rangle}$, and such that $\langle x_0,\dots,x_m,a\rangle\approx_\alpha^u \langle y^{-l},\dots, y^{-1}, y, a'\rangle$. 

        \item For $(\mathtt{NoF})_K$, let $\langle w_0,\dots,w_m,a\rangle, \langle v_0,\dots, v_l,b\rangle\in W^u$ such that \\ $\langle w_0,\dots,w_m,a\rangle\approx_\alpha ^u\circ  R_X^u \langle v_0,\dots, v_l,b\rangle$. We have three cases with three sub-cases each: 

(Case $a=1, m\geq 0$) The assumption implies that $\langle w_0,\dots, w_{m+1},1\rangle\approx_\alpha^u \langle v_0,\dots, v_l,b \rangle$. We have the following sub-cases: \begin{itemize}
    \item ($b=1$, $l>0$) Since $\mathcal{M}$ validates $(\mathtt{NoF})_K$, we get that $w_m\approx_\alpha v_{l-1}$, so that $\langle w_0,\dots, w_{m},1\rangle\approx_\alpha^u \langle v_0,\dots, v_{l-1},1 \rangle$.
    \item ($b=1$, $l=0$) Since $\mathcal{M}$ validates $(\mathtt{NoF})_K$, we get that $w_m\approx_\alpha v_{0}^{-1}$, so that $\langle w_0,\dots, w_{m},1\rangle\approx_\alpha^u \langle v_0,v_0^{-1},0 \rangle$.
    \item ($b=0$ $l>0$) Since $\mathcal{M}$ validates $(\mathtt{NoF})_K$, we get that $w_m\approx_\alpha v_{l+1}$, so that $\langle w_0,\dots, w_{m},1\rangle\approx_\alpha^u \langle v_0,\dots,  v_{l+1},0 \rangle$.
\end{itemize}

(Case $a=0, m=1$) The assumption implies that $\langle w_0,1\rangle\approx_\alpha^u \langle v_0,\dots, v_l,b \rangle$. We have the following sub-cases: \begin{itemize}
    \item ($b=1$, $l>0$) Since $\mathcal{M}$ validates $(\mathtt{NoF})_K$, we get that $w_1\approx_\alpha v_{l-1}$, so that $\langle w_0,w_1,0\rangle\approx_\alpha^u \langle v_0,\dots, v_{l-1},1 \rangle$.
    \item ($b=1$, $l=0$) Since $\mathcal{M}$ validates $(\mathtt{NoF})_K$, we get that $w_1\approx_\alpha v_{0}^{-1}$, so that $\langle w_0,w_1,0\rangle\approx_\alpha^u \langle v_0,v_0^{-1},0 \rangle$.
    \item ($b=0$ $l>0$) Since $\mathcal{M}$ validates $(\mathtt{NoF})_K$, we get that $w_1\approx_\alpha v_{l+1}$, so that $\langle w_0,w_1,0\rangle\approx_\alpha^u \langle v_0,\dots,  v_{l+1},0 \rangle$.
\end{itemize}

(Case $a=0, m>1$) The assumption implies that $\langle w_0,\dots,w_{m-1},0\rangle\approx_\alpha^u \langle v_0,\dots, v_l,b \rangle$. We have the following sub-cases: \begin{itemize}
    \item ($b=1$, $l>0$) Since $\mathcal{M}$ validates $(\mathtt{NoF})_K$, we get that $w_m\approx_\alpha v_{l-1}$, so that $\langle w_0,\dots,w_{m},0\rangle\approx_\alpha^u \langle v_0,\dots, v_{l-1},1 \rangle$.
    \item ($b=1$, $l=0$) Since $\mathcal{M}$ validates $(\mathtt{NoF})_K$, we get that $w_m\approx_\alpha v_{0}^{-1}$, so that $\langle w_0,\dots,w_{m},0\rangle\approx_\alpha^u \langle v_0,v_0^{-1},0 \rangle$.
    \item ($b=0$ $l>0$) Since $\mathcal{M}$ validates $(\mathtt{NoF})_K$, we get that $w_m\approx_\alpha v_{l+1}$, so that $\langle w_0,\dots,w_{m},0\rangle\approx_\alpha^u \langle v_0,\dots,  v_{l+1},0 \rangle$.
\end{itemize}
    \end{itemize}
\end{enumerate}

\end{proof}
\begin{proposition}\label{amsterdam}
If $\mathcal{M}$ is a super-additive $n$-model for $\mathcal{L}_{\textsf{KX}}$, then $f:\mathcal{M}^u\to \mathcal{M}$ defined by $f\left(\langle w_0,\dots,w_m,a\rangle\right)=w_m$ is a surjective bounded morphism. 
\end{proposition}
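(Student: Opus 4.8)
The plan is to verify directly that $f$ meets the defining clauses of a bounded morphism between the two multimodal frames: atomic harmony, the forth (homomorphism) condition and the back (zig-zag) condition for each of the relation families $R_\square, R_X, R_Y, R_\alpha, R_{Ags}, \approx_\alpha$, plus surjectivity. Surjectivity is immediate, since for $w\in W$ the one-element ascending sequence $\langle w,1\rangle$ lies in $W^u$ (the ordering requirement is vacuous when $m=0$) and $f(\langle w,1\rangle)=w$. Atomic harmony is literally the definition of $\mathcal{V}^u$: $\langle w_0,\dots,w_m,a\rangle\in\mathcal{V}^u(p)$ iff $w_m\in\mathcal{V}(p)$ iff $f(\langle w_0,\dots,w_m,a\rangle)\in\mathcal{V}(p)$.

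The forth conditions can be read straight off Definition \ref{unrav}. For $R_\square^u, R_\alpha^u, R_{Ags}^u$ and $\approx_\alpha^u$ each defining clause explicitly imposes the corresponding $\mathcal{M}$-relation on the last coordinates ($w_m R_\square v_m$, $w_m R_\alpha v_m$, $w_m R_{Ags}v_m$, $w_m\approx_\alpha v_l$), which after projecting by $f$ is exactly the required $f(s)\,R\,f(t)$. For $R_X^u$ and $R_Y^u$ one uses $(\mathtt{Inverse})$: in each clause the ordering condition built into membership of ascending/descending sequences (the $w_iR_Xw_{i+1}$, resp. $w_iR_Yw_{i+1}$) links the projected endpoints by a single $R_X$- (resp. $R_Y$-)step or its inverse, giving $w_m R_X v_l$ (resp. $w_m R_Y v_l$).

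For the back conditions, fix $s=\langle w_0,\dots,w_m,a\rangle$. The epistemic and temporal cases are routine and I would dispatch them first. For $\approx_\alpha^u$: given $w_m\approx_\alpha v$, take $t=\langle v,1\rangle$, so $f(t)=v$ and $s\approx_\alpha^u t$ because $\approx_\alpha^u$ compares only last coordinates. For $R_X^u$: determinism forces $y=w_m^{+1}$, and one exhibits the witness by cases on $a$ and length — $t=\langle w_0,\dots,w_m,w_m^{+1},1\rangle$ when $a=1$; $t=\langle w_0,\dots,w_{m-1},0\rangle$ when $a=0,m>1$ (here $f(t)=w_{m-1}=w_m^{+1}$ via $(\mathtt{Inverse})$ applied to $w_{m-1}R_Yw_m$); and $t=\langle w_0,1\rangle$ when $a=0,m=1$. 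The back condition for $R_Y^u$ is symmetric, matching the three clauses of its definition.

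The main obstacle is the back condition for $R_\square^u, R_\alpha^u, R_{Ags}^u$ in the ascending case $a=1$, since the witness $t=\langle v_0,\dots,v_m,1\rangle$ must reproduce not only the endpoint but also the coordinatewise links $w_iR_{Ags}v_i$ for all $i<m$. I would construct the pre-history by downward induction. First, $(\mathtt{SET})_K$ gives $R_\alpha\subseteq R_\square$ and $R_{Ags}\subseteq R_\square$, so in every case $w_m R_\square v$. From reflexivity $w_{m-1}R_{Ags}w_{m-1}$, together with $w_{m-1}R_Xw_m$ and $w_mR_\square v$, the constraint $(\mathtt{NAgs})_K$ (that is, $R_\square\circ R_X\circ R_{Ags}\subseteq R_X\circ R_{Ags}$) yields $v_{m-1}$ with $w_{m-1}R_{Ags}v_{m-1}$ and $v_{m-1}R_Xv$. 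Inductively, given $v_i$ with $w_iR_{Ags}v_i$ (hence $w_iR_\square v_i$), the same instance of $(\mathtt{NAgs})_K$ applied to $w_{i-1}R_{Ags}w_{i-1}$, $w_{i-1}R_Xw_i$, $w_iR_\square v_i$ produces $v_{i-1}$ with $w_{i-1}R_{Ags}v_{i-1}$ and $v_{i-1}R_Xv_i$. Iterating to $i=0$ gives an ascending $t=\langle v_0,\dots,v_{m-1},v,1\rangle\in W^u$ with $f(t)=v$, the links $w_iR_{Ags}v_i$ for $i<m$, and the endpoint relation ($w_mR_\square v$, resp. $w_mR_\alpha v$, $w_mR_{Ags}v$), so $s\mathrel{R_\square^u}t$ (resp. $R_\alpha^u$, $R_{Ags}^u$). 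The easier case $a=0$ only constrains the last coordinate, so the descending witness $t=\langle v^{+m},v^{+(m-1)},\dots,v^{+1},v,0\rangle$ suffices. This exhausts all relations and completes the verification.
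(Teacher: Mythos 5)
Your proof is correct and follows essentially the same route as the paper's: surjectivity and atomic harmony are immediate, the forth conditions are read off Definition \ref{unrav}, and the back conditions are handled by exhibiting the same witness sequences (an ascending pre-history $\langle v_0,\dots,v_{m-1},v,1\rangle$ for $R_\square^u$, $R_\alpha^u$, $R_{Ags}^u$, and truncations/extensions for the temporal relations). The only difference is that you spell out, via downward induction using $(\mathtt{NAgs})_K$ together with reflexivity of $R_{Ags}$ and $(\mathtt{SET})_K$, why the coordinatewise links $w_i R_{Ags} v_i$ hold for the ascending witness --- a detail the paper's proof leaves implicit by simply writing the witness as $\langle v^{-m},\dots,v^{-1},v,1\rangle$, which by determinism of $R_Y$ coincides with the sequence your induction produces.
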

\begin{proof}
\begin{itemize}
\item It is clear that $f$ is surjective. The definition of $\mathcal{V}^u$ in the last item of Definition \ref{unrav} ensures that for every $\langle w_0,\dots,w_m,a\rangle\in W^u$, $\langle w_0,\dots,w_m,a\rangle$ and $f\left(\langle w_0,\dots,w_m,a\rangle\right)$ satisfy the same propositional letters.
    \item Let $R\in\{R_X, R_Y, R_\square, R_\alpha, R_{Ags}, \approx_\alpha\}$, and let $R^u$ stand for the corresponding  relation on $\mathcal{M}^u$. Definition \ref{unrav} ensures that if $\langle w_0,\dots,w_m,a\rangle R^u \langle v_0,\dots,v_l,b\rangle$, then $w_m R v_l$.
    \item Assume that $f\left(\langle w_0,\dots,w_m,a\rangle\right) R \ v$ for $R\in\{R_X, R_Y, R_\square, R_\alpha, R_{Ags}, \approx_\alpha\}$ and $v\in W$. We have the following cases: \begin{itemize}
        \item (Case $a=1$, $m>0$) 
    
    For $R\in \{R_\square, R_\alpha, R_{Ags}, \approx_\alpha\}$, we have that $\langle v^{-m},\dots,v^{-1},v,1 \rangle$ is such that $\langle w_0,\dots,w_m,1\rangle R^u \langle v^{-m},\dots,v^{-1},v,1 \rangle$. 
    
    For $R_X$, we have that $\langle w_0,\dots,w_m,1\rangle R^u_X \langle w_0,\dots,w_m,v, 1\rangle$. 
    
    For $R_Y$, we have that $\langle w_0,\dots,w_m,1\rangle R^u_Y \langle w_0,\dots,v, 1\rangle$. 
    \item (Case $a=1$, $m=0$) 
    
    For $R\in \{R_\square, R_\alpha, R_{Ags}, \approx_\alpha\}$, we have that  $\langle w_0,1\rangle R^u \langle v,1 \rangle$. 
    
    For $R_X$, we have that $\langle w_0,1\rangle R^u_X \langle w_0,v, 1\rangle$. 
    
    For $R_Y$, we have that $\langle w_0,1\rangle R^u_Y \langle w_0,v, 0\rangle$.    
    
    \item  (Case $a=0$, $m=1$) 
    
    For $R\in \{R_\square, R_\alpha, R_{Ags}, \approx_\alpha\}$, we have that $\langle w_0,w_1,0\rangle R^u \langle v^{+1},v,0 \rangle$. 
    
    For $R_X$, we have that $w_0=v$ and therefore $\langle w_0,w_1,0\rangle R^u_X \langle v, 0\rangle$. 
    
    For $R_Y$, we have that  $\langle w_0,w_1,0\rangle R^u_Y \langle w_0,w_1,v, 0\rangle$.    
    
   \item (Case $a=0$, $m>1$)
    
    For $R\in \{R_\square, R_\alpha, R_{Ags}, \approx_\alpha\}$, $\langle w_0,\dots,w_m,0\rangle R^u \langle v^{+m},\dots,v^{+1},v,0 \rangle$. 
    
    For $R_X$, we have that $\langle w_0,\dots,w_m,0\rangle R^u_X \langle w_0,\dots,v, 0\rangle$. 
    
    For $R_Y$, we have that $\langle w_0,\dots,w_m,0\rangle R^u_Y \langle w_0,\dots,w_m,v, 0\rangle$.
    \end{itemize}

\end{itemize}
 Therefore, $f:\mathcal{M}^u\to \mathcal{M}$ is a surjective bounded morphism. 
\end{proof}

\begin{proposition}\label{compunr}
The proof system $\Lambda_n$ is complete with respect to the class of Kripke-exstit super-additive $n$-models where the `next' and `last' relations are irreflexive.
\end{proposition}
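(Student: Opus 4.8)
The plan is to assemble the three preceding results into a standard completeness-transfer argument via bounded morphisms. First I would take any $\Lambda_n$-consistent formula $\phi$ of $\mathcal{L}_{\textsf{KX}}$. By Proposition \ref{complete}, there exist a Kripke-exstit super-additive $n$-model $\mathcal{M}$ and a world $w$ of $\mathcal{M}$ such that $\mathcal{M}, w \Vdash \phi$. The goal is to produce a super-additive $n$-model whose `next' and `last' relations are irreflexive that also satisfies $\phi$ somewhere.

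Next I would pass to the unraveling $\mathcal{M}^u$ of Definition \ref{unrav}. By Proposition \ref{putamadreojalacarajo}, $\mathcal{M}^u$ is again a super-additive $n$-model, and crucially $R_X^u$ and $R_Y^u$ are irreflexive, so $\mathcal{M}^u$ lies in the target class. By Proposition \ref{amsterdam}, the map $f:\mathcal{M}^u\to\mathcal{M}$ given by $f(\langle w_0,\dots,w_m,a\rangle)=w_m$ is a surjective bounded morphism.

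The key step is to invoke the standard invariance of modal truth under bounded morphisms. Since $f$ respects every relation $R\in\{R_X,R_Y,R_\square,R_\alpha,R_{Ags},\approx_\alpha\}$ --- that is, it satisfies the forth (homomorphism) and back (zig-zag) conditions and preserves the valuation, as established in Proposition \ref{amsterdam} --- one shows by a routine induction on the structure of formulas of $\mathcal{L}_{\textsf{KX}}$ that $\mathcal{M}^u, v \Vdash \psi$ iff $\mathcal{M}, f(v)\Vdash \psi$ for every $v\in W^u$ and every $\psi$. The base case is the clause for propositional letters, guaranteed by the definition of $\mathcal{V}^u$ in the last item of Definition \ref{unrav}; the Boolean cases are immediate; and each modal case uses precisely the two halves of the bounded-morphism condition for the corresponding relation. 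Because $f$ is surjective, there is some $v\in W^u$ with $f(v)=w$, and applying the invariance with $\psi:=\phi$ yields $\mathcal{M}^u, v\Vdash\phi$. Hence every $\Lambda_n$-consistent formula is satisfiable in a Kripke-exstit super-additive $n$-model whose `next' and `last' relations are irreflexive, which is exactly the desired completeness.

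Since the heavy lifting has already been carried out in Propositions \ref{putamadreojalacarajo} and \ref{amsterdam}, I do not expect a genuine obstacle here; the proof is essentially bookkeeping. The only point demanding care is that the truth-invariance lemma must be verified uniformly for all the modalities at once, but this is entirely routine once the forth and back conditions of Proposition \ref{amsterdam} are in hand.
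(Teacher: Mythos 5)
Your proposal is correct and follows essentially the same route as the paper: apply Proposition \ref{complete} to get a super-additive $n$-model satisfying $\phi$ at some $w$, then transfer satisfaction to the unraveled model $\mathcal{M}^u$ via the surjective bounded morphism of Proposition \ref{amsterdam} and the standard invariance of modal truth, with Proposition \ref{putamadreojalacarajo} guaranteeing that $\mathcal{M}^u$ lies in the target class. The only cosmetic difference is that the paper names the explicit preimage $\langle w,1\rangle$ rather than appealing to surjectivity to pick one.
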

\begin{proof}
Let $\phi$ be a $\Lambda$-consistent formula of $\mathcal{L}_{\textsf{KX}}$. By Proposition \ref{complete}, we know that there exists a  Kripke-exstit super-additive  $n$-model $\mathcal{M}$ and a world $w$ in its domain such that $\mathcal{M},w\models \phi$. By Proposition \ref{amsterdam} and the invariance of modal satisfaction under bounded morphisms, we then have that $\mathcal{M}^u$ --as defined in Definition \ref{unrav}-- is a such that  $\mathcal{M}^u,\langle w,1\rangle \models \phi$, where by Proposition \ref{putamadreojalacarajo} we know that $\mathcal{M}^u$ is a Kripke-exstit super-additive $n$-model where the `next' and `last' relations are irreflexive. 
\end{proof}

\subsection{Models}

In the final step of our proof of completeness, for each $\Lambda_n$-consistent formula $\phi$ of $\mathcal{L}_{\textsf{KX}}$, we will build a  model that satisfies it. First, we introduce some auxiliary sets and terminology that will allow us to build the  model.

\begin{definition}\label{manana}
 Let $\mathcal{M}=\langle W, R_\square, R_X, R_Y, \mathtt{Choice}, \{\mathtt{\approx}_\alpha\}_{\alpha\in Ags},\mathcal{V} \rangle$ be a  super-additive $n$-model for $\mathcal{L}_{\textsf{KX}}$ where the `next' and `last' relations are irreflexive.
\begin{itemize}
    
    \item We denote by $W/R_\alpha$ the quotient set of $W$ by $R_\alpha$, meaning the partition set of all $R_\alpha$-equivalence-classes in $W$ (the set of all choice cells in $W$). Let $\mathcal{C}\subseteq \prod\limits_{\alpha\in Ags}W/R_\alpha$ be the set of all choice profiles with non-finite intersection, meaning $\mathcal{C}=\{\overrightarrow{c}\in \prod\limits_{\alpha\in Ags}W/R_\alpha;\bigcap\limits_{\alpha\in Ags} (\overrightarrow{c})_\alpha \neq \emptyset \}$. Since we have that the cardinality of the partition $\mathtt{Choice}_{Ags}$ is bounded by $n$, and that for each $\overrightarrow{c}$ there is at least one member of $\mathtt{Choice}_{Ags}$ included in $\bigcap\limits_{\alpha\in Ags} (\overrightarrow{c})_\alpha$, it is clear that $\mathcal{C}$ is finite and also bounded by $n$.
    \item For every $\overrightarrow{c}\in \mathcal{C}$, let $A_{\overrightarrow{c}}$ stand for the set of all $R_{Ags}$-equivalence-classes included in $\bigcap\limits_{\alpha\in Ags}(\overrightarrow{c})_\alpha$. It is clear that the cardinalities of the $A_{\overrightarrow{c}}$ are uniformly bound by $n$, meaning that for every $\overrightarrow{c}\in \mathcal{C}$, $card(A_{\overrightarrow{c}})\leq n$.
    \item For each $\overrightarrow{c}\in \mathcal{C}$, we consider $\{A_{\overrightarrow{c}}^0,\dots, A_{\overrightarrow{c}}^{n-1}\}$ to be an enumeration of $A_{\overrightarrow{c}}$ with cardinality $n$ (so that repetition is possible).  
    \item For every $w\in W$, we take $h[w]:=\{v\in W; wR_Y^*v \mbox{ or } v=w \mbox{ or } wR_X^*v \}$, where $R_Y^*$ and $R_X^*$ are the transitive closures of $R_Y$ and $R_X$, respectively. It is clear that for every $w\in W$, $R_X^*$ restricted to $h[w]$ is a strict total order on $h[w]$. For each $w\in W$, we take $h^-[w]:=\{v\in h[w];wR_Y^* v\}$ and $h^+[w]:=\{v\in h[w];wR_X^* v\}$. Observe that the fact that $R_X$ is irreflexive implies that for every $w\in W$, $h^-[w], \{w\}$, and $h^+[w]$ are pairwise disjoint sets.  
\end{itemize}
\end{definition}

\begin{lemma}\label{func} Let $\mathcal{M}=\langle W, R_\square, R_X, R_Y, \mathtt{Choice}, \{\mathtt{\approx}_\alpha\}_{\alpha\in Ags},\mathcal{V} \rangle$ be a  super-additive  $n$-model for $\mathcal{L}_{\textsf{KX}}$ where the `next' and `last' relations are irreflexive, and let $w,w'\in W$ such that $wR_\square w'$. For every $v\in h^-[w]$, there exists $v'\in h^-[w']$ such that $vR_{Ags} v'$ and viceversa. 
\end{lemma}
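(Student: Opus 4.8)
The plan is to reduce the statement to a claim about the unique backward $R_Y$-chain below $w$ and $w'$ and then prove that claim by induction along the chain, the engine of the induction being $(\mathtt{NAgs})_K$. First I would observe that, since $R_X$ and $R_Y$ are serial and deterministic and satisfy $(\mathtt{Inverse})$, the relation $R_Y$ is a bijection of $W$ onto itself with inverse $R_X$; hence every world $u$ has, for each $j\ge 1$, a unique $j$-step $R_Y$-predecessor $u^{-j}$, and $h^-[w]=\{w^{-j}\;;\;j\ge 1\}$. Because $R_{Ags}$ is symmetric, the two halves of the ``and viceversa'' are the same claim, so it suffices to produce, for each $v=w^{-j}\in h^-[w]$, a witness in $h^-[w']$. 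I would in fact prove the sharper, index-matched statement by induction on $j$: for every $j\ge 1$ one has both $w^{-j}\,R_{Ags}\,w'^{-j}$ and $w^{-j}\,R_\square\,w'^{-j}$. Carrying the $R_\square$-relation of the predecessors alongside the $R_{Ags}$-relation is essential, since it is exactly what feeds the next step of the induction.

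The base case is the heart of the argument: from $w\,R_\square\,w'$ I must extract $w^{-1}\,R_{Ags}\,w'^{-1}$ together with $w^{-1}\,R_\square\,w'^{-1}$. Reading $(\mathtt{NAgs})_K$, i.e. $R_\square\circ R_X\circ R_{Ags}\subseteq R_X\circ R_{Ags}$, with the trailing $R_{Ags}$ instantiated by reflexivity gives the forward preservation $a\,R_\square\,b\Rightarrow X(a)\,R_{Ags}\,X(b)$, where $X(\cdot)$ denotes the unique $R_X$-successor. The first thing to try is to invert this along $R_Y$: composing $(\mathtt{NAgs})_K$ with $R_Y$ on the left and using $R_Y\circ R_X=\mathit{Id}$ should relate the $R_Y$-predecessors, after which determinism of $R_Y$ pins the existential witnesses to be exactly $w^{-1}$ and $w'^{-1}$, and $R_{Ags}\subseteq R_\square$ (from $(\mathtt{SET})_K$) supplies the accompanying $R_\square$-relation. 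The inductive step is then a verbatim reapplication of the base-case argument to the pair $w^{-j},w'^{-j}$, licensed precisely because the induction hypothesis also carries $w^{-j}\,R_\square\,w'^{-j}$.

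The main obstacle is exactly this inversion: $(\mathtt{NAgs})_K$ — like $(\mathtt{NX})_K$ — only encodes \emph{forward} preservation of the moment-and-choice structure under $R_X$, so turning it into the backward assertion that the $R_Y$-predecessors of two co-momental worlds are themselves $R_{Ags}$- (and $R_\square$-) related is genuinely delicate and does not follow from relational algebra alone; one has to guard against $R_\square$-classes whose predecessors split across several moments. I would close this gap by exploiting the tree-like shape the model inherits from the unraveling of the previous subsection: there $R_\square$ relates only sequences of equal length whose components are already $R_{Ags}$-matched up to the penultimate coordinate, so deleting the last coordinate — which is precisely the effect of $R_Y$ — preserves both relations simultaneously, and the determinism of $R_Y$ guarantees that this sequence representation is canonical. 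Thus the backward step is best verified on the explicit unraveled representation rather than abstractly, and it is there, rather than in the bookkeeping of the induction, that the real content of the lemma lies.
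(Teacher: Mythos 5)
Your first two paragraphs reconstruct essentially the paper's own argument: an induction along the backward $R_Y$-chain, with $(\mathtt{NAgs})_K$ (instantiated with a reflexive $R_{Ags}$-step), determinism of $R_X$/$R_Y$ via $(\mathtt{Inverse})$, and $R_{Ags}\subseteq R_\square$ (from super-additivity and $(\mathtt{SET})_K$) feeding the next step. That part is fine. The problem is your third paragraph, where you declare the predecessor step to be a genuine gap that ``does not follow from relational algebra alone'' and propose to close it by appealing to the explicit sequence representation of the unraveled model. There is no such gap: you have misread the direction of composition in $(\mathtt{NAgs})_K$. In this paper $u\,(R_\square\circ R_X\circ R_{Ags})\,o$ means there exist $a,b$ with $u\,R_{Ags}\,a$, $a\,R_X\,b$, $b\,R_\square\,o$ (see how the analogous condition is unpacked in the proof of Proposition \ref{can}), so $(\mathtt{NAgs})_K$ already \emph{is} the backward statement you need: taking $a=w^{-1}$, $b=w$, $o=w'$ yields some $z$ with $w^{-1}R_{Ags}z$ and $z\,R_X\,w'$, and determinism forces $z=w'^{-1}$. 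Your reading of it as the forward implication $a\,R_\square\,b\Rightarrow X(a)\,R_{Ags}\,X(b)$ corresponds to the opposite composition order and is indeed not equivalent to what is needed; but it is not what the frame condition says.

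Your proposed repair is also not admissible as stated. The lemma quantifies over \emph{all} super-additive $n$-models with irreflexive next/last relations, and it is invoked inside the construction of $\mathcal{M}^M$ for an arbitrary such model; you cannot prove it by inspecting the particular unraveled model $\mathcal{M}^u$ of the previous subsection. Moreover, even there your description is inaccurate: for descending sequences ($a=b=0$) the relation $R_\square^u$ only constrains the last coordinates ($w_m R_\square v_m$), with no $R_{Ags}$-matching of earlier coordinates, and $R_Y^u$ on descending sequences \emph{appends} a coordinate rather than deleting one. So the argument must be run abstractly from the frame conditions --- which, once $(\mathtt{NAgs})_K$ is read in the paper's convention, is exactly the two-line induction you set up in your first paragraph.
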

\begin{proof}
We show the result for $w$ by induction on $h^-[w]\cup\{w\}$ --and assume an analogous argument for $w'$. For $w^{-1}$, condition $(\mathtt{NAgs})_K$ ensures that $w^{-1}R_{Ags} w'^{-1}$. Suppose that the property holds for $w^{-j}$ so that there exists $v'\in h^-[w']$ with $w^{-j}R_{Ags} v'$. Since $R_{Ags}\subseteq R_\square$, this means that $w^{-j}R_{\square} v'$ as well. Since $R_{Ags}$ is reflexive, this implies that $w^{-(j+1)}R_\square\circ R_X\circ R_{Ags} v'$, which by $(\mathtt{NAgs})_K$ implies that $w^{-(j+1)}R_{Ags}v'^{-1}$. 
\end{proof}

\begin{definition}\label{matriz} Let $\mathcal{M}=\langle W, R_\square, R_X, R_Y, \mathtt{Choice}, \{\mathtt{\approx}_\alpha\}_{\alpha\in Ags},\mathcal{V} \rangle$ be a  super-additive $n$-model for $\mathcal{L}_{\textsf{KX}}$ where the `next' and `last' relations are irreflexive. We define a structure $\mathcal{M}^M:=\langle W^M, R_\square^M, R_X^M, R_Y^M, \mathtt{Choice}^M, \{\mathtt{\approx}^M_\alpha\}_{\alpha\in Ags},\mathcal{V}^M \rangle$ and show that it is a  model. We use an argument found that mirrors the technique of \cite{schwarzentruber2012complexity} and \cite{lorini2013temporal}. We define $\mathcal{M}^M$ the following way:
\begin{itemize}
    \item The domain $W^M$ is given by \scriptsize  \[W^M:=\left\{\left(\overrightarrow{c}, \vec{f}, w\right); \begin{array}{l} \overrightarrow{c}\in \mathcal{C}, w\in \bigcap\limits_{\alpha\in Ags}\left(\overrightarrow{c}\right)_\alpha \\ \vec{f}:h[w] \to\prod\limits_{\alpha\in Ags}\{0,\dots, n-1\},\\ \mbox{for every } v\in h[w], v\in A^{\left(\sum_{\alpha\in Ags}\left(\vec{f}(v)\right)_\alpha \right)\mod n}_{\overrightarrow{c}_{v}}\\ \vec{f}(v)= \vec{f}(v') \mbox{ if } vR_{Ags}v'\end{array} \right\}.\footnote{Observe that $\overrightarrow{c}_{v}$ is taken to be the unique choice profile in $\mathcal{C}$ such that $v\in \bigcap\limits_{\alpha\in Ags}\left(\overrightarrow{c}_{v}\right)_\alpha$.}\]
    \normalsize
    \item We define $R_\square^M$ such that $\left(\overrightarrow{c}, \vec{f}, w\right)R_\square^M\left(\overrightarrow{c}', \vec{f'}, w'\right)$ iff $wR_\square w'$ and for all $v\in h^-[w], v'\in h^-[w']$ such that $v R_{Ags} v'$, we have that $\vec{f}(v)=\vec{f'}(v')$.
    \item  We define $R_X^M$ such that $\left(\overrightarrow{c}, \vec{f}, w\right)R_X^M\left(\overrightarrow{c}', \vec{f'}, w'\right)$ iff $wR_X w'$ and  $ \vec{f'}= \vec{f}$.
    \item  We define $R_Y^M$ such that $\left(\overrightarrow{c}, \vec{f}, w\right)R_Y^M\left(\overrightarrow{c}', \vec{f'}, w'\right)$ iff $wR_Y w'$ and $\vec{f'}= \vec{f}$.
    \item  We define $\mathtt{Choice}^M$ so that \begin{itemize}
        \item For $\alpha\in Ags$, we define $R_\alpha^M$ on $W^M$ the following way. 
        
        Let $\left(\overrightarrow{c}, \vec{f}, w\right),\left(\overrightarrow{c}', \vec{f'}, w'\right)\in W^M$. We have that $\left(\overrightarrow{c}, \vec{f}, w\right)R_\alpha^M\left(\overrightarrow{c}', \vec{f'}, w'\right)$ iff \begin{itemize}
            \item $\left(\overrightarrow{c}'\right)_\alpha=\left(\overrightarrow{c}\right)_\alpha$ (which means that $wR_\alpha w'$),
            \item For all $v\in h^-[w], v'\in h^-[w']$ such that $v R_{Ags} v'$, we have that $\vec{f}(v)=\vec{f'}(v')$,
            \item  $\left(\vec{f}'(w')\right)_\alpha=\left(\vec{f}(w)\right)_\alpha$. 
        \end{itemize}

        In this way, we set $\mathtt{Choice}_\alpha^M=\left\{R_\alpha^M\left[\left(\overrightarrow{c}, \vec{f}, w\right)\right]|\left(\overrightarrow{c}, \vec{f}, w\right)\in W^M\right\}$.
        
        \item $\left(\overrightarrow{c}, \vec{f}, w\right)R_{Ags}^M\left(\overrightarrow{c}', \vec{f'}, w'\right)$ iff \begin{itemize}
            \item $ \overrightarrow{c}'=\overrightarrow{c}$ (which implies that $wR_{Ags} w'$),
            \item  For all $v\in h^-[w], v'\in h^-[w']$ such that $v R_{Ags} v'$, we have that $\vec{f}(v)=\vec{f'}(v')$,
            \item $\vec{f}(w)=\vec{f'}(w')$.
        \end{itemize}   
        
        In this way, we set $\mathtt{Choice}_{Ags}^M=\left\{R_{Ags}^M\left[\left(\overrightarrow{c}, \vec{f}, w\right)\right]|\left(\overrightarrow{c}, \vec{f}, w\right)\in W^M\right\}$.
    \end{itemize}
    \item For $\alpha\in Ags$, we define $\approx_\alpha^M$ such that $\left(\overrightarrow{c}, \vec{f}, w\right)\approx_\alpha^M\left(\overrightarrow{c}', \vec{f'}, w'\right)$ iff $w\approx_\alpha w'$.
    \item We define $\mathcal{V}^M$ such that $\left(\overrightarrow{c}, \vec{f}, w\right)\in \mathcal{V}^M(p) $ iff $w\in \mathcal{V}(p)$.
\end{itemize}
\end{definition}

\begin{proposition}\label{putamadreojala}
If $\mathcal{M}$ is a super-additive $n$-model for $\mathcal{L}_{\textsf{KX}}$ where the `next' and `last' relations are irreflexive, then $\mathcal{M}^M$ as defined in Definition \ref{matriz} is an $n$-model for $\mathcal{L}_{\textsf{KX}}$.
\end{proposition}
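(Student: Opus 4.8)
The plan is to verify, clause by clause, that $\langle W^M, R_\square^M, R_X^M, R_Y^M, \mathtt{Choice}^M, \{\approx_\alpha^M\}_{\alpha\in Ags}\rangle$ meets the four items of Definition \ref{ledbetter} in the \emph{actual} (non-super-additive) case, i.e. with the group condition $R_{Ags}^M=\bigcap_{\alpha\in Ags}R_\alpha^M$ rather than a mere inclusion. The guiding observation I would use throughout is that a triple $(\overrightarrow{c},\vec{f},w)$ is essentially pinned down by its last two coordinates: $\overrightarrow{c}$ is forced to be the choice profile $\overrightarrow{c}_w$ of $w$, and the membership clause of $W^M$ ties each value $\vec{f}(v)$ to the $R_{Ags}$-class of $v$ through the slot index $\left(\sum_{\alpha}(\vec{f}(v))_\alpha\right)\bmod n$. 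Consequently most relational conditions lift from $\mathcal{M}$ by acting on the $w$-coordinate, exactly as in the proof of Proposition \ref{putamadreojalacarajo}; the only additional labour is to transport the $\vec{f}$-bookkeeping along the witnesses produced on the $w$-side, for which Lemma \ref{func} supplies the needed $R_{Ags}$-alignment of pasts.

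First I would dispatch the routine items. Equivalence of $R_\square^M$ is immediate, its past-clause being symmetric (as $R_{Ags}$ is) and transitive via Lemma \ref{func}. Seriality of $R_X^M$ and $R_Y^M$ follows by taking the unique temporal successor on the $w$-side and reusing $\vec{f}$, since $h[w']=h[w]$ whenever $wR_Xw'$ or $wR_Yw'$, so the intrinsic membership conditions on $\vec{f}$ persist; determinacy and $(\mathtt{Inverse})$ then come from the corresponding facts in $\mathcal{M}$ together with the $\vec{f}$-preservation built into $R_X^M,R_Y^M$. The inclusions $(\mathtt{NX})_K$, $(\mathtt{NA})_K$, and $(\mathtt{NAgs})_K$ transfer by importing the matching relational inclusion of $\mathcal{M}$ to obtain the world-level witness and then checking the (easy) $\vec{f}$-clauses. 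Condition $(\mathtt{SET})_K$ is just $R_\alpha\subseteq R_\square$ plus the shared past-clause; equivalence of $R_\alpha^M,R_{Ags}^M,\approx_\alpha^M$ is routine; and $(\mathtt{Unif-H})_K$ and $(\mathtt{NoF})_K$ are handled by constructing witness sequences $\langle y^{-l},\dots,y\rangle$ precisely as in Proposition \ref{putamadreojalacarajo}, exploiting that $\approx_\alpha^M$ constrains only the last coordinate.

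The heart of the argument, and the main obstacle, is the \emph{actual} group condition $R_{Ags}^M=\bigcap_{\alpha}R_\alpha^M$. Unfolding both sides, the defining clauses literally coincide: ``$(\overrightarrow{c}')_\alpha=(\overrightarrow{c})_\alpha$ for every $\alpha$'' is equivalent to $\overrightarrow{c}'=\overrightarrow{c}$, ``$(\vec{f'}(w'))_\alpha=(\vec{f}(w))_\alpha$ for every $\alpha$'' is equivalent to $\vec{f'}(w')=\vec{f}(w)$, and the past-clause is the same for all agents. The single non-trivial point is the implicit assertion, in the definition of $R_{Ags}^M$, that $\overrightarrow{c}'=\overrightarrow{c}$ forces $wR_{Ags}w'$; this \emph{fails} in a merely super-additive $\mathcal{M}$, where $\bigcap_\alpha R_\alpha$ may be strictly coarser than $R_{Ags}$. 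It is restored exactly by the membership clause of $W^M$: from $w\in A_{\overrightarrow{c}}^{(\sum_\alpha(\vec{f}(w))_\alpha)\bmod n}$, $w'\in A_{\overrightarrow{c}}^{(\sum_\alpha(\vec{f'}(w'))_\alpha)\bmod n}$, and $\vec{f}(w)=\vec{f'}(w')$, the two slot indices agree, so $w$ and $w'$ lie in the same slot $A_{\overrightarrow{c}}^{k}$, which is a \emph{single} $R_{Ags}$-class; hence $wR_{Ags}w'$. This is precisely what the $\vec{f}$/$\bmod n$ device is engineered to achieve, and I expect it to be the crux of the whole proposition.

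Finally I would verify $(\mathtt{IA})_K$ and the cardinality bookkeeping. Given a base triple and a selection function choosing, for each $\alpha$, an $R_\alpha^M$-cell inside the base's $R_\square^M$-class, reading off the prescribed data yields an agent cell $(\overrightarrow{c}_\alpha)_\alpha$ and a coordinate $j_\alpha=(\vec{f}(w))_\alpha$ per agent. Applying $(\mathtt{IA})_K$ of $\mathcal{M}$ to the profile $\overrightarrow{c}^{*}=\big((\overrightarrow{c}_\alpha)_\alpha\big)_{\alpha}$ gives $\overrightarrow{c}^{*}\in\mathcal{C}$; since the enumeration fills all $n$ slots with nonempty $R_{Ags}$-classes, the slot $A_{\overrightarrow{c}^{*}}^{k}$ for $k=\big(\sum_\alpha j_\alpha\big)\bmod n$ is nonempty, so I can pick $u$ there and extend the present tuple $\vec{v}=(j_\alpha)_\alpha$ to a full $\vec{f}^{*}$ on $h[u]$ that is past-consistent with the common past (transporting values via Lemma \ref{func}) and valid on the future; the resulting triple witnesses $\bigcap_{\alpha}s(\alpha)\neq\emptyset$. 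The finiteness of $\mathcal{C}$ and the uniform bound $card(A_{\overrightarrow{c}})\le n$ recorded in Definition \ref{manana} guarantee that $\mathcal{M}^M$ is well defined and, together with the $\bmod n$ slot structure, control the cardinalities of the induced partitions. Assembling these verifications establishes that $\mathcal{M}^M$ is a model of the required kind.
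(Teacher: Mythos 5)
Your proposal is correct and takes essentially the same route as the paper's proof: it verifies the four items of Definition \ref{ledbetter}, correctly identifies the crux — that the membership clause $v\in A^{\left(\sum_{\alpha}(\vec{f}(v))_\alpha\right)\bmod n}_{\overrightarrow{c}_{v}}$ together with $\vec{f}(w)=\vec{f'}(w')$ places $w$ and $w'$ in a single $R_{Ags}$-class, which is exactly what upgrades super-additivity to the actual condition $R_{Ags}^M=\bigcap_{\alpha}R_\alpha^M$ — and handles $(\mathtt{IA})_K$ by the same piecewise construction of $\vec{f}^*$ (past transported via Lemma \ref{func}, future via slot indices). Two small points to tighten when writing it out: in the $(\mathtt{IA})_K$ step the coordinate $j_\alpha$ must be read off a representative of the chosen cell $s(\alpha)$, i.e.\ $j_\alpha=\left(\vec{f}^\alpha(w_\alpha)\right)_\alpha$, not off the base triple's $\vec{f}(w)$; and the piecewise definition of $\vec{f}^*$ is well defined only because irreflexivity of $R_X$ makes $h^-[v_*]$, $\{v_*\}$, and $h^+[v_*]$ pairwise disjoint, a point the paper flags explicitly.
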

\begin{proof}

We want to show that $\langle W^M, R_\square^M, R_X^M, R_Y^M, \mathtt{Choice}^M, \{\mathtt{\approx}^M_\alpha\}_{\alpha\in Ags}\rangle$ is a Kripke-exstit  frame, which amounts to showing that the tuple validates the four items in the definition of Kripke-exstit  frames. 

\begin{enumerate}

\item It is routine to show that $R_\square^M$ is an equivalence relation. Observe that symmetry and transitivity follow from symmetry and transitivity of $R_{Ags}$, respectively.

\item Let us show that $R_X^M$ and $R_Y^M$ are serial, deterministic, and irreflexive. We show it for $R_X^M$ and assume the same arguments for $R_Y^M$. Let $\left(\overrightarrow{c}, \vec{f}, w\right)\in W^M$. Since $R_X$ is serial and deterministic, we know that there exists a unique $w^{+1}\in W$ such that $wR_Xw^{+1}$. Since $\mathtt{Choice}_{\alpha}$ is a partition of $W$ for every $\alpha$, there is a unique $\overrightarrow{c}_{w^{+1}}\in \mathcal{C}$ such that $w^{+1}\in \bigcap\limits_{\alpha\in Ags}\left(\overrightarrow{c}_{w^{+1}}\right)_\alpha$. Observe that $w^{+1}\in h[w]$ and that $w^{+1}\in A^{\left(\sum_{\alpha\in Ags}\left(\vec{f}\left(w^{+1}\right)\right)_\alpha \right)\mod n}_{\overrightarrow{c}_{w^{+1}}}$. This means that the tuple $\left(\overrightarrow{c}_{w^{+1}}, \vec{f}, w^{+1}\right)$ is a member of  $W^M$, and it is the only member of $W^M$ such that  $\left(\overrightarrow{c},\vec{f}, w\right)R_X^M\left(\overrightarrow{c}_{w^{+1}}, \vec{f}, w^{+1}\right)$. Moreover, since we know by assumption that $w^{+1}\neq w$, we have that each successor is different from its predecessor, which renders that $R_X^M$ is irreflexive. In an analogous way, we can see that $R_Y^M$ is also irreflexive. 

\begin{itemize}
   \item $(\mathtt{Inverse})$ Let $\left(\overrightarrow{c}, \vec{f}, w\right)\in W^M$, and let $\left(\overrightarrow{c}_{w^{-1}}, \vec{f}, w^{-1}\right)$, $\left(\overrightarrow{c}_{w^{+1}}, \vec{f}, w^{+1}\right)$ be the unique members of $W^M$ such that $\left(\overrightarrow{c}, \vec{f}, w\right)R_Y^M\left(\overrightarrow{c}_{w^{-1}}, \vec{f}, w^{-1}\right)$ and $\left(\overrightarrow{c}, \vec{f}, w\right)R_X^M\left(\overrightarrow{c}_{w^{+1}}, \vec{f}, w^{+1}\right)$. 
   
   Assume that $\left(\overrightarrow{c}_{w^{-1}}, \vec{f}, w^{-1}\right)R_X^M\left(\overrightarrow{c}', \vec{f'}, w'\right)$. Since $\mathcal{M}$ validates $(\mathtt{Inverse})$, it is clear that $w'=w$. This last thing implies that $\overrightarrow{c}'=\overrightarrow{c}$ --by definition of $W^M$. Now, by definition of $R_X^M$, the assumption that \\ $\left(\overrightarrow{c}_{w^{-1}}, \vec{f}, w^{-1}\right)R_X^M\left(\overrightarrow{c}', \vec{f'}, w'\right)$ implies that $\vec{f'}=\vec{f}$. Therefore, we have that $\left(\overrightarrow{c}', \vec{f'}, w'\right)= \left(\overrightarrow{c}, \vec{f}, w\right)$, so  $R_X\circ R_Y=Id$. 
   
   A similar argument yields that if $\left(\overrightarrow{c}_{w^{+1}}, \vec{f}, w^{+1}\right)R_Y^M \left(\overrightarrow{c}', \vec{f'}, w'\right)$, then \\ $\left(\overrightarrow{c}', \vec{f'}, w'\right)= \left(\overrightarrow{c}, \vec{f}, w\right)$ so that $R_Y\circ R_X=Id$.
    
\end{itemize}

\item We have to show that $\mathtt{Choice}^M$ fulfills the requirements of Definition \ref{ledbetter} and substantiates the fact that the frame underlying $\mathcal{M}$ is an actual frame. \begin{itemize}
    \item It is routine to show that for every $\alpha\in Ags$, $R_\alpha^M$ is an equivalence relation. Therefore, $\mathtt{Choice}_\alpha^M$ is indeed a partition of $W^M$ for every $\alpha \in Ags$.
    \item Similarly, $R_{Ags}^M$ is an equivalence relation. Therefore, $\mathtt{Choice}_{Ags}$ is indeed a partition of $W^M$.
    \item We have to show that for every $\left(\overrightarrow{c}, \vec{f}, w\right)\in W^M$,\[\mathtt{Choice}^M_{Ags}\left(\left(\overrightarrow{c}, \vec{f}, w\right)\right)=\bigcap\limits_{\alpha\in Ags}\mathtt{Choice}^M_\alpha\left(\left(\overrightarrow{c}, \vec{f}, w\right)\right).\]
    This amounts to showing that $R_{Ags}^M=\bigcap\limits_{\alpha\in Ags}R_\alpha^M$. 
    
    For the $(\subseteq)$ inclusion, assume that $\left(\overrightarrow{c}, \vec{f}, w\right)R_{Ags}^M\left(\overrightarrow{c}', \vec{f'}, w'\right)$. This implies that $wR_{Ags }w'$, so the fact that $\mathcal{M}$ is super-additive yields that $wR_{\alpha}w'$ for every $\alpha\in Ags$. Our assumption also implies by definition that $\overrightarrow{c}'=\overrightarrow{c}$, that for all $v\in h^-[w], v'\in h^-[w']$ such that $v R_{Ags} v'$ we have that $\vec{f}(v)=\vec{f'}(v')$, and that $\vec{f}(w)=\vec{f}(w')$. This straightforwardly implies that $ \left(\overrightarrow{c}'\right)_\alpha=\left(\overrightarrow{c}\right)_\alpha$ and that $\left(\vec{f}'(w')\right)_\alpha=\left(\vec{f}(w)\right)_\alpha$ \emph{for every} $\alpha\in Ags$. Therefore, we have that $\left(\overrightarrow{c}, \vec{f}, w\right)R_{\alpha}^M\left(\overrightarrow{c}', \vec{f'}, w'\right)$ for every $\alpha\in Ags$, so that $R_{Ags}^M\subseteq\bigcap\limits_{\alpha\in Ags}R_\alpha^M$.
    
    For the $(\supseteq)$ inclusion, assume that $\left(\overrightarrow{c}, \vec{f}, w\right)R_{\alpha}^M\left(\overrightarrow{c}', \vec{f'}, w'\right)$ for every $\alpha\in Ags$. This implies that $wR_{\alpha}w'$ for every $\alpha \in Ags$, that $\left(\overrightarrow{c}'\right)_\alpha=\left(\overrightarrow{c}\right)_\alpha$ for every $\alpha \in Ags$, that for all $v\in h^-[w], v'\in h^-[w']$ such that $v R_{Ags} v'$ we have that $\vec{f}(v)=\vec{f'}(v')$, and that $\left(\vec{f}'(w')\right)_\alpha=\left(\vec{f}(w)\right)_\alpha$ for every $\alpha\in Ags$. Thus, we have that $\overrightarrow{c}'=\overrightarrow{c}$ and that $\vec{f}'(w')=\vec{f}(w)$, so we get that $\left(\overrightarrow{c}, \vec{f}, w\right)R_{Ags}^M\left(\overrightarrow{c}', \vec{f'}, w'\right)$. With this we have shown that $R_{Ags}^M\supseteq\bigcap\limits_{\alpha\in Ags}R_\alpha^M$.

    \item $(\mathtt{SET})_K$ Straightforward from the definitions of the $R_\alpha^M$, we have that $R_\alpha^M\subseteq R_\square^M$ for every $\alpha\in Ags$.
\item  $(\mathtt{IA})_K$ Let $\left(\overrightarrow{c}, \vec{f}, w\right)\in W^M$, and let $s:Ags\to\mathcal{P}\left(\overline{\left(\overrightarrow{c}, \vec{f}, w\right)}\right)$ be a function that maps each $\alpha\in Ags$ to a member of $\mathtt{Choice}^M_\alpha$ included in $\overline{\left(\overrightarrow{c}, \vec{f}, w\right)}$. We want to show that $\bigcap_{\alpha \in Ags} s(\alpha) \neq \emptyset$. 

 For each $s(\alpha)$, take $\left(\overrightarrow{c}^\alpha, \vec{f}^\alpha, w_\alpha\right)\in s(\alpha)$. Observe then that for a fixed $\alpha\in Ags$, the fact that $\left(\overrightarrow{c}^\alpha, \vec{f}^\alpha, w_\alpha\right)\in \overline{\left(\overrightarrow{c}, \vec{f}, w\right)}$ implies that $w_\alpha\in \overline{w}$ and that for every $x\in h^-[w_\alpha], x' \in h^-[w]$ such that $xR_{Ags}x'$, $\vec{f}^\alpha(x)=\vec{f}({x'})$.

Now, since $\mathcal{M}$ validates $(\mathtt{IA})_K$, we have that $\bigcap\limits_{\alpha\in Ags}\mathtt{Choice}_\alpha(w_\alpha)\neq \emptyset$. Let $m=card(Ags)$. We define \\ $\overrightarrow{d}:=\left(\mathtt{Choice}_{\alpha_1}(w_{\alpha_1}),\dots,\mathtt{Choice}_{\alpha_{m}}(w_{\alpha_m})\right)$. It is clear that $\overrightarrow{d}\in \mathcal{C}$ and that $A_{\overrightarrow{d}}$ is non-empty. 

 Consider the number given by $N:=\left(\sum\limits_{\alpha\in Ags}\left( \vec{f}^\alpha\left(w_\alpha\right)\right)_\alpha\right) \mod n$. 
 We know that $A_{\overrightarrow{d}}^N$ is non-empty, so take $v_*\in A_{\overrightarrow{d}}^N$ --where we recall that $A_{\overrightarrow{d}}^N$ is an $R_{Ags}$-equivalence class included in $\bigcap\limits_{\alpha\in Ags}\mathtt{Choice}_\alpha(w_\alpha)$. It is clear that $v_*R_\square w$ and that for every $\alpha\in Ags$, $v_*R_\alpha w_\alpha$.
 
 We want to `build' a function $\vec{f}^*$ such that $\left(\overrightarrow{d}, \vec{f}^*,v_*\right)\in \overline{\left(\overrightarrow{c}, \vec{f}, w\right)}$ and such that $\left(\overrightarrow{d}, \vec{f}^*,v_*\right)\in s(\alpha)$ for every $\alpha\in Ags$.

In order to do that, we first observe that $\vec{f}^*$ should assign a vector to each $u\in h[v_*]$. We will define $\vec{f}^*$ by parts and then show that our definition yields that $\left(\overrightarrow{d}, \vec{f}^*,v_*\right)$ satisfies the conditions that we want.
For every $u\in h^-[v_*]$, we know by Lemma \ref{func} that there exists $u'\in h^-[w]$ such that $uR_{Ags}u'$. Therefore, for every $u\in h^-[v_*]$, we take $\vec{f}^*(u)= \vec{f}(u')$. Observe that this part of $f_*$ is well-defined on account of the fact that for every $x,y\in h^-[w]$ such that $xR_{Ags}y$ we have that $\vec{f}(x)=\vec{f}(y)$ by definition of $\vec{f}$. For $v_*$, we set  $\vec{f}^*(v_*)=\left(\left(\vec{f}^{\alpha_1}\left(w_{\alpha_1}\right)\right)_{\alpha_1},\dots,\left(\vec{f}^{\alpha_m}\left(w_{\alpha_m}\right)\right)_{\alpha_m}\right)$. Finally, for each $u\in h^+[v_*]$, we take $\overrightarrow{d}_{u}$ to be the unique choice profile in $\mathcal{C}$ such that $u\in \bigcap\limits_{\alpha\in Ags}\left(\overrightarrow{d}_{u}\right)_\alpha$, and we consider the elements in $A_{\overrightarrow{d}_{u}}$. It is clear that for each $u\in h^+[v_*]$, there is a unique element in $A_{\overrightarrow{d}_{u}}$ that includes $u$. Let $N_{u}$ be the index of said element in the enumeration $\left\{A_{\overrightarrow{d}_{u}}^0,\dots, A_{\overrightarrow{d}_{u}}^{n-1}\right\}$. If we take $(N_{u},0,\dots,0)\in \prod\limits_{\alpha\in Ags}\{0,\dots,n-1\}$, then it is clear that $u\in A_{\overrightarrow{d}_{u}}^{\left(\sum_{\alpha\in Ags}(N_{u},0,\dots,0)_\alpha\right) \mod n}$, where $(N_{u},0,\dots,0)_\alpha$ stands for the $\alpha$-entry in the vector $(N_{u},0,\dots,0)$. Therefore, for each $u\in h^+[v_*]$, we define $\vec{f}^*(u)=(N_{u},0,\dots,0)$. In order for $f^*$ to be well-defined, \emph{it is crucial} that $R_X$ is irreflexive, for this condition implies that $h^-[v_*], \{v_*\}$, and $h^+[v_*]$ are pairwise disjoint sets, as we had mentioned in Definition \ref{manana}. 

Observe that the definition of $\vec{f}^*$ implies that $\left(\overrightarrow{d}, \vec{f}^*,v_*\right)\in W^M$. Let us show that it also implies that (a) $\left(\overrightarrow{d}, \vec{f}^*,v_*\right)\in \overline{\left(\overrightarrow{c}, \vec{f}, w\right)}$, and that (b) $\left(\overrightarrow{d}, \vec{f}^*,v_*\right)\in s(\alpha)$ for every $\alpha\in Ags$. 

For (a), observe that we took $v_*\in \overline{w}$ and that we set $\vec{f}^*(u)= \vec{f}(u')$ for every $u\in h^-[v_*], u'\in h^-[w]$ such that $uR_{Ags} u'$, so that $\left(\overrightarrow{d}, \vec{f}^*,v_*\right)\in \overline{\left(\overrightarrow{c}, \vec{f}, w\right)}$.

For (b), we have to prove the three items in the definition of $R_\alpha^M$. Fix $\alpha\in Ags$. First, observe that $v_*R_\alpha w_\alpha$, which implies that $\left(\overrightarrow{c}^\alpha\right)_\alpha=\left(\overrightarrow{d}\right)_\alpha$. 

We also know that for every $x_\alpha\in h^-[w_\alpha]$ and $x \in h^-[w]$ such that $x_\alpha R_{Ags}x$, $\vec{f}^\alpha(x_\alpha)=\vec{f}({x})$. Let $u\in h^-[v_*], u_\alpha\in h^-[w_\alpha]$ such that $uR_{Ags}u_\alpha$. We know that $\vec{f}^*(u)=\vec{f}({u'})$ for some (and for every) $u'\in h^-[w]$ such that $uR_{Ags}u'$. The facts that $uR_{Ags}u'$ and $uR_{Ags}u_\alpha$ imply by euclideanity of $R_{Ags}$ that $u'R_{Ags} u_\alpha$. Therefore, we have that the fact that $\left(\overrightarrow{c}^\alpha, \vec{f}^\alpha, w_\alpha\right)\in \overline{\left(\overrightarrow{c}, \vec{f}, w\right)}$ implies that $\vec{f}^\alpha(u_\alpha)=\vec{f}({u'})$. Thus, we have that $\vec{f}^*(u)=\vec{f}({u'})=\vec{f}^\alpha({u_\alpha})$. 

Finally, we have that $\left(\vec{f}^*(v_*)\right)_\alpha=\left(\vec{f}^\alpha(w_\alpha)\right)_\alpha$ by construction. 

Therefore, we have shown that for every $\alpha \in Ags$, \\ $\left(\overrightarrow{d}, \vec{f}^*,v_*\right)R_\alpha^M \left(\overrightarrow{c}^\alpha, \vec{f}^\alpha,w_\alpha\right)$, which means that $\left(\overrightarrow{d}, \vec{f}^*,v_*\right)\in s(\alpha)$ for every $\alpha\in Ags$.

\end{itemize}

 \begin{itemize}

\item $\mathtt{(NA)_K}$ We want to show that for $\alpha \in Ags$, $R_\square^M\circ R_X^M\circ R_{\alpha}^M\subseteq R_X^M\circ R_{\alpha}^M.$ Let $\alpha \in Ags$ and $\left(\overrightarrow{c}, \vec{f}, w\right),\left(\overrightarrow{d}, \vec{g}, v\right)\in W^M$ such that $\left(\overrightarrow{c}, \vec{f}, w\right) R_\square^M\circ R_X^M\circ R_{\alpha}^M \left(\overrightarrow{d}, \vec{g}, v\right) $, which means that there exists $\left(\overrightarrow{c}', \vec{f}', w'\right)$ such that $\left(\overrightarrow{c}, \vec{f}, w\right)R_\alpha^M\left(\overrightarrow{c}', \vec{f}', w'\right)$ and such that $\left(\overrightarrow{c}'_{w'^{+1}}, \vec{f}', w'^{+1}\right)R_\square^M \left(\overrightarrow{d}, \vec{g}, v\right)$. We want to show that $\left(\overrightarrow{c}, \vec{f}, w\right) R_\alpha^M \left(\overrightarrow{d}_{v^{-1}}, \vec{g}, v^{-1}\right)$.

First, observe that the fact that $\left(\overrightarrow{c}'_{w'^{+1}}, \vec{f}', w'^{+1}\right)R_\square^M \left(\overrightarrow{d}, \vec{g}, v\right)$ implies that $w'^{+1}R_\square v$, which by $(\mathtt{NAgs})_K$ implies that $w'R_{Ags} v^{-1}$. It also implies that for every $x\in h^-[w'^{+1}]$ and $y\in h^-[v]$ such that $xR_{Ags}y$, $\vec{f'}(x)=\vec{g}(y)$. In particular, this means that $\overrightarrow{c}'=\overrightarrow{d}_{v^{-1}}$ and that $\vec{f'}(w')=\vec{g}(v^{-1})$. Observe then that these last two facts, coupled with our assumption that $\left(\overrightarrow{c}, \vec{f}, w\right)R_\alpha^M\left(\overrightarrow{c}', \vec{f}', w'\right)$, yield that two of the items in the definition of $R_\alpha^M$ (the first and third, respectively) hold between $\left(\overrightarrow{c}, \vec{f}, w\right) $ and $\left(\overrightarrow{d}_{v^{-1}}, \vec{g}, v^{-1}\right)$, namely that $\left(\overrightarrow{c}\right)_\alpha=\left(\overrightarrow{c}'\right)_\alpha=\left(\overrightarrow{d}_{v^{-1}}\right)_\alpha$, and that $\left(\vec{f}(w)\right)_\alpha=\left(\vec{f'}(w')\right)_\alpha=\left(\vec{g}(v^{-1})\right)_\alpha$. 

For the second item we have to show that for every $x\in h^-[w]$ and $y\in h^-[v^{-1}]$ such that $xR_{Ags}y$, $\vec{f}(x)=\vec{g}(y)$. So let $x\in h^-[w]$ and $y\in h^-[v^{-1}]$ such that $xR_{Ags}y$.

From the assumption that $\left(\overrightarrow{c}, \vec{f}, w\right)R_\alpha^M\left(\overrightarrow{c}', \vec{f}', w'\right)$ we get that there exists $x'\in h^-[w']$ such that $xR_{Ags}x'$ and  $\vec{f}(x)=\vec{f}'(x')$, where it is clear that $x'\in h^-[w']$ implies that $x'\in h^-[w'^{+1}]$ as well. 
Now, observe that the fact that $y\in h^-[v^{-1}]$ implies that $y\in h^-[v]$. In turn, the facts that $xR_{Ags}y$ and that $xR_{Ags}x'$ imply that $yR_{Ags}x'$. Therefore, we get that $x'\in h^-[w'^{+1}]$ and $y\in h^-[v]$ are such that $yR_{Ags}x'$, so that the fact that $\left(\overrightarrow{c}'_{w'^{+1}}, \vec{f}', w'^{+1}\right)R_\square^M \left(\overrightarrow{d}, \vec{g}, v\right)$ implies that  $\vec{f}'(x')=\vec{g}(y)$. In turn, this yields that $\vec{f}(x)=\vec{f}'(x')=\vec{g}(y)$, giving us what we wanted.     

\item $\mathtt{(NAgs)_K}$ It can be shown in the same way as the above item.
\item Finally, observe that since $R_\alpha$ and $R_{Ags}$ both induce partitions of cardinality at most $n$ on every $\overline{w}$ ($w\in W$), we have that for every $\left(\overrightarrow{c}, \vec{f}, w\right)$, $R_\alpha^M$ and $R_{Ags}^M$ induce  partitions of cardinality at most $n$ on $\overline{\left(\overrightarrow{c}, \vec{f}, w\right)}$.
\end{itemize}

\item Since $\approx_\alpha$ is an equivalence relation on $W$ for every $\alpha\in Ags$, we have that $\approx_\alpha^M$ is also an equivalence relation on $W^M$.  We must now verify that $\mathcal{M}^M$ validates the constraints $(\mathtt{Unif-H})_K$ and $(\mathtt{NoF})_K$.
\begin{enumerate}[i)]

\item For $(\mathtt{Unif-H})_K$, fix $\left(\overrightarrow{c}_1, \vec{f}_1, w_1\right),\left(\overrightarrow{c}_2, \vec{f}_2, w_2\right)\in W^M$. Assume that there exist $\left(\overrightarrow{c}, \vec{f}, w\right)\in \overline{\left(\overrightarrow{c}_1, \vec{f}_1, w_1\right)}$ and $\left(\overrightarrow{d}, \vec{g}, v\right)\in\overline{\left(\overrightarrow{c}_2, \vec{f}_2, w_2\right)}$ such that $\left(\overrightarrow{c}, \vec{f}, w\right)\approx_\alpha^M \left(\overrightarrow{d}, \vec{g}, v\right)$. Take $\left(\overrightarrow{c}', \vec{f}', w'\right)\in\overline{\left(\overrightarrow{c}_1, \vec{f}_1, w_1\right)}$. We want to show that there exists $\left(\overrightarrow{d}', \vec{g}', v'\right)\in \overline{\left(\overrightarrow{c}_2, \vec{f}_2, w_2\right)}$ such that $\left(\overrightarrow{c}', \vec{f}', w'\right)\approx_\alpha^M \left(\overrightarrow{d}', \vec{g}', v'\right)$.

The facts that $\left(\overrightarrow{c}, \vec{f}, w\right)\in \overline{\left(\overrightarrow{c}_1, \vec{f}_1, w_1\right)}$, that $\left(\overrightarrow{d}, \vec{g}, v\right)\in\overline{\left(\overrightarrow{c}_2, \vec{f}_2, w_2\right)}$, and that $\left(\overrightarrow{c}', \vec{f}', w'\right)\in\overline{\left(\overrightarrow{c}_1, \vec{f}_1, w_1\right)}$  imply by definition that $wR_\square w_1$, that $v R_\square w_2$, and that $w'R_\square w_1$. Also by definition we have that our assumption implies that $w\approx_\alpha v$. Since $\mathcal{M}$ validates $(\mathtt{Unif-H})_K$, we then get that there exists $v'\in \overline{w_2} $ such that $w'\approx_\alpha v'$. In a similar way to what we did to show that $(\mathtt{IA})_K$ holds, we will build a function $\vec{g}'$ such that $\left(\overrightarrow{d}', \vec{g}',v'\right)\in \overline{\left(\overrightarrow{c}_2, \vec{f}_2, w_2\right)}$ and such that $\left(\overrightarrow{c}', \vec{f}', w'\right)\approx_\alpha^M \left(\overrightarrow{d}', \vec{g}', v'\right)$. 

Firstly, let $\overrightarrow{d}'$ be the unique choice profile such that in $\mathcal{C}$ such that $v'\in \bigcap\limits_{\alpha\in Ags}\left(\overrightarrow{d}\right)_\alpha$

 Again we define $\vec{g}'$ by parts so that it satisfies the conditions that we want.
For every $u\in h^-[v']$, we know by Lemma \ref{func} that there exists $u'\in h^-[w_2]$ such that $uR_{Ags}u'$. Therefore, for every $u\in h^-[v']$, we take $\vec{g}'(u)= \vec{f}_2(u')$. Observe that this part of $\vec{g}'$ is well-defined on account of the fact that for every $x,y\in h^-[w_2]$ such that $xR_{Ags}y$ we have that $\vec{f}_2(x)=\vec{f}_2(y)$. For each $u\in \{v'\}\cup h^+[v']$, we take $\overrightarrow{d}_{u}$ to be the unique choice profile in $\mathcal{C}$ such that $u\in \bigcap\limits_{\alpha\in Ags}\left(\overrightarrow{d}_{u}\right)_\alpha$,\footnote{If $u=v'$, then $\overrightarrow{d}_{u}=\overrightarrow{d}'$.} and we consider the elements in $A_{\overrightarrow{d}_{u}}$. It is clear that for each $u\in \{v'\}\cup h^+[v']$, there is a unique element in $A_{\overrightarrow{d}_{u}}$ that includes $u$. Let $N_{u}$ be the index of said element in the enumeration $\left\{A_{\overrightarrow{d}_{u}}^0,\dots, A_{\overrightarrow{d}_{u}}^{n-1}\right\}$. If we take $(N_{u},0,\dots,0)\in \prod\limits_{\alpha\in Ags}\{0,\dots,n-1\}$, then it is clear that $u\in A_{\overrightarrow{d}_{u}}^{\left(\sum_{\alpha\in Ags}(N_{u},0,\dots,0)_\alpha\right) \mod n}$, where $(N_{u},0,\dots,0)_\alpha$ stands for the $\alpha$-entry in the vector $(N_{u},0,\dots,0)$. Therefore, for each $u\in \{v'\}\cup h^+[v']$, we define $\vec{g}'(u)=(N_{u},0,\dots,0)$. Again, for $\vec{g}'$ to be well-defined it is crucial that $R_X$ is irreflexive, for this condition implies that $h^-[v']$, $\{v'\}$, and $h^+[v']$ are pairwise disjoint sets.  Observe that the definition of $\vec{g}'$ implies that $\left(\overrightarrow{d}', \vec{g}',v'\right)\in W^M$, and that the fact that $w'\approx_\alpha v'$ implies by definition of $\approx_\alpha^M$ that $\left(\overrightarrow{c}', \vec{f}', w'\right)\approx_\alpha^M \left(\overrightarrow{d}', \vec{g}', v'\right)$. By construction, we also have that $\left(\overrightarrow{d}', \vec{g}',v'\right)\in \overline{\left(\overrightarrow{c}_2, \vec{f}_2, w_2\right)}$, so we have finished. 

\item $(\mathtt{NoF})_K$ Let $\alpha \in Ags$ and $\left(\overrightarrow{c}, \vec{f}, w\right),\left(\overrightarrow{d}, \vec{g}, v\right)\in W^M$. Assume that $\left(\overrightarrow{c}, \vec{f}, w\right)\approx_\alpha^M\circ R_X^M \left(\overrightarrow{d}, \vec{g}, v\right) $. This means that $\left(\overrightarrow{c}_{w^{+1}}, \vec{f}, w^{+1}\right)\approx_\alpha^M\left(\overrightarrow{d}, \vec{g}, v\right)$. By definition of $\approx_\alpha^M$, this implies that $w^{+1}\approx_\alpha v$. Since $\mathcal{M}$ validates $(\mathtt{NoF})_K$, we get that this implies that $w\approx_\alpha v^{-1}$. Therefore, the definition of $\approx_\alpha^M$ yields that $\left(\overrightarrow{c}, \vec{f}, w\right)\approx_\alpha^M \left(\overrightarrow{d}_{v^{-1}}, \vec{g}, v^{-1}\right)$.

\end{enumerate}

\end{enumerate}

\end{proof}

\begin{proposition}\label{hetlab}
If $\mathcal{M}$ is a super-additive $n$-model for $\mathcal{L}_{\textsf{KX}}$ where the `next' and `last' relations are irreflexive, then $f:\mathcal{M}^M\to \mathcal{M}$ defined by $f\left(\left(\overrightarrow{c}, \vec{f}, w\right)\right)=w$ is a surjective bounded morphism. 
\end{proposition}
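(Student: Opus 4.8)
The plan is to check the three defining requirements of a surjective bounded morphism — surjectivity together with atomic harmony, the forth (homomorphism) clause, and the back (zig-zag) clause — for each of the relations $R\in\{R_X, R_Y, R_\square, R_\alpha, R_{Ags}, \approx_\alpha\}$, exactly as in the proof of Proposition \ref{amsterdam}. Surjectivity is immediate: given $w\in W$, let $\overrightarrow{c}_w$ be its choice profile and assign to each $v\in h[w]$ the index vector $(N_v,0,\dots,0)$ forcing $v$ into its correct $A$-class (the enumeration assignment used in the $(\mathtt{IA})_K$ part of Proposition \ref{putamadreojala}, now run over all of $h[w]$); this assignment is constant on $R_{Ags}$-classes because $R_{Ags}$-related points share an $A$-class and hence an index, so it yields a legitimate $\left(\overrightarrow{c}_w, \vec f, w\right)\in W^M$ with $f$-image $w$. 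Atomic harmony holds by the last clause of Definition \ref{matriz}. The forth clause is built into the definitions: each $R^M$ of Definition \ref{matriz} stipulates the corresponding base relation between world-components — $R_X^M$ demands $wR_Xw'$, $R_\square^M$ demands $wR_\square w'$, $R_\alpha^M$ demands $\left(\overrightarrow{c}'\right)_\alpha=\left(\overrightarrow{c}\right)_\alpha$ (hence $wR_\alpha w'$), $R_{Ags}^M$ demands $\overrightarrow{c}'=\overrightarrow{c}$ (hence $wR_{Ags}w'$), and $\approx_\alpha^M$ demands $w\approx_\alpha w'$ — so $xR^My$ yields $f(x)Rf(y)$ directly.

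The substantive work is the back clause. Suppose $f\left(\left(\overrightarrow{c}, \vec f, w\right)\right)=w$ and $wRz$ in $\mathcal M$; I must exhibit a witness $\left(\overrightarrow{d}, \vec g, z\right)\in W^M$ with $\left(\overrightarrow{c}, \vec f, w\right)R^M\left(\overrightarrow{d}, \vec g, z\right)$. For $R_X$ and $R_Y$ this is trivial: determinism and seriality force $z=w^{+1}$ (resp. $w^{-1}$), and the unique successor $\left(\overrightarrow{c}_{w^{+1}}, \vec f, w^{+1}\right)$ (resp. predecessor $\left(\overrightarrow{c}_{w^{-1}}, \vec f, w^{-1}\right)$) identified in the seriality argument of Proposition \ref{putamadreojala} keeps $\vec g=\vec f$ and is the required witness. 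For $\approx_\alpha$ the only clause is $w\approx_\alpha z$ itself, so any well-formed witness over $z$ built by the enumeration assignment above serves. For $R_\square$, $R_\alpha$, and $R_{Ags}$ I set $\overrightarrow{d}:=\overrightarrow{c}_z$ and build $\vec g$ on $h[z]$ by parts, reusing the construction from the $(\mathtt{IA})_K$ and $(\mathtt{Unif-H})_K$ proofs of Proposition \ref{putamadreojala}: on $h^-[z]$, Lemma \ref{func} supplies for each $x$ an $R_{Ags}$-partner $x'\in h^-[w]$ and I put $\vec g(x):=\vec f(x')$ (well-defined since $\vec f$ is constant on $R_{Ags}$-classes); on $\{z\}\cup h^+[z]$, I assign index vectors $(N_u,0,\dots,0)$ placing each $u$ into its correct $A$-class. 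Irreflexivity of $R_X$ is what makes $\vec g$ well-defined, since it guarantees $h^-[z]$, $\{z\}$, and $h^+[z]$ are pairwise disjoint, as recorded in Definition \ref{manana}.

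It then remains to verify that the witness satisfies the defining clauses of the relevant $R^M$. The ancestor-matching clause — that $\vec f(v)=\vec g(v')$ whenever $v\in h^-[w]$, $v'\in h^-[z]$, and $vR_{Ags}v'$ — holds by the definition of $\vec g$ on $h^-[z]$ together with euclideanity of $R_{Ags}$, exactly as in the $(\mathtt{IA})_K$ verification (this clause is present for $R_\square, R_\alpha, R_{Ags}$). The profile clause holds because $\overrightarrow{d}=\overrightarrow{c}_z$ and $wRz$: for $R_\alpha$ this gives $\left(\overrightarrow{d}\right)_\alpha=\left(\overrightarrow{c}\right)_\alpha$, for $R_{Ags}$ it gives $\overrightarrow{d}=\overrightarrow{c}$. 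Finally, the value-at-$z$ clause is arranged by hand: for $R_\alpha$ I additionally fix $\left(\vec g(z)\right)_\alpha:=\left(\vec f(w)\right)_\alpha$, and for $R_{Ags}$ I set $\vec g(z):=\vec f(w)$ outright — consistent with membership in $W^M$ because $wR_{Ags}z$ places $w$ and $z$ in a common $A$-class, hence at the same enumeration index; $R_\square$ imposes no value-at-$z$ constraint at all.

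The main obstacle is this last, choice-sensitive back clause: one must produce a single $\vec g$ that simultaneously renders $\left(\overrightarrow{d}, \vec g, z\right)$ a legitimate member of $W^M$ (respecting both the $A$-class membership constraint over all of $h[z]$ and the requirement that $\vec g$ be constant on $R_{Ags}$-classes) and meets the ancestor-matching and value-at-$z$ clauses of $R_\alpha^M$ or $R_{Ags}^M$. Since this is precisely the coordinate-function construction already carried out for $(\mathtt{IA})_K$, I would invoke it rather than repeat the bookkeeping; the only genuinely new observation is that $wR_{Ags}z$ forces $w$ and $z$ into a common $A$-class, which is what lets the $R_{Ags}$ witness inherit $\vec f(w)$ at $z$ without violating well-formedness.
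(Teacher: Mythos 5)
Your overall architecture matches the paper's proof: surjectivity and atomic harmony from the last clause of Definition \ref{matriz}, the forth clause read off directly from the definitions of the lifted relations, and the back clause handled case by case with witnesses of the form $\left(\overrightarrow{c}_v,\vec{g},v\right)$ where $\vec{g}$ is built by parts on $h^-[v]$, $\{v\}$, and $h^+[v]$ using Lemma \ref{func} and the $A$-class enumeration, with irreflexivity of $R_X$ guaranteeing well-definedness. But there is a genuine gap in the back clause for $R_\alpha$, and it is exactly the one point where the paper has to do something beyond reusing earlier bookkeeping. Membership of $\left(\overrightarrow{c}_v,\vec{g},v\right)$ in $W^M$ forces $\left(\sum_{\beta\in Ags}\left(\vec{g}(v)\right)_\beta\right)\bmod n=N_v$, the index of the $A$-class containing $v$, while $R_\alpha^M$ forces $\left(\vec{g}(v)\right)_\alpha=\left(\vec{f}(w)\right)_\alpha$. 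Your recipe --- assign $(N_v,0,\dots,0)$ at $v$ and then ``additionally fix'' the $\alpha$-coordinate to $\left(\vec{f}(w)\right)_\alpha$ --- breaks the first constraint whenever the overwritten coordinate does not happen to restore the residue. The justification you give (a common $A$-class, hence a common enumeration index) rests on $wR_{Ags}v$ and therefore covers only the $R_{Ags}$ case, where it is indeed all that is needed; for $R_\alpha$ you only have $wR_\alpha v$, which says nothing about the relative positions of $w$ and $v$ in their (possibly distinct) $A$-enumerations.

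The missing idea is the compensation the paper performs in its $R_\alpha$ case: choose $M$ with $\left(M+\left(\vec{f}(w)\right)_\alpha\right)\bmod n=N_v$ and distribute $M$ over the coordinates other than $\alpha$, so that the $\alpha$-entry is pinned to $\left(\vec{f}(w)\right)_\alpha$ while the total sum still lands in the correct residue class. (When $Ags=\{\alpha\}$ there are no other coordinates to adjust, but then $R_\alpha=R_{Ags}$ and your common-$A$-class observation applies.) Note also that you cannot simply ``invoke the $(\mathtt{IA})_K$ construction'' here: in that argument the world $v_*$ is \emph{chosen} from the $A$-class whose index matches a predetermined coordinate vector, whereas in the back clause the world $v$ is \emph{given} and the vector must be fitted to it --- the direction of the fit is reversed, which is precisely why the extra degree of freedom in the non-$\alpha$ coordinates is needed. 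Everything else in your proposal is correct and follows the paper's proof.
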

\begin{proof}
\begin{itemize}
\item It is clear that $f$ is surjective. The definition of $\mathcal{V}^M$ in the last item of Definition \ref{matriz} ensures that for every $\left(\overrightarrow{c}, \vec{f}, w\right)\in W^M$, $\left(\overrightarrow{c}, \vec{f}, w\right)$ and $f\left(\left(\overrightarrow{c}, \vec{f}, w\right)\right)$ satisfy the same propositional letters.
    \item Let $R\in\{R_X, R_Y, R_\square, R_\alpha, R_{Ags}, \approx_\alpha\}$, and let $R^M$ stand for the corresponding  relation on $\mathcal{M}^M$. Definition \ref{matriz} ensures that if $\left(\overrightarrow{c}, \vec{f}, w\right) R^M \left(\overrightarrow{d}, \vec{g}, v\right)$, then $w R v$.
    \item Assume that $f\left(\left(\overrightarrow{c}, \vec{f}, w\right)\right) R \ v$ for some $R\in\{R_X, R_Y, R_\square, R_\alpha, R_{Ags}, \approx_\alpha\}$ and $v\in W$. We have the following cases: \begin{itemize}
        \item (Case $R\in\{R_X,R_Y\}$) We have that $\left(\overrightarrow{c}_{v}, \vec{f}, v\right)$ is such that \\
        $\left(\overrightarrow{c}, \vec{f}, w\right)R^M\left(\overrightarrow{c}_{v}, \vec{f}, v\right)$.
        \item (Case $R=R_\square$). Assume that $f\left(\left(\overrightarrow{c}, \vec{f}, w\right)\right) R_\square v$. This implies that $wR_\square v$. We have that $\left(\overrightarrow{c}_{v}, \vec{g}, v\right)$ is such that
        $\left(\overrightarrow{c}, \vec{f}, w\right)R_\square^M\left(\overrightarrow{c}_{v}, \vec{g}, v\right)$, where we define $\vec{g}$ by parts in the following way. For every $u\in h^-[v]$, we know by Lemma \ref{func} that there exists $u'\in h^-[w]$ such that $uR_{Ags}u'$. Therefore, for every $u\in h^-[v]$, we take $\vec{g}(u)= \vec{f}(u')$. Observe that this part of $\vec{g}$ is well-defined on account of the fact that for every $x,y\in h^-[w]$ such that $xR_{Ags}y$ we have that $\vec{f}(x)=\vec{f}(y)$. For each $u\in \{v\}\cup h^+[v]$, we take $\overrightarrow{d}_{u}$ to be the unique choice profile in $\mathcal{C}$ such that $u\in \bigcap\limits_{\alpha\in Ags}\left(\overrightarrow{d}_{u}\right)_\alpha$,\footnote{If $u=v$, then $\overrightarrow{d}_{u}=\overrightarrow{c}_v$.} and we consider the elements in $A_{\overrightarrow{d}_{u}}$. It is clear that for each $u\in \{v\}\cup h^+[v]$, there is a unique element in $A_{\overrightarrow{d}_{u}}$ that includes $u$. Let $N_{u}$ be the index of said element in the enumeration $\left\{A_{\overrightarrow{d}_{u}}^0,\dots, A_{\overrightarrow{d}_{u}}^{n-1}\right\}$. If we take $(N_{u},0,\dots,0)\in \prod\limits_{\alpha\in Ags}\{0,\dots,n-1\}$, then it is clear that $u\in A_{\overrightarrow{d}_{u}}^{\left(\sum_{\alpha\in Ags}(N_{u},0,\dots,0)_\alpha\right) \mod n}$, where $(N_{u},0,\dots,0)_\alpha$ stands for the $\alpha$-entry in the vector $(N_{u},0,\dots,0)$. Therefore, for each $u\in \{v\}\cup h^+[v]$, we define $\vec{g}(u)=(N_{u},0,\dots,0)$. Again, for $\vec{g}$ to be well-defined it is crucial that $R_X$ is irreflexive, for this condition implies that $h^-[v]$, $\{v\}$, and $h^+[v]$ are pairwise disjoint sets.  Observe that the definition of $\vec{g}$ implies that $\left(\overrightarrow{c}_v, \vec{g},v\right)\in W^M$, so that the fact that $wR_\square v$ implies by definition that $\left(\overrightarrow{c}, \vec{f}, w\right)R_\square^M \left(\overrightarrow{c}_v, \vec{g}, v\right)$.
        
        \item (Case $R=R_\alpha$). Fix $\alpha\in Ags$ and assume that $f\left(\left(\overrightarrow{c}, \vec{f}, w\right)\right) R_\alpha v$. This implies that $wR_\alpha v$, so that $\left(\overrightarrow{c}\right)_\alpha=\left(\overrightarrow{c}_{v}\right)_\alpha.$ We have that $\left(\overrightarrow{c}_{v}, \vec{g}, v\right)$ is such that
        $\left(\overrightarrow{c}, \vec{f}, w\right)R_\alpha^M\left(\overrightarrow{c}_{v}, \vec{g}, v\right)$, where we define $\vec{g}$ by parts in the following way. For every $u\in h^-[v]$, we define $\vec{g}$ exactly as in the above item. For $v$, we have to be careful. Consider the elements in $A_{\overrightarrow{c}_{v}}$. It is clear that there is a unique element in $A_{\overrightarrow{c}_{v}}$ that includes $v$. Let $N_{v}$ be the index of said element in the enumeration $\left\{A_{\overrightarrow{c}_{v}}^0,\dots, A_{\overrightarrow{c}_{v}}^{n-1}\right\}$. Let $M\in \mathds{N}\cup \{0\}$ be such that $\left(M +\left(\vec{f}(w)\right)_\alpha\right)\mod n =N_v$, and let  $\alpha_1,\dots,\alpha_m$ be an enumeration of $Ags$ such that $\alpha= \alpha_j$,  $\left(m_{\alpha_1},\dots,m_{\alpha_m}\right)\in \prod\limits_{\alpha\in Ags}\{0,\dots,n-1\}$, and such that $m_{\alpha_j}= \left(\vec{f}(w)\right)_\alpha$ and $\sum_{i\neq j}m_{\alpha_i}=M$.  Then it is clear that $\left(\sum_{\alpha\in Ags}\left (m_{\alpha_1},\dots,m_{\alpha_m}\right)_\alpha\right)\mod n=N_v$. 
        
        Therefore, we set $\vec{g}(v)=\left (m_{\alpha_1},\dots,m_{\alpha_m}\right)$.

       For each $u\in h^+[v]$, we take $\overrightarrow{d}_{u}$ to be the unique choice profile in $\mathcal{C}$ such that $u\in \bigcap\limits_{\alpha\in Ags}\left(\overrightarrow{d}_{u}\right)_\alpha$ and we consider the elements in $A_{\overrightarrow{d}_{u}}$. It is clear that for each $u\in h^+[v]$, there is a unique element in $A_{\overrightarrow{d}_{u}}$ that includes $u$. Let $N_{u}$ be the index of said element in the enumeration $\left\{A_{\overrightarrow{d}_{u}}^0,\dots, A_{\overrightarrow{d}_{u}}^{n-1}\right\}$. If we take $(N_{u},0,\dots,0)\in \prod\limits_{\alpha\in Ags}\{0,\dots,n-1\}$, then it is clear that $u\in A_{\overrightarrow{d}_{u}}^{\left(\sum_{\alpha\in Ags}(N_{u},0,\dots,0)_\alpha\right) \mod n}$, where $(N_{u},0,\dots,0)_\alpha$ stands for the $\alpha$-entry in the vector $(N_{u},0,\dots,0)$. Therefore, for each $u\in h^+[v]$, we define $\vec{g}(u)=(N_{u},0,\dots,0)$. Once again, $\vec{g}$ is well-defined because $R_X$ is irreflexive, for this condition implies that $h^-[v]$, $\{v\}$, and $h^+[v]$ are pairwise disjoint sets.  Observe that the definition of $\vec{g}$ implies that $\left(\overrightarrow{c}_v, \vec{g},v\right)\in W^M$, so that the facts that $wR_\alpha v$ and $\left(\overrightarrow{c}\right)_\alpha=\left(\overrightarrow{c}_{v}\right)_\alpha$  imply that $\left(\overrightarrow{c}, \vec{f}, w\right)R_\alpha^M \left(\overrightarrow{c}_v, \vec{g}, v\right)$.
    \item (Case $R=R_{Ags}$) In this case, assume that $f\left(\left(\overrightarrow{c}, \vec{f}, w\right)\right) R_{Ags} v$. This implies that $wR_{Ags}v$, so that $\overrightarrow{c}=\overrightarrow{c}_{v}.$ We have that $\left(\overrightarrow{c}_{v}, \vec{g}, v\right)$ is such that
        $\left(\overrightarrow{c}, \vec{f}, w\right)R_{Ags}^M\left(\overrightarrow{c}_{v}, \vec{g}, v\right)$, where we define $\vec{g}$ by parts in the following way. For every $u\in h^-[v]\cup h^+[v]$, we define $\vec{g}$ exactly as in the above items. For $v$, we define $\vec{g}(v)=\vec{f}(w)$. Observe that the definition of $\vec{g}$ implies that $\left(\overrightarrow{c}, \vec{f}, w\right)R_{Ags}^M \left(\overrightarrow{c}_v, \vec{g}, v\right)$ in an analogous way to the above items. 
        \item (Case $R=\approx_\alpha$) We have that $\left(\overrightarrow{c}, \vec{f}, w\right)\approx_\alpha^M\left(\overrightarrow{c}_{v}, \vec{g}, v\right)$ for any $\vec{g}$ that fulfills the requirements in Definition \ref{matriz} such that $\left(\overrightarrow{c}_{v}, \vec{g}, v\right)\in W^M$. One such $\vec{g}$ exists in virtue of an argument similar to the ones we used above to build $\vec{g}$ over $h^+[v]$. 

    \end{itemize}
\end{itemize}
 Therefore, $f:\mathcal{M}^M\to \mathcal{M}$ is a surjective bounded morphism. 
\end{proof}

\begin{proposition}\label{compumod}
The proof system $\Lambda_n$ is complete with respect to the class of Kripke-exstit $n$-models.
\end{proposition}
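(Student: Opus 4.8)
The plan is to thread together the three reductions exactly as was done for the proof of Proposition~\ref{compunr}, now invoking the matrix construction $\mathcal{M}^M$ of Definition~\ref{matriz} together with its two supporting propositions. Recall that completeness here amounts to showing that every $\Lambda_n$-consistent formula is satisfiable in some Kripke-exstit $n$-model, so the whole argument is a satisfiability-transfer along a bounded morphism rather than anything that requires touching the syntax again.

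First I would fix a $\Lambda_n$-consistent formula $\phi$ of $\mathcal{L}_{\textsf{KX}}$. By Proposition~\ref{compunr}, there exist a Kripke-exstit \emph{super-additive} $n$-model $\mathcal{M}$ in which the `next' and `last' relations are irreflexive, together with a world $w$ in its domain, such that $\mathcal{M}, w \models \phi$. Next I would pass to $\mathcal{M}^M$ as in Definition~\ref{matriz}: by Proposition~\ref{hetlab} the map $f:\mathcal{M}^M\to\mathcal{M}$ given by $f\!\left(\left(\overrightarrow{c},\vec{f},w\right)\right)=w$ is a surjective bounded morphism. Surjectivity provides some $\left(\overrightarrow{c},\vec{f},w\right)\in W^M$ lying over $w$, and the invariance of modal satisfaction under bounded morphisms then yields $\mathcal{M}^M,\left(\overrightarrow{c},\vec{f},w\right)\models\phi$. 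Finally, Proposition~\ref{putamadreojala} guarantees that $\mathcal{M}^M$ is a genuine (actual) Kripke-exstit $n$-model---one in which $R_{Ags}^M$ equals $\bigcap_{\alpha\in Ags}R_\alpha^M$ rather than merely being contained in it---so $\phi$ is satisfiable in the target class, as required.

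The present step is pure bookkeeping; the substance has already been discharged in Propositions~\ref{putamadreojala} and~\ref{hetlab}. The real obstacle, which those propositions handle, is designing the coordinate-labelling functions $\vec{f}$ so that distinct grand-coalition cells are forced to carry distinct labels, thereby turning the super-additive inclusion $R_{Ags}\subseteq\bigcap_{\alpha}R_\alpha$ into the exact identity $R_{Ags}^M=\bigcap_{\alpha}R_\alpha^M$, while still validating $(\mathtt{IA})_K$ and preserving the back condition of the bounded morphism $f$. This is also where the irreflexivity of $R_X$ secured in the previous step becomes essential, since it makes $h^-[v]$, $\{v\}$, and $h^+[v]$ pairwise disjoint and hence lets $\vec{f}$ be defined piecewise without conflict. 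Granting those results, assembling the statement requires nothing more than quoting them in the order above.
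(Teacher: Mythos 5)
Your proposal is correct and follows exactly the paper's own proof: fix a $\Lambda_n$-consistent $\phi$, obtain an irreflexive super-additive $n$-model satisfying it via Proposition~\ref{compunr}, and transfer satisfaction along the surjective bounded morphism $f:\mathcal{M}^M\to\mathcal{M}$ of Proposition~\ref{hetlab} into the actual $n$-model $\mathcal{M}^M$ guaranteed by Proposition~\ref{putamadreojala}. The surrounding commentary on where the substantive work lies (the labelling functions $\vec{f}$ and the role of irreflexivity) is accurate but not part of the argument itself.
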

\begin{proof}
Let $\phi$ be a $\Lambda$-consistent formula of $\mathcal{L}_{\textsf{KX}}$. By Proposition \ref{compunr}, we know that there exists a  Kripke-exstit super-additive  $n$-model where the `next' and `last' relations are irreflexive $\mathcal{M}$ and a world $w$ in its domain such that $\mathcal{M},w\models \phi$. By Proposition \ref{hetlab} and the invariance of modal satisfaction under bounded morphisms, we then have that $\mathcal{M}^M$ --as defined in Definition \ref{matriz}-- is a such that  $\mathcal{M}^M,(\overrightarrow{c}_w,\vec{f},w) \models \phi$, where by Proposition \ref{putamadreojala} we know that $\mathcal{M}^M$ is a Kripke-exstit $n$-model. 
\end{proof}



\end{subappendices}

\clearpage

\end{document}